\DeclareMathOperator*{\argmin}{arg\,min}
\definecolor{colorhkust}{RGB}{20,43,140}
\definecolor{colortsinghua}{RGB}{116,52,129}
\definecolor{color1}{RGB}{128,0,0}
\newtheorem{lemma}{Lemma}
\newtheorem{theorem}{Theorem}
\newtheorem{definition}{Definition}
\newtheorem{remark}{Remark}
\newcommand{ \brac }[1]{\left[ #1 \right]}
\newcommand{\norm}[2]{\left\| #1 \right\|_{#2}}
\newcommand{\normn}[2]{\| #1 \|_{#2}}
\newcommand{\sgn}{\mathrm{sgn}}
\newcommand{\dist}{\mathrm{dist}}
\newcommand{\abs}[1]{\left| #1 \right|}
\newcommand{\absn}[1]{| #1 |}
\newcommand{ \paren }[1]{ \left( #1 \right) }
\newcommand{ \Brac }[1]{\left\lbrace #1 \right\rbrace}
\newcommand{\wt}{\widetilde}
\newcommand{\co}{\mathcal{O}}
\begin{document}

\title{Blind Over-the-Air Computation and Data Fusion
       via Provable Wirtinger Flow}

   \author{Jialin Dong, \textit{Student Member}, \textit{IEEE}, Yuanming~Shi, \textit{Member}, \textit{IEEE},
     and Zhi Ding,  \textit{Fellow}, \textit{IEEE} 
        \thanks{J. Dong and Y. Shi are with the School of Information Science and Technology, ShanghaiTech University, Shanghai 201210, China (e-mail: \{dongjl, shiym\}@shanghaitech.edu.cn). }
\thanks{Z. Ding is with the Department of Electrical and Computer Engineering, University of California at Davis, Davis, CA 95616, USA (e-mail: zding@ucdavis.edu).}
        
    }
        
        \maketitle

\begin{abstract}
Over-the-air computation (AirComp) 
shows great promise to support fast data fusion 
in Internet-of-Things (IoT) networks. AirComp 
typically computes desired functions of distributed sensing data 
by exploiting superposed data transmission in
multiple access channels. To overcome its reliance on
channel station information (CSI), 
this work proposes a novel \emph{blind over-the-air computation} 
(BlairComp) without requiring CSI access, particularly 
for low complexity and low latency IoT networks. 
To solve the resulting non-convex optimization problem
without the initialization dependency exhibited by 
the solutions of 
a number of recently proposed efficient algorithms,
we develop a Wirtinger flow solution to the
BlairComp problem based on {\emph{random initialization}}.
To analyze the resulting efficiency, we prove its
statistical optimality and global convergence guarantee. 
Specifically, in the first stage of the algorithm, 
the iteration of randomly
initialized Wirtinger flow given sufficient data samples
can enter a local region
that enjoys strong convexity and strong
smoothness within a few iterations.  
We also prove the estimation error of BlairComp in the local region to be sufficiently small. 
We show that, at the second stage of the algorithm, its
estimation error decays exponentially at a linear convergence rate.

\begin{IEEEkeywords}
Over-the-air computation, data fusion, bilinear measurements, Wirtinger flow, regularization-free, random initialization.    
\end{IEEEkeywords}
\end{abstract}
\section{Introduction}
The broad range of Internet-of-Things (IoT) applications continues to 
contribute substantially to 
the economic development and the improvement of 
our lives \cite{al2015internet}. In particular, the wirelessly
networked sensors are growing at an unprecedented rate, making 
data aggregation highly critical for IoT services \cite{andrews2014will}. For large scale wireless networking of
sensor nodes,  orthogonal multiple access protocols are highly
impractical because of their low spectrum utilization efficiency 
for IoT and the excessive network latency \cite{huang2018}. 
In response, the concept of \emph{over-the-air computation} (AirComp)
has recently been considered for computing a class of 
{nomographic functions}, such as arithmetic mean, 
weighted sum, geometric mean and Euclidean norm of distributed 
sensor 
data via concurrent, instead of the 
sequential, node transmissions \cite{mario2015}. 
AirComp exploits the natural superposition of co-channel
transmissions from multiple data source nodes 
\cite{nazer2011compute}.

There have already been a number of published works 
related to AirComp.  Among them, one research thread 
takes on the information theoretic view and focuses on achievable computation rate under structured coding schemes.
Specifically, in the seminal work of \cite{nazercomputation}, 
linear source coding was designed to reliably compute a function of 
distributed sensor data transmitted over the multiple-access 
channels (MACs).  
Lattice codes were adopted in \cite{nazercomputation, soundararajan2012communicating} to compute the sum of source signals over MACs efficiently. 
Leveraging lattice coding, a compute-and-forward relaying scheme \cite{nazer2011compute} 
was proposed for relay assisted networks. 
On the other hand, a different line of studies \cite{xiao2008linear,wang2011distortion} investigates the error of distributed estimation in wireless sensor networks. 
In particular, linear decentralized estimation 
was investigated in \cite{xiao2008linear} for coherent multiple access channels. 
Power control was investigated in \cite{wang2011distortion} to 
optimize the estimation distortion. 
It was shown in \cite{goldenbaum2013harnessing} that pre- and post-processing functions
enable the optimization of computation performance
by harnessing the interference for function computations.
Another more recent line of studies focused on designing transmitter and receiver matrices in order to minimize the distortion error when computing desired functions. 
Among others,  MIMO-AirComp equalization and channel feedback
techniques for spatially multiplexing multi-function computation have been proposed \cite{huang2018}.  Another work developed a novel transmitter design
at the multiple antennas IoT devices with zero-forcing beamforming \cite{chen2018over}.

However, the main limitation of current AirComp is the dependence on channel-state-information (CSI), which leads to high latency and significant overhead in the massive Internet-of-Things networks with a large amount of devices. Even though the work \cite{goldenbaum2014channel} has proposed a type of CSI at sensor nodes, called {No CSI},  the receiver still needs to obtain the statistical channel knowledge. Recently, blind demixing has become a powerful tool to elude channel-state-information (i.e., without channel estimation at both transmitters and receivers) thereby enabling low-latency communications \cite{ling2017regularized,ling2015blind,dong}. Specifically, in blind demixing, a sequence of source signals can be recovered from the sum of bilinear measurements without the knowledge of  channel information \cite{dong2018}. Inspired by the recent progress of blind demixing, in this paper, we shall propose a novel \emph{blind over-the-air computation} (BlairComp)  scheme for low-latency data aggregation, thereby computing the desired function (e.g., arithmetic mean) of sensing data vectors without the prior knowledge of channel information. However, the BlairComp problem turns out to be a highly intractable nonconvex optimization problem due to the bilinear signal model. 

There is a growing body of recent works to tame the nonconvexity in solving the high-dimensional bilinear systems. Specifically, semidefinite programming was developed in \cite{ling2015blind} to solve the blind demixing problem by lifting the bilinear model into the matrix space. However, it is computationally prohibitive for solving large-scale problem due to the high computation and storage cost. To address this issue, the nonconvex algorithm, e.g., regularized gradient descent with spectral initialization \cite{ling2017regularized}, was further developed to optimize the variables in the natural vector space. Nevertheless, the theoretical guarantees for the regularized gradient \cite{ling2017regularized} provide pessimistic convergence rate and require carefully-designed initialization. The Riemannian trust-region optimization algorithm without regularization was further proposed in  \cite{dong} to improve the convergence rate. However,  the second-order algorithm brings unique challenges in providing statistical guarantees. Recently, theoretical guarantees concerning
regularization-free Wirtinger flow with spectral initialization for blind demixing  was provided in \cite{dong2018}. However, this regularization-free method still calls for spectral initialization. To find a natural implementation
for the practitioners that works equally well as
spectral initialization, in this paper, we shall propose to solve the
BlairComp problem via {\it{randomly initialized Wirtinger flow}}
with provable optimality guarantees. 

Based on the random initialization strategy, a line of research studies the benign global landscapes for the high-dimensional nonconvex estimation problems, followed by designing generic saddle-point escaping algorithms, e.g., noisy stochastic gradient descent \cite{ge2015escaping}, trust-region method \cite{sun2016geometric}, perturbed gradient descent \cite{jin2017escape}.
With sufficient samples, these algorithms are guaranteed to converge globally for phase retrieval \cite{sun2016geometric}, matrix recovery \cite{bhojanapalli2016global}, matrix sensing \cite{ge2017no}, robust PCA \cite{ge2017no} and shallow neural networks \cite{soltanolkotabi2018theoretical}, where all local minima are provably as good as global and all the saddle points are strict. However,
the theoretical results developed in \cite{ge2015escaping,sun2016geometric,jin2017escape, bhojanapalli2016global, ge2017no, soltanolkotabi2018theoretical} are fairly general and may yield pessimistic convergence rate guarantees. Moreover, these saddle-point escaping algorithms are more complicated for implementation than the natural vanilla gradient descent or Wirtinger flow. To advance the theoretical analysis for gradient descent with random initialization, the fast global convergence guarantee concerning randomly initialized gradient descent for phase retrieval has been recently provided in \cite{chen2018}.

In this paper, our main contribution is to provide the global convergence guarantee concerning Wirtinger flow with random initialization for solving the nonconvex BlairComp problem. 
It turns out that, for BlairComp, the procedure of Wirtinger flow with random initialization can be separated into two stages:
\begin{itemize}
\item Stage I: the estimation error is nearly stable, which takes only a few iterations,
\item Stage II: 
the estimation error decays exponentially at a linear convergence rate.
\end{itemize}
In addition, we identify the exponential growth of the magnitude ratios of the signals to perpendicular components, which explains why Stage I lasts only for a few iterations. 

\subsubsection*{Notations}Throughout this paper, $f({n}) =  O(g(n))$ or $f(n)\lesssim g(n)$ denotes that there exists a constant $c>0$ such that $|f(n)|\leq c|g(n)|$ whereas $f(n)\gtrsim g(n)$ means that there exists a constant $c>0$ such that $|f(n)|\geq c|g(n)|$. $f(n)\gg  g(n)$ denotes that there exists some sufficiently large constant $c>0$ such that $|f(n)|\geq c|g(n)|$. In addition, the notation $f(n)\asymp g(n)$ means that there exists constants $c_1, c_2>0$ such that $c_1|g(n)|\leq |f(n)|\leq c_2|g(n)|$.  Let superscripts $(\cdot)^{\top}$ and
$(\cdot)^{\mathsf{H}}$ denote the transpose and conjugate transpose of a matrix/vector,  
respectively. Let the superscript $(\cdot)^*$ denote the conjugate transpose of a complex number.

\section{Problem Formulation}\label{form}
Blind over-the-air computation (BlairComp) aims to facilitate
low-latency data aggregation in IoT networks without a priori knowledge of CSI. 
This is achieved by computing the desired functions of the distributed sensing data 
based on the natural signal superposition of transmission over 
multi-access channels.

\subsection{Blind Over-the-Air Computation}
We consider a wireless sensor network consisting of $s$ active sensor nodes and a single fusion center. 
Let $\bm{d}_i=[d_{i1}\;\cdots,\; d_{i,N}]^\top \in\mathbb{C}^N$ 
denote the sensor data vector 
collected at the $i$-th node. 
The fusion center, through AirComp, aims to compute nomographic functions of 
distributed data that can be decomposed as
\cite{goldenbaum2013harnessing}
\setlength\arraycolsep{2pt} 
\begin{align}\label{eq:nom_fun}
\mathcal{H}_\ell(d_{1\ell},\cdots,d_{s\ell}) = \mathcal{F}_\ell\big(\sum\nolimits_{i=1}^s\mathcal{G}_{i\ell}(d_{i\ell})\big),\;\ell =1,\cdots,N.
\end{align}
Function $\mathcal{G}_{i\ell}(\cdot):\mathbb{C}\rightarrow \mathbb{C}$ denotes 
the pre-processing function by the sensor nodes and
 $\mathcal{F}_{\ell}(\cdot):\mathbb{C}\rightarrow \mathbb{C}$ denotes 
the post-processing function at the fusion center. Typical 
nomographic functions by AirComp include the arithmetic mean, 
weighted sum, geometric mean, polynomial, Euclidean norm \cite{goldenbaum2013harnessing}.

In this work, we focus on a specific nomographic function
\begin{equation}
\bar{\bm{\theta}}= \sum_{i=1}^s\bar{\bm{x}}_i,
\end{equation} 
where  $\bar{\bm{x}}_i= [\mathcal{G}_{i1}(d_{i1}),\cdots,\mathcal{G}_{iN}(d_{iN})]^{\top}\in\mathbb{
C}^N$ is the preprocessed data vector transmitted by the $i$-th node. 
Over $m$ channel access opportunities (e.g., time slots), 
the received signals at fusion center in the frequency domain 
can be written as \cite{ling2017regularized,dong}
\begin{align}\label{eq:bilinear}
y_j  = \sum_{i=1}^{ s}\bm{b}_j^{\mathsf{H}}
\bar{\bm{h}}_i\bar{\bm{x}}_i^{\mathsf{H}}\bm{a}_{ij}+e_j,~ 1\leq j \leq m,
\end{align}
where 
 $\bm{b}_j\in\mathbb{C}^K, \; j=1,\;\cdots,\; m$ are the access vectors,
 $\bm{h}_i\in\mathbb{C}^K$ the CSI vectors, 
and ${e}_j$ is an independent circularly symmetric complex
Gaussian measurement noise.
%

To compute the desired functions via BlairComp without knowledge
of $\{\bar{\bm{h}}_i\}$, we can consider a precoding scheme 
with randomly selected known vectors $\bm{a}_{ij}\in\mathbb{C}^N$ follows i.i.d. circularly symmetric complex normal distribution 
$ \mathcal{N}(\bm{0},0.5\bm{I}_N)+i\mathcal{N}(\bm{0},
0.5\bm{I}_N)$ for $1\leq
i\leq s,1\leq j\leq m$. 
Furthermore, the first $K$ columns of the unitary discrete Fourier transform (DFT) matrix $\bm{F}$ form the known matrix $\bm{B}:=[\bm{b}_1,\cdots,\bm{b}_m]^{\mathsf{H}}\in \mathbb{C}^{m\times K}$  \cite{ling2017regularized}. The target of BlairComp is to compute the desired function vector $\bar{\bm{\theta}}$
via concurrent transmissions without channel information, thereby providing
low-latency data aggregation in the IoT networks.

\subsection{Multi-Dimensional Nonconvex Estimation}
One way to estimate the result vector $\bar{\bm{\theta}}$
from the received signals $y_i$ in (\ref{eq:bilinear}) is to
use  $\bm{\theta}=\sum_{i=1}^{s}\omega_i\bm{x}_i$ with $\omega_i\in\mathbb{C}$ as the ambiguity alignment parameter,
for which
one could first solve the bilinear optimization problem:
\begin{eqnarray}
\label{lea_squ_sdp}
\mathscr{P}: \mathop{\textrm{minimize }}_{\{\bm{h}_i\},\{\bm{x}_i\}}f(\bm{h},\bm{x}):=\sum_{j=1}^m\Big|\sum_{i=1}^{ s}\bm{b}_j^{\mathsf{H}}\bm{h}_i\bm{x}_i^{\mathsf{H}}\bm{a}_{ij}-{y}_j\Big|^2,
\end{eqnarray}
which is highly nonconvex. 
To measure the computation accuracy for BlairComp problem, 
we define the following metric 
\begin{align}\label{eq:relativeerror}
\mathrm{error}(\bm{\theta},\bar{\bm{\theta}}) = \frac{\norm{\sum_{i =1}^s\omega_i\bm{x}_i - \sum_{i =1}^s\bar{\bm{x}}_i}{2}}{\norm{ \sum_{i =1}^s\bar{\bm{x}}_i}{2}}.
\end{align}
Note that $\{\omega_i,\, i=1, \cdots, s\}$ are ambiguity alignment parameters 
such that \begin{align}\label{align}
\omega_i = \argmin\limits_{\omega_i\in\mathbb{ C}}\left({{\|(\omega_i^*)^{-1}\bm{h}_i-\bar{\bm{h}}_i \|_2^2+\|\omega_i\bm{x}_i - \bar{\bm{x}}_i\|_2^2 }}\right). 
\end{align}
To estimate $\omega_i$, one reference symbol in $\bm{x}_i$ is
needed.

In this paper, we shall propose to solve the high-dimensional BlairComp problem $\mathscr{P}$ via Wirtinger flow with 
{\it{random initialization}}. Our main contribution is to provide the 
statistical optimality and convergence guarantee for the randomly 
initialized Wirtinger flow algorithm by
exploiting the benign geometry of the high-dimensional BlairComp 
problem.

\section{Main Approach}
In this section, we first propose an algorithm based on
randomly initialized Wirtinger flow 
to solve the BlairComp problem $\mathscr{P}$. We shall present
a statistical analysis to demonstrate the optimality of this algorithm 
for solving the high-dimensional nonconvex estimation problem.
 
\subsection{Randomly Initialized Wirtinger Flow Algorithm }
Wirtinger flow with random initialization is an iterative algorithm 
with a simple gradient descent update procedure without 
regularization. Specifically, the gradient step  of  Wirtinger 
flow is represented by the notion of Wirtinger derivatives \cite{candes2015phase}, i.e., the derivatives of real valued 
functions over complex variables.

To simplify the notations, we denote $f(\bm{z}) := f(\bm{h},\bm{x})$, 
where
 \begin{align}\label{eq:z}
 \bm{z} = \left[\begin{matrix}
 \bm{z}_1\\\cdots\\\bm{z}_s
 \end{matrix}
 \right]\in \mathbb{ C}^{s(N+K)}~{\textrm{with}}~\bm{z}_i = \left[\begin{matrix}
 \bm{h}_i\\\bm{x}_i
 \end{matrix}\right]\in \mathbb{ C}^{N+K}.
 \end{align}For each $i = 1,\cdots,s$, $\nabla_{\bm{h}_i}f(\bm{z})$ and $\nabla_{\bm{x}_i}f(\bm{z})$ denote the Wirtinger gradient of $f(\bm{z})$ with respect to $\bm{h}_i$ and $\bm{x}_i$ respectively as:
 \begin{subequations}\label{gradient}
        \begin{align}
        \nabla_{\bm{h}_i}f(\bm{z}) &= \sum_{j=1}^m\bigg(\sum_{k=1}^{ s}\bm{b}_j^{\mathsf{H}}\bm{h}_k\bm{x}_k^{\mathsf{H}}\bm{a}_{kj}-{y}_j\bigg)\bm{b}_j\bm{a}_{ij}^{\mathsf{H}}\bm{x}_i,\label{g1}\\
        \nabla_{\bm{x}_i}f(\bm{z}) &= \sum_{j=1}^m{\bigg(\sum_{k=1}^{ s}\bm{h}_k^{\mathsf{H}}\bm{b}_j\bm{a}_{kj}^{\mathsf{H}}\bm{x}_k-{y}_j^*\bigg)}\bm{a}_{ij}\bm{b}_j^{\mathsf{H}}\bm{h}_i.\label{g2}
        \end{align}
 \end{subequations}
 In light of the Wirtinger gradient (\ref{gradient}), the update rule of Wirtinger flow uses a stepsize $\eta>0$ via
 \begin{align}\label{eq:update role}
 \left[\bm{h}_i^{t+1}\atop\bm{x}_i^{t+1} \right]
        = \left[\bm{h}_i^{t}\atop\bm{x}_i^{t}\right]-\eta \left[\frac{1}{\|\bm{x}^t_i\|_2^2}\nabla_{\bm{h}_i}f(\bm{z}^t)\atop\frac{1}{\|\bm{h}_i^t\|_2^2}\nabla_{\bm{x}_i}f(\bm{z}^t) \right],\, i = 1,\cdots,s.
 \end{align}

Before proceed to theoretical analysis, we first present an example to
illustrate the practical efficiency of Wirtinger flow with random initialization for solving problem $\mathscr{P}$ (\ref{lea_squ_sdp}). The ground truth 
values $\{\bar{\bm{h}}_i, \bar{\bm{x}}_i\}$ 
and initial points $\{\bm{h}_i^{0}, \bm{x}_i^{0}\}$
are randomly generated according to
\begin{align}
\bar{\bm{h}}_i\sim\mathcal{N}(\bm{0},K^{-1}\bm{I}_K),~\bar{\bm{x}}_i\sim\mathcal{N}(\bm{0},N^{-1}\bm{I}_N),\\
\bm{h}_i^{0}\sim\mathcal{N}(\bm{0},K^{-1}\bm{I}_K),~\bm{x}_i^0\sim\mathcal{N}(\bm{0},N^{-1}\bm{I}_N),\label{eq:initial}
\end{align}
for $i = 1,\cdots, s$.  In all our simulations, we set $K = N$ and 
normalize $\normn{\bar{\bm{h}}_i}{2} = \normn{\bar{\bm{x}}_i}{2} = 1$ 
for $i = 1,\cdots,s$.  Specifically, for each value of 
$K\in\{20,80,160,200\}$, $s = 10$ and $m = 50K$, 
the design vectors $\bm{a}_{ij}$'s and $\bm{b}_j$'s for 
each $1\leq i\leq s, 1\leq j\leq m$, are generated according to the descriptions in Section \ref{form}. With the chosen step size  
$\eta = 0.1$ in all settings, Fig. \ref{fig1} shows the 
relative error, i.e.,
$
\mathrm{error}(\bm{\theta}^t,\bar{\bm{\theta}})
$ (\ref{eq:relativeerror}), versus the iteration count.
We observe the convergence of Wirtinger flow with random initialization 
exhibits two stages:  Stage I:  within dozens of iterations, 
the relative error remains nearly flat, Stage II:  
the relative error shows exponential decay
despite the different problem sizes. 

\begin{figure}[htb]
        \subfigure[]{\includegraphics[width=1.72in]{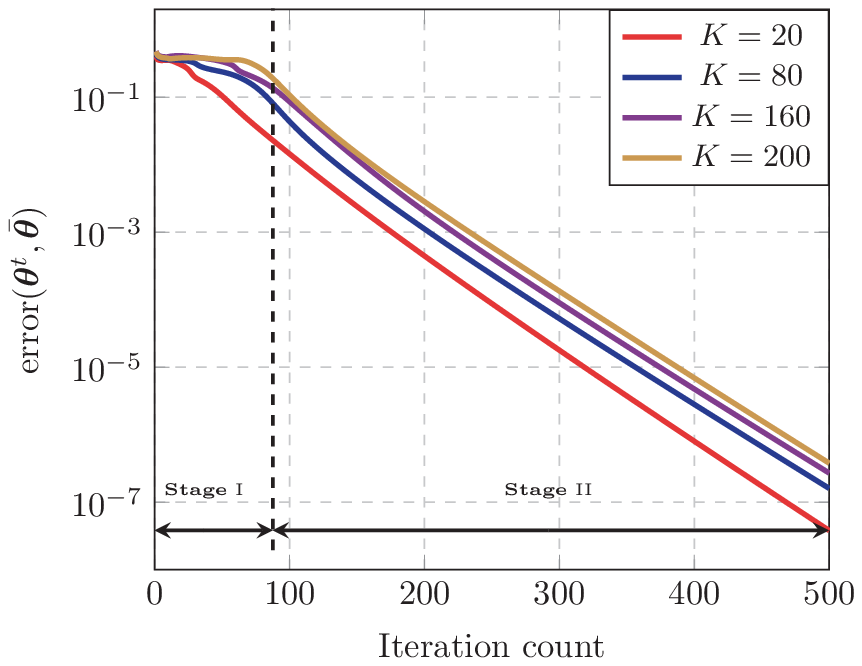}
                \label{fig1}}
        \hspace*{-0.4cm}
        \subfigure[]{\includegraphics[width=1.72in]{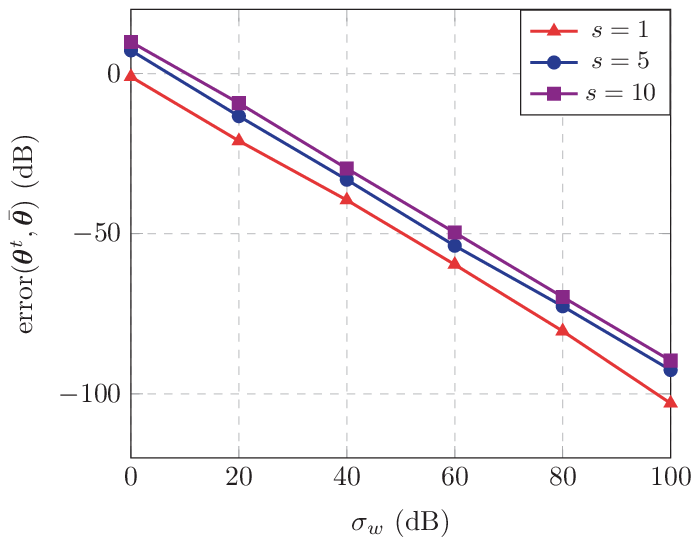}
                \label{fig4}}
        \caption{(a) Linear convergence rate of randomly initialized Wirtinger flow, (b) relative error versus $\sigma_{w}$ (dB). }  
\end{figure}   

In practical scenario, the estimation error of ambiguity alignment parameters would have influences on the relative error, i.e.,
$
\mathrm{error}(\bm{\theta}^t,\bar{\bm{\theta}})
$ (\ref{eq:relativeerror}). Hence, we illustrate the relationship between the estimation error of ambiguity alignment parameters and the relative error via the following experiment. Let $K=10$, $m=100$, the step size be  
$\eta = 0.1$ and the number of users $s\in\{1, 5, 10\}$. 
In each iteration, for $i = 1,\cdots, s$, the estimated ambiguity alignment parameter $\hat{w}_i$ is given by
$
\hat{w}_i = w_i + e_{w_i},
$
where $w_i$ is given by (\ref{align}) and $e_{w_j}\sim \mathcal{N}({0},0.5\sigma_{w}^{-1})+i\mathcal{N}({0},
0.5\sigma_{w}^{-1})$ is the additive noise. In the experiment, the parameter $\sigma_{w}$ varies from $1$ to $10^{5}$. Fig. \ref{fig4} shows the relative error $
\mathrm{error}(\bm{\theta}^t,\bar{\bm{\theta}})
$ versus the parameter $\sigma_{w}$. Both the relative error and the parameter $\sigma_{w}$ are shown in the dB scale.
As we can see, the relative error scales linearly with the parameter $\sigma_{w}$.

  \subsection{Theoretical Analysis}
  To present the main theorem, we first introduce several fundamental definitions. Specifically,  the incoherence parameter \cite{ling2017regularized}, which characterizes the incoherence between $\bm{b}_j$ and $\bm{h}_i$ for $1\leq i \leq s, 1\leq j\leq m$.
  \begin{definition}[Incoherence for BlairComp]\label{inco} Let the incoherence parameter $\mu$ be the smallest number such that
 $
        \max_{1\leq i \leq s, 1\leq j\leq m}\frac{|\bm{b}^{\mathsf{H}}_j\bar{\bm{h}}_i|}{\|\bar{\bm{h}}_i\|_2}\leq \frac{\mu}{\sqrt{m}}.
$
  \end{definition}
Let $\widetilde{\bm{h}}^t_i$ and $\widetilde{\bm{x}} ^t_i
  $, respectively,  denote
  \begin{align}
\widetilde{\bm{h}}^t_i = ({{\omega_i^t}^*})^{-1}\bm{h}_i^t\quad\text{and}\quad\widetilde{\bm{x}} ^t_i= \omega_i^t\bm{x}_i^t,\; i = 1,\cdots, s, 
\label{eq:def-xl}
  \end{align}
where $\omega_i^t$'s are alignment parameters. We further define the norm of the signal component and the perpendicular component with respect to $\bm{h}_i^t$ for $i = 1,\cdots, s,$ as
   \begin{align}
   {\alpha}_{\bm{h}_i^t}
   &:=\langle\bar{\bm{h}}_i,\widetilde{\bm{h}}^t_i\rangle/\|\bar{\bm{h}}_i\|_2,\label{eq:alpha_h}\\
   {\beta}_{\bm{h}_i^t}&:=\left\Vert \widetilde{\bm{h}}^t_i-\frac{\langle\bar{\bm{h}}_i,\widetilde{\bm{h}}^t_i\rangle}{\|\bar{\bm{h}}_i\|_2^2}\bar{\bm{h}}_i\right\Vert _{2},\label{eq:beta_h}
   \end{align} respectively.
  Here, $\omega_i$'s are the alignment parameters.
  Similarly, the norms of the signal component and the perpendicular component with respect to $\bm{x}_i^t$ for $i = 1,\cdots, s,$ can be represented as

 \begin{align}
        {\alpha}_{\bm{x}_i^t}&:=\langle\bar{\bm{x}}_i,\widetilde{\bm{x}} ^t_i\rangle/\|\bar{\bm{x}}_i\|_2,\label{eq:alpha_x}\\
        {\beta}_{\bm{x}_i^t}&:=\left\Vert \widetilde{\bm{x}} ^t_i-\frac{\langle\bar{\bm{x}}_i,\widetilde{\bm{x}} ^t_i\rangle}{\|\bar{\bm{x}}_i\|_2^2}\bar{\bm{x}}_i\right\Vert _{2},\label{eq:beta_x}
        \end{align}
  respectively.
 
  Without loss of generality, we assume $\|\bar{\bm{h}}_i\|_2 = \|\bar{\bm{x}}_i\|_2 = q_i$ ($0<q_i\leq 1$) for $i=  1,\cdots ,s$ and ${\alpha}_{\bm{h}_i^0}$, ${\alpha}_{\bm{x}_i^0}>0$ for $i=  1,\cdots ,s$. Define the condition number $\kappa:=\frac{\max_i \|\bar{\bm{x}}_i\|_2 }{\min_i \|\bar{\bm{x}}_i\|_2 }\geq 1$ with $\max_i\|\bar{\bm{x}}_i\|_2=1$. Then the main theorem is presented in the following. 
   
 \begin{theorem}\label{main_thm}Assume that the initial points obey (\ref{eq:initial}) for $i = 1,\cdots,s$ and the stepsize $\eta>0$ satisfies $\eta \asymp s^{-1}$.  Suppose that the sample size satisfies $m\geq C \mu^2s^2\kappa^4\max\{K,N\}\log^{12}m$ for some sufficiently large constant $C>0$. Then with probability at least $1-c_1m^{-\nu}-c_1me^{-c_2N}$ for some constants $\nu, c_1, c_2>0$,  there exists a sufficiently small constant $0\leq \gamma\leq 1$ and $T_\gamma\lesssim s\log (\max{\{K,N\}})$ such that
        \begin{enumerate}
                \item The randomly initialized Wirtinger
flow makes the estimation error decays exponentially, i.e.,
               \begin{eqnarray}
 \mathrm{error}(\bm{\theta}^{t},\bar{\bm{\theta}})\leq \gamma\left(1-\frac{\eta}{16\kappa}\right)^{t-T_\gamma},~ t\geq T_\gamma,
          \end{eqnarray}
                \item The magnitude ratios of the signal component
                to the perpendicular component with respect to $\bm{h}_i^t$ and $\bm{x}_i^t$
                obey
                \begin{subequations}\label{eq:exp_ratio}
                \begin{align}
        \max_{1\leq i\leq s}    \frac{ {\alpha}_{\bm{h}_i^t}}{ {\beta}_{\bm{h}_i^t}}\gtrsim \frac{1}{\sqrt{K\log K}}(1+c_3\eta)^t,\\
                \max_{1\leq i\leq s}    \frac{ {\alpha}_{\bm{x}_i^t}}{ {\beta}_{\bm{x}_i^t}}\gtrsim \frac{1}{\sqrt{N\log N}}(1+c_4\eta)^t,
                \end{align}     \end{subequations}
                respectively, where $t=0,1,\cdots$ for some constants $c_{3},c_4>0$.
                \item The normalized root mean square error ${\rm{RMSE}}(\bm{x}_i^{t},\bar{\bm{x}}_i) = {\beta_{\bm{x}_i^t}}/{\normn{\bm{x}_i^t}{2}}$ for $i = 1,\cdots, s$ obeys 
                \begin{align}\label{rmse}
                {\rm{RMSE}}(\bm{x}_i^{t},\bar{\bm{x}}_i)  \lesssim{\sqrt{N\log N}}(1+c_4\eta)^{-t},
                \end{align}
                for some constant $c_4>0$.
        \end{enumerate}
 \end{theorem}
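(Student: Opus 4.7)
The plan is to follow the two-stage template developed for randomly initialized gradient descent in phase retrieval and extend it to the multi-user bilinear setting of $\mathscr{P}$. The central objects to track are, for each user $i$, the decomposition of the aligned iterates $\widetilde{\bm{h}}_i^t,\widetilde{\bm{x}}_i^t$ into signal components $\alpha_{\bm{h}_i^t},\alpha_{\bm{x}_i^t}$ along $\bar{\bm{h}}_i,\bar{\bm{x}}_i$ and perpendicular parts $\beta_{\bm{h}_i^t},\beta_{\bm{x}_i^t}$, as defined in (\ref{eq:alpha_h})--(\ref{eq:beta_x}). The theorem will follow once we show (i) in Stage I the ratios $\alpha/\beta$ grow geometrically with factor $1+c\eta$ while $\beta$ remains essentially frozen, so after $T_\gamma\lesssim s\log\max\{K,N\}$ iterations the iterates enter a neighborhood of $\{\bar{\bm{h}}_i,\bar{\bm{x}}_i\}$ of constant relative accuracy, and (ii) in Stage II, inside that neighborhood, the preconditioned Wirtinger gradient is restricted-strongly-convex-and-smooth along the trajectory, yielding contraction at rate $1-\eta/(16\kappa)$.

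First, I would perform a population-level analysis by replacing the empirical Wirtinger gradient (\ref{g1})--(\ref{g2}) by its expectation under the Gaussian design of $\{\bm{a}_{ij}\}$, which gives clean recursions of the schematic form
\begin{align*}
\alpha_{\bm{h}_i^{t+1}} \approx (1+c_3\eta)\,\alpha_{\bm{h}_i^t}, \qquad \beta_{\bm{h}_i^{t+1}} \approx \beta_{\bm{h}_i^t},
\end{align*}
and analogously for $\bm{x}_i$, where the preconditioners $1/\normn{\bm{x}_i^t}{2}^2$ and $1/\normn{\bm{h}_i^t}{2}^2$ in (\ref{eq:update role}) together with the inter-user coupling are what force $\eta\asymp s^{-1}$. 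Random initialization via (\ref{eq:initial}) gives $\alpha_{\bm{h}_i^0}\asymp q_i/\sqrt{K}$ and $\beta_{\bm{h}_i^0}\asymp q_i$, so the ratio must climb from $1/\sqrt{K\log K}$ to $\Theta(1)$, which at linear rate $1+c\eta$ with $\eta\asymp s^{-1}$ takes exactly $T_\gamma\lesssim s\log\max\{K,N\}$ iterations and directly yields (\ref{eq:exp_ratio}).

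The hard part is controlling the deviation between the empirical trajectory and this population trajectory, because the same random $\{\bm{a}_{kj},\bm{b}_j\}$ are reused across all iterations and users, creating statistical dependencies that destroy naive concentration. I would handle this by the leave-one-out device: for each $j$ define an auxiliary sequence $\{\bm{h}_i^{t,(j)},\bm{x}_i^{t,(j)}\}$ produced by the same Wirtinger flow with the $j$-th measurement removed, and inductively maintain (a) proximity $\max_{i,j}\normn{\bm{h}_i^t-\bm{h}_i^{t,(j)}}{2}+\normn{\bm{x}_i^t-\bm{x}_i^{t,(j)}}{2}$ at order $\mathrm{poly}\log(m)/\sqrt{m}$, (b) preservation of the incoherence of Definition~\ref{inco}, i.e., $\max_{i,j}|\bm{b}_j^{\mathsf H}\bm{h}_i^t|/\normn{\bm{h}_i^t}{2}\lesssim \mu\log m/\sqrt{m}$, and (c) approximate norm-balance $\normn{\bm{h}_i^t}{2}\asymp \normn{\bm{x}_i^t}{2}\asymp q_i$. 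The main obstacle is propagating (a)--(c) simultaneously across all $s$ users and both block-coordinates while controlling the cross-user interference terms $\bm{b}_j^{\mathsf H}\bm{h}_k\bm{x}_k^{\mathsf H}\bm{a}_{ij}$ for $k\neq i$ appearing in the residual inside (\ref{g1})--(\ref{g2}); these must be shown to act as noise of size $o(\alpha_{\bm{h}_i^t}\alpha_{\bm{x}_i^t})$, which is precisely what dictates the $s^2\kappa^4\max\{K,N\}\log^{12}m$ sample complexity. Combining (a)--(c) with standard concentration for bilinear and quartic Gaussian forms then certifies that the empirical $(\alpha,\beta)$ recursions track the population ones with negligible error throughout Stage I.

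Once Stage I terminates with $\mathrm{error}(\bm{\theta}^{T_\gamma},\bar{\bm{\theta}})\leq\gamma$, I would close Stage II via a local regularity condition proved by a covering argument under the same incoherence and sample complexity: on a basin of radius $\gamma$ around the truth, the preconditioned gradient obeys $\langle \nabla f(\bm{z}),\bm{z}-\bar{\bm{z}}\rangle\gtrsim \normn{\bm{z}-\bar{\bm{z}}}{2}^2/\kappa+\eta\normn{\nabla f(\bm{z})}{2}^2$, which yields claim~1 by the standard one-step contraction argument. Claim~2 is just the recorded Stage I dynamics extended into Stage II (where the ratios can only keep growing). Claim~3 follows because $\normn{\bm{x}_i^t}{2}^2=\alpha_{\bm{x}_i^t}^2+\beta_{\bm{x}_i^t}^2$, so $\mathrm{RMSE}(\bm{x}_i^t,\bar{\bm{x}}_i)=\beta_{\bm{x}_i^t}/\normn{\bm{x}_i^t}{2}\leq \beta_{\bm{x}_i^t}/\alpha_{\bm{x}_i^t}$, and inverting the lower bound in (\ref{eq:exp_ratio}) delivers (\ref{rmse}).
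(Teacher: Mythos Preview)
Your overall two-stage template, the population/empirical decomposition, and your derivation of claim~3 from claim~2 all match the paper's route. There is, however, a genuine gap in your Stage~I analysis: the leave-one-out device \emph{by itself} is not sufficient to control the empirical--population deviation during the early iterations when the signal components $\alpha_{\bm h_i^t},\alpha_{\bm x_i^t}$ are still of order $q_i/\sqrt{K}$ or $q_i/\sqrt{N}$. The paper splits Stage~I into a Phase~1 ($0\le t\le T_1$, where $\alpha$ is tiny) and a Phase~2 ($T_1<t\le T_\gamma$, where $\alpha\gtrsim 1/\mathrm{poly}\log m$), and in Phase~1 it introduces, in addition to the leave-one-out sequences $\{\bm z^{t,(l)}\}$, \emph{random-sign sequences} $\{\bm z^{t,\mathrm{sgn}}\}$ and the doubly-modified $\{\bm z^{t,\mathrm{sgn},(l)}\}$, obtained by replacing $a_{ij,1}$ with $\xi_{ij}a_{ij,1}$ for i.i.d.\ unit-modulus $\xi_{ij}$ independent of the design. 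Only in Phase~2 does the paper state that leave-one-out alone suffices.

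The reason is concrete. In the signal-direction recursion one must bound cross terms such as $\sum_j a_{ij,1}\,\bm a_{kj,\perp}^{\mathsf H}\widetilde{\bm x}_{i\perp}^t\,\bm b_j\bm b_j^{\mathsf H}$, where $\widetilde{\bm x}_{i\perp}^t$ depends on \emph{all} of the $a_{ij,1}$'s. Leave-one-out makes $\widehat{\bm x}^{t,(l)}$ independent of the $l$-th sample only; it does not decouple the signs of $\{a_{ij,1}\}_j$ from the perpendicular part of the iterate. The random-sign surrogate $\check{\bm x}_{i\perp}^{t,\mathrm{sgn}}$ is, by construction, independent of $\{\xi_{ij}\}$, so the sum can be treated via Bernstein's inequality conditional on everything else, yielding a bound of order $\sqrt{K\log m/m}\,\|\widetilde{\bm x}_{i\perp}^t\|_2$. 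This is exactly what is needed for the perturbation terms $\psi,\varphi,\rho$ in the approximate state evolution to be $\ll 1/\log m$; without it your inductive hypotheses (a)--(c) cannot be closed when $\alpha$ is polynomially small, and the required induction list is substantially longer than (a)--(c) (the paper carries nine coupled hypotheses, including separate control of $\widetilde{\bm x}_i^t-\widehat{\bm x}_i^{t,(l)}-\widetilde{\bm x}_i^{t,\mathrm{sgn}}+\widehat{\bm x}_i^{t,\mathrm{sgn},(l)}$). A smaller point: your population recursion is not quite the paper's---there both $\alpha$ and $\beta$ contract by the factor $(1-\eta)$, and the ratio growth is driven by the additive cross term $\eta q_i\alpha_{\bm h_i^t}/(\alpha_{\bm h_i^t}^2+\beta_{\bm h_i^t}^2)$ in the $\alpha$-update, not by $\beta$ staying frozen.
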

   Theorem \ref{main_thm} provides precise statistical analysis on the computational efficiency of Wirtinger
flow with random initialization. Specifically, in Stage I, it takes $T_\gamma = \co( s\log (\max{\{K,N\}}))$ iterations for randomly initialized Wirtinger
flow to reach sufficient small relative error, i.e., $ \mathrm{error}(\bm{\theta}^{T_{\gamma}},\bar{\bm{\theta}})\leq \gamma$ where $\gamma>0$ is some sufficiently small constant.   The short duration of Stage I is own to the exponential growth of the magnitude ratio of the signal component to the perpendicular components.  Moreover, in Stage II, it takes $\co(s\log (1/\varepsilon))$ iterations to reach $\varepsilon$-accurate solution at a linear convergence rate. Thus, the iteration complexity of randomly initialized Wirtinger
flow is guaranteed to be $\co(s\log(\max{\{K,N\}})+s\log (1/\varepsilon))$ as long as the sample size exceeds $m\gtrsim s^2\max{\{K,N\}}\mathrm{poly}\log(m)$.
   
   To further illustrate the relationship between the signal component $\alpha_{\bm{h}_i}$ (resp. $\alpha_{\bm{x}_i}$) and the perpendicular component $\beta_{\bm{h}_i}$ (resp. $\beta_{\bm{x}_i}$) for $i = 1,\cdots,s$,  we provide the simulation results under the setting of $K = N = 10$, $m = 50K$, $s=4$ and $\eta = 0.1$ with  $\|\bar{\bm{h}}_i\|_2 =\|\bar{\bm{x}}_i\|_2 =1$ for $1\leq i\leq s$. In particular, $\alpha_{\bm{h}_i}$,  $\beta_{\bm{h}_i}$ versus iteration count (resp. $\alpha_{\bm{h}_i}$,  $\beta_{\bm{h}_i}$ versus iteration count) for $i = 1,\cdots,s$ is demonstrated in Fig. \ref{fig2h} (resp. Fig. \ref{fig2x}). Consider Fig. \ref{fig1}, Fig. \ref{fig2h} and Fig. \ref{fig2x} collectively, it shows that despite the rare decline of the estimation error, i.e., $ \mathrm{error}(\bm{\theta}^t,\bar{\bm{\theta}})$, during Stage I, the size of the signal component, i.e., $\alpha_{\bm{h}_i}$ and $\alpha_{\bm{x}_i}$ for each $i = 1,\cdots,s$, exponentially increase and the signal component becomes dominant component at the end of Stage I. Furthermore, the exponential growth of the ratio $      { {\alpha}_{\bm{h}_i}}/{ {\beta}_{\bm{h}_i}}$ (resp. ${ {\alpha}_{\bm{x}_i}}/{ {\beta}_{\bm{x}_i}}$) for each $i = 1,\cdots, s$ is illustrated in Fig. \ref{fig3h} (resp. Fig. \ref{fig3x}).
\begin{figure}[htbp]
        \subfigure[]{\includegraphics[width=1.62in]{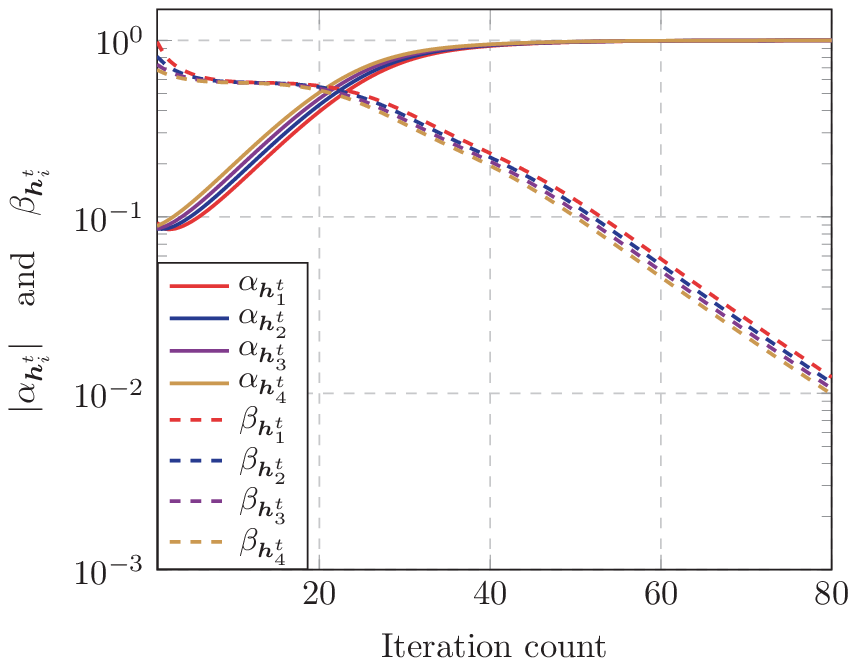}
                \label{fig2h}}\hspace*{-4mm}
        \subfigure[]{\includegraphics[width=1.62in]{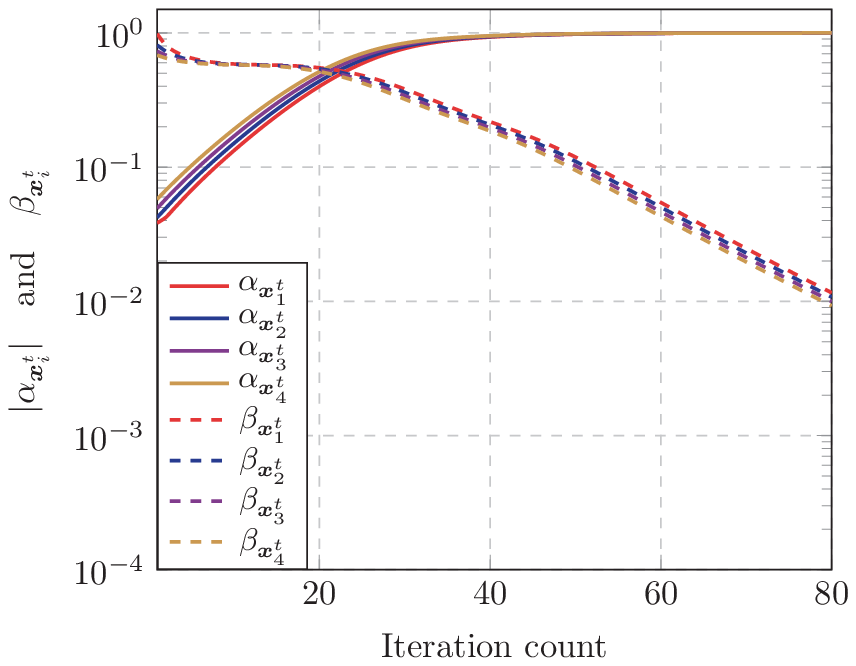}
                \label{fig2x}}\\
        \subfigure[]{\includegraphics[width=1.62in]{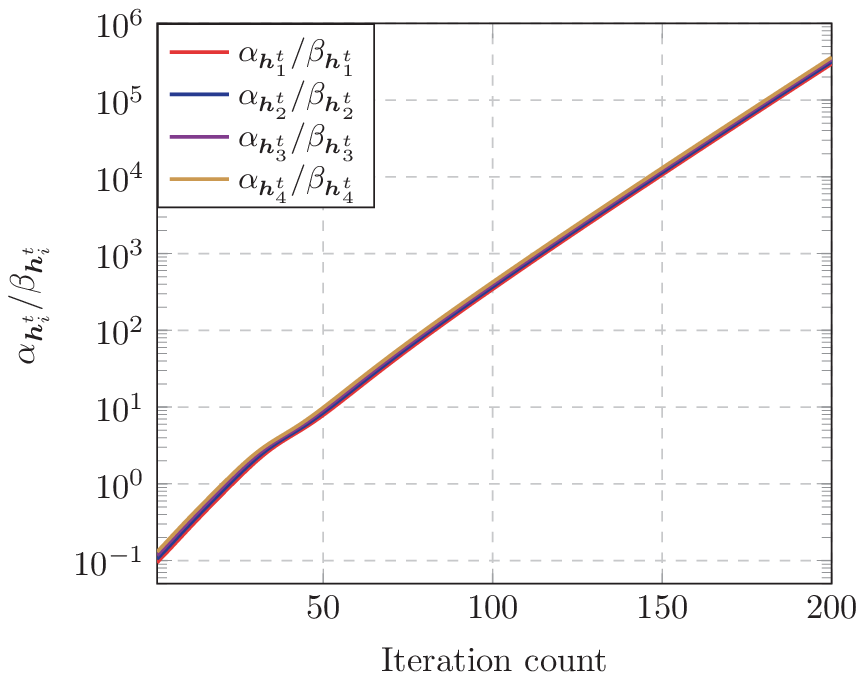}
                \label{fig3h}}\hspace*{-4mm}
        \subfigure[]{\includegraphics[width=1.62in]{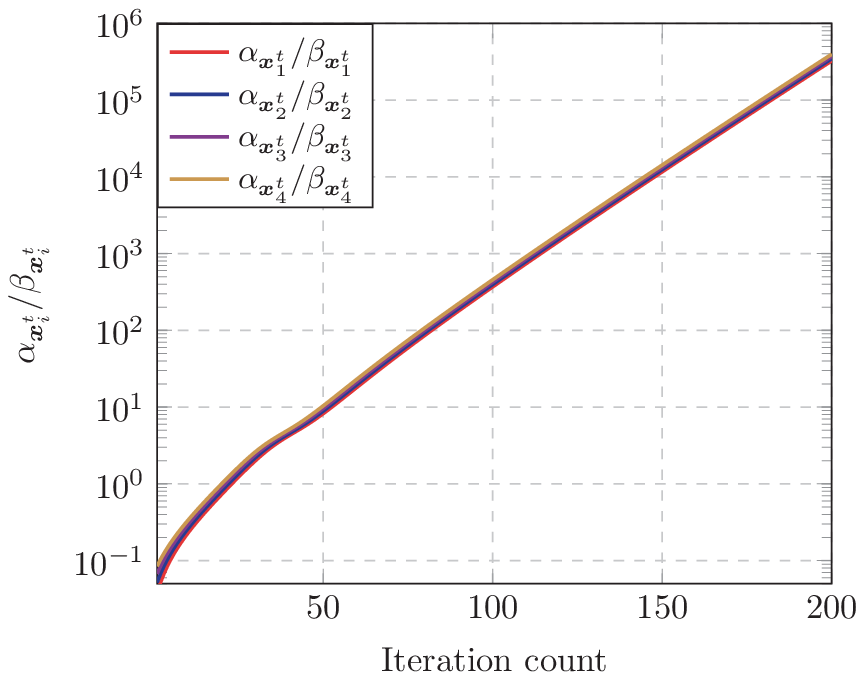}
                \label{fig3x}}
        \caption{Numerical example of signal versus perpendicular components.}
\end{figure} 

   \section{Dynamics Analysis}
In this section, we prove the main theorem by investigating the dynamics of the iterates of Wirtinger
flow with random initialization.  The steps of proving Theorem \ref{main_thm} are summarized as follows.
\begin{enumerate}
 \item \textbf{Stage I}:  
\begin{itemize}
        \item \textbf{Dynamics of population-level state evolution.} Provide the population-level state evolution of $\alpha_{\bm{x}_i}$ (\ref{eq:state-evolution-population-alphax}) and $\beta_{\bm{x}_i}$ (\ref{eq:state-evolution-population-betax}), $\alpha_{\bm{h}_i}$ (\ref{eq:state-evolution-population-alphah}), $\beta_{\bm{h}_i}$ (\ref{eq:state-evolution-population-betah}) respectively, where the sample size approaches infinity. We then develop the approximate state evolution  (\ref{eq:app_state}), which are remarkably close to the population-level state evolution, in the finite-sample regime. See details in Section \ref{section:dym}.
\item \textbf{Dynamics of approximate state evolution.} Show that there exists some $T_\gamma = \co( s\log (\max{\{K,N\}}))$ such that $ \mathrm{error}(\bm{x}^{T_\gamma},\bar{\bm{x}})\leq\gamma$ , if $\alpha_{\bm{h}_i}$ (\ref{eq:alpha_h}), $\beta_{\bm{h}_i}$ (\ref{eq:beta_h}), $\alpha_{\bm{x}_i}$ (\ref{eq:alpha_x}) and $\beta_{\bm{x}_i}$ (\ref{eq:beta_x}) satisfy the approximate state evolution (\ref{eq:app_state}). The exponential growth of the ratio ${ {\alpha}_{\bm{h}_i}}/{ {\beta}_{\bm{h}_i}}$ and ${ {\alpha}_{\bm{x}_i}}/{ {\beta}_{\bm{x}_i}}$ are further demonstrated under the same assumption. Please refer to Lemma \ref{lemma:iterative}.
\item \textbf{Leave-one-out arguments.} Prove that with high probability  $\alpha_{\bm{h}_i}$, $\beta_{\bm{h}_i}$, $\alpha_{\bm{x}_i}$ and $\beta_{\bm{x}_i}$ satisfy the approximate state evolution (\ref{eq:app_state}) if the iterates $\{\bm{z}_i\}$  are independent with $\{\bm{a}_{ij} \}$.  Please refer to Lemma \ref{lemma:state_evo}. To achieve this, the ``near-independence" between $\{\bm{z}_i\}$ and $\{\bm{a}_{ij} \}$ is established via exploiting leave-one-out arguments and some variants of the arguments. Specifically, the leave-one-out sequences and random-sign sequences are constructed in Section \ref{key}. The concentrations between the original and these auxiliary sequences are then provided in Lemma \ref{lemma:xt-xt-l}-Lemma \ref{lemma:xt-xt-l_2}.
\end{itemize}
\item \textbf{Stage II}: \textbf{Local geometry in the region of incoherence and contraction.}
We invoke the prior theory provided in \cite{dong2018} to show local convergence of the random initialized Wirtinger
flow in Stage II.  
\end{enumerate}

Claims (\ref{eq:exp_ratio}) and (\ref{rmse}) are further proven in Section \ref{sec:stage}.

 \subsection{Dynamics of Population-level State Evolution}\label{section:dym}
 In this subsection, we investigate the dynamics of population-level (where we have infinite samples) state evolution of $\alpha_{\bm{h}_i}$ (\ref{eq:alpha_h}), $\beta_{\bm{h}_i}$ (\ref{eq:beta_h}), $\alpha_{\bm{x}_i}$ (\ref{eq:alpha_x}) and $\beta_{\bm{x}_i}$ (\ref{eq:beta_x}). 

Without loss the generality, we assume that $\bar{\bm{x}}_i  = q_i\bm{e}_1$ for $i = 1,\cdots, s$, where $0<q_i\leq 1,i=1,\cdots,s$ are some constants and $\kappa= \frac{\max_i q_i}{\min_i q_i}$, and $\bm{e}_1$ denotes the first standard basis vector. This assumption is based on the rotational invariance of Gaussian distributions.  Since the deterministic nature of $\{\bm{b}_j\}$, the ground truth signals $\{\bar{\bm{h}}_i\}$ (channel vectors) cannot be transferred to a simple form, which yields more tedious analysis procedure. For simplification, for $i = 1,\cdots,s$, we denote \begin{equation}
{x}_{i1}^{t}\qquad\text{and}\qquad\bm{x}_{i\perp}^{t}:=[{x}_{ij}^{t}]_{2\leq j\leq N}\label{eq:defn-xperp-xpara}
\end{equation}
as the first entry and the second through the $N$-th entries of $\bm{x}_i^{t}$,
respectively. Based on the assumption that $\bar{\bm{x}}_i  = q_i\bm{e}_1$ for $i = 1,\cdots, s$, (\ref{eq:alpha_x}) and (\ref{eq:beta_x}) can be reformulated as
\begin{align}
        {\alpha}_{\bm{x}_i}:=\wt{x}_{i1}^{t}\quad\text{and}\quad
        {\beta}_{\bm{x}_i}:=\left\Vert \wt{\bm{x}}_{i\perp}^{t}\right\Vert _{2}.\label{eq:alpha_beta_x1}
\end{align}

To study the population-level state evolution, we start with consider the case where the sequences $\{\bm{z}_i^t\}$ (refer to (\ref{eq:z})) are established via the population gradient, i.e., for $i = 1,\cdots,s$,
\begin{align}
 \left[\bm{h}_i^{t+1}\atop\bm{x}_i^{t+1} \right]
 = \left[\bm{h}_i^{t}\atop\bm{x}_i^{t}\right]-\eta \left[\frac{1}{\|\bm{x}^t_i\|_2^2}\nabla_{\bm{h}_i}F(\bm{z}^t)\atop\frac{1}{\|\bm{h}_i^t\|_2^2}\nabla_{\bm{x}_i}F(\bm{z}^t) \right] ,
\end{align}
where
\[
\nabla_{\bm{h}_i} F(\bm{z}):=\mathbb{E}[\nabla_{\bm{h}_i} f(\bm{h},\bm{x})]=\norm{\bm{x}_i}{2}^2\bm{h}_i-(\bar{\bm{x}}_i^{\mathsf{H}}\bm{x}_i)\bar{\bm{h}}_i,
\]
\[
\nabla_{\bm{x}_i} F(\bm{z}):=\mathbb{E}[\nabla_{\bm{x}_i} f(\bm{h},\bm{x})]=\norm{\bm{h}_i}{2}^2\bm{x}_i-(\bar{\bm{h}}_i^{\mathsf{H}}\bm{h}_i)\bar{\bm{x}}_i.
\]
Here, the population gradients are computed based on the assumption that $\{\bm{x}_i\}$ (resp. $\{\bm{h}_i\}$) and $\{\bm{a}_{ij} \}$ (resp. $\{\bm{b}_j\}$)  are independent with each other.
With simple calculations, the dynamics for both the signal and the perpendicular components with respect to $\bm{x}_i^t$, $i = 1,\cdots,s$ are given as
\begin{subequations}\label{subeq:population-dynamics}
        \begin{align}
        \wt{x}_{i1}^{t+1} & =\left(1-\eta\right) \wt{x}_{i1}^{t}+\eta\frac{q_i^2}{\|\wt{\bm{h}}_i^{t}\|_{2}^{2}}
        \bar{\bm{h}}_i^{\mathsf{H}}\wt{\bm{h}}_i^t,\label{eq:population-alpha-t}\\
        \wt{\bm{x}}_{i\perp}^{t+1} & =\left(1-\eta\right) \wt{\bm{x}}_{i\perp}^{t}.\label{eq:population-beta-t}
        \end{align}
\end{subequations}
Assuming that $\eta>0$ is sufficiently small and $\|\bar{\bm{h}}_i\|_2 = \|\bar{\bm{x}}_i\|_2 = q_i$ ($0< q_i\leq 1$) for $i=  1,\cdots ,s$ and
recognizing that $\|\wt{\bm{h}}_i^{t}\|_{2}^{2}={\alpha}_{\bm{h}_i^{t}}^2+{\beta}_{\bm{h}_i^{t}}^2$,
we arrive at the following population-level state evolution for both
${\alpha}_{\bm{x}_i^{t}}$ and ${\beta}_{\bm{x}_i^{t}}$:
\begin{subequations}\label{subeq:population-iterative_x}
        \begin{align}
\alpha_{\bm{x}_i^{t+1}} & =(1-\eta) \alpha_{\bm{x}_i^{t}}+\eta\frac{{q_i\alpha}_{\bm{h}_i^{t}}}{{\alpha}_{\bm{h}_i^{t}}^2+{\beta}_{\bm{h}_i^{t}}^2},\label{eq:state-evolution-population-alphax}\\
        \beta_{\bm{x}_i^{t+1}} & =(1-\eta )\beta_{\bm{x}_i^{t}}.\label{eq:state-evolution-population-betax}
        \end{align}
\end{subequations}
Likewise, the population-level state evolution for both ${\alpha}_{\bm{h}_i^{t}}$ and ${\beta}_{\bm{h}_i^{t}}$:
\begin{subequations}\label{subeq:population-iterative_h}
        \begin{align}
        \alpha_{\bm{h}_i^{t+1}} & =(1-\eta) \alpha_{\bm{h}_i^{t}}+\eta\frac{{q_i\alpha}_{\bm{x}_i^{t}}}{{\alpha}_{\bm{x}_i^{t}}^2+{\beta}_{\bm{x}_i^{t}}^2},\label{eq:state-evolution-population-alphah}\\
        \beta_{\bm{h}_i^{t+1}} & =(1-\eta )\beta_{\bm{h}_i^{t}}.\label{eq:state-evolution-population-betah}
        \end{align}
\end{subequations}
In finite-sample case, the dynamics of the randomly initialized Wirtinger
flow iterates can be represented as
\begin{align}\label{eq:update_finite}
\bm{z}_i^{t+1 }=&\left[\begin{array}{c}
\bm{h}_i^{t+1} \\
\bm{x}_i^{t+1}
\end{array}\right]=\left[\begin{array}{c}
        \bm{h}_i^{t} -{\eta}/{\|\bm{x}_i^t\|_2^2}\cdot\nabla_{\bm{h}_i}F\left(\bm{z}\right) \\
        \bm{x}_i^{t} - {\eta}/{\|\bm{x}_i^t\|_2^2}\cdot\nabla_{\bm{x}_i}F\left(\bm{z}\right) 
        \end{array}\right]-\notag\\
        &\quad-\left[\begin{array}{c}
        {\eta}/{\|\bm{x}_i^t\|_2^2}
        \cdot( \nabla_{\bm{h}_i}f\left(\bm{z}\right) - \nabla_{\bm{h}_i}F\left(\bm{z}\right)) \\
        {\eta}/{\|\bm{h}_i^t\|_2^2}
        \cdot (\nabla_{\bm{x}_i}f\left(\bm{z}\right) - \nabla_{\bm{x}_i}F\left(\bm{z}\right) )
        \end{array}\right].
\end{align}
Under the assumption that the last term in (\ref{eq:update_finite}) is well-controlled, which will be justified in Appendix \ref{sec:proof_state_evo}, we arrive at the  approximate state evolution:
\begin{subequations}\label{eq:app_state}
\begin{align}
        \alpha_{\bm{h}_i^{t+1}} &= (1-\eta+\frac{\eta q_i \psi_{\bm{h}_i^{t}}}{{\alpha}_{\bm{x}_i^{t}}^2+{\beta}_{\bm{x}_i^{t}}^2}) \alpha_{\bm{h}_i^{t}}+\eta(1-\rho_{\bm{h}_i^{t}})\frac{{q_i\alpha}_{\bm{x}_i^{t}}}{{\alpha}_{\bm{x}_i^{t}}^2+{\beta}_{\bm{x}_i^{t}}^2},\label{eq:evo_alpha_H}\\
        \beta_{\bm{h}_i^{t+1}} & =(1-\eta+\frac{\eta q_i\varphi_{{\bm{h}_i^{t}}}}{{\alpha}_{\bm{x}_i^{t}}^2+{\beta}_{\bm{x}_i^{t}}^2} )\beta_{\bm{h}_i^{t}},\\
\alpha_{\bm{x}_i^{t+1}} & =(1-\eta+\frac{\eta q_i \psi_{\bm{x}_i^{t}}}{{\alpha}_{\bm{h}_i^{t}}^2+{\beta}_{\bm{h}_i^{t}}^2}) \alpha_{\bm{x}_i^{t}}+\eta(1-\rho_{\bm{x}_i^{t}})\frac{{q_i\alpha}_{\bm{h}_i^{t}}}{{\alpha}_{\bm{h}_i^{t}}^2+{\beta}_{\bm{h}_i^{t}}^2},\label{eq:evo_alpha_X}\\
\beta_{\bm{x}_i^{t+1}} & =(1-\eta+\frac{\eta q_i\varphi_{{\bm{x}_i^{t}}}}{{\alpha}_{\bm{h}_i^{t}}^2+{\beta}_{\bm{h}_i^{t}}^2} )\beta_{\bm{x}_i^{t}},\label{eq:evo_beta_X}
\end{align}
\end{subequations}
where $\{\psi_{\bm{h}_i^{t}}\},\{\psi_{\bm{x}_i^{t}}\},\{\varphi_{\bm{h}_i^{t}}\} ,\{\varphi_{\bm{x}_i^{t}}\},\{\rho_{\bm{h}_i^{t}}\}$ and $\{\rho_{\bm{x}_i^{t}}\}$ represent the perturbation terms.
\subsection{Dynamics of Approximate State Evolution}
To begin with, we define the discrepancy between the estimate $\bm{z}$ and the ground truth $\bar{\bm{z}}$ as the distance function, given as
\begin{align}\label{dist}
\mbox{dist}(\bm{z},\bar{\bm{z}})=\left(\sum_{i=1}^s\mbox{dist}^2(\bm{z}_i,\bar{\bm{z}}_i)\right)^{1/2},
\end{align}
where $\mbox{dist}^2(\bm{z}_i,\bar{\bm{z}}_i) = \min\limits_{\alpha_i\in\mathbb{ C}}({{\|\frac{1}{{\alpha_i}^*}\bm{h}_i-\bar{\bm{h}}_i \|_2^2+\|\alpha_i \bm{x}_i - \bar{\bm{x}}_i\|_2^2 }})/{d_i}$ for $i = 1,\cdots,s$. Here, $d_i = \|\bar{\bm{h}}_i\|_2^2+\|\bar{\bm{x}}_i\|_2^2$ and each $\alpha_i$ is the alignment parameter. It is easily seen that if $\alpha_{\bm{h}_i^t}$ (\ref{eq:alpha_h}), $\beta_{\bm{h}_i^t}$ (\ref{eq:beta_h}), $\alpha_{\bm{x}_i^t}$ (\ref{eq:alpha_x}) and $\beta_{\bm{x}_i^t}$ (\ref{eq:beta_x}) obey 
\begin{align}\label{eq:condition_alpha_beta}
&|\alpha_{\bm{h}_i^t}-q_i|\leq \frac{\gamma}{2\kappa\sqrt{s}}\quad\text{and}\quad\beta_{\bm{h}_i^t}\leq\frac{\gamma}{2\kappa\sqrt{s}}\quad\text{and}\notag\\&|\alpha_{\bm{x}_i^t}-q_i|\leq \frac{\gamma}{2\kappa\sqrt{s}}\quad\text{and}\quad\beta_{\bm{x}_i^t}\leq\frac{\gamma}{2\kappa\sqrt{s}},
\end{align} for $i = 1,\cdots, s$, then 
$
\mbox{dist}(\bm{z},\bar{\bm{z}})
\leq\gamma$.
Moreover, based triangle inequality, there is 
$
\mathrm{error}(\bm{\theta},\bar{\bm{\theta}})\leq\mbox{dist}(\bm{z},
\bar{\bm{z}})\leq \gamma
$.

In this subsection, we shall show that as long as the approximate state evolution (\ref{eq:app_state}) holds, there exists some constant $T_{\gamma} = \co(s\log \max{\{K,N\}})$ satisfying condition (\ref{eq:condition_alpha_beta}). This is demonstrated in the following Lemma.
Prior to that, we first list several conditions and definitions that contribute to the lemma.
\begin{itemize}
        \item The initial points obey 
        \begin{subequations}\label{eq:init_condition}
                \begin{align}
                        &\alpha_{\bm{h}_i^0}\geq\frac{q_i}{K\log K}\quad\text{and}
                        \quad\alpha_{\bm{x}_i^0}\geq\frac{q_i}{N\log N},\\
                        &\sqrt{\alpha_{\bm{h}_i^0}^{2}+\beta_{\bm{h}_i^0}^{2}}\in\bigg[1-\frac{1}{\log K},1+\frac{1}{\log K}\bigg]q_i,\\
                        &
                        \sqrt{\alpha_{\bm{x}_i^0}^{2}+\beta_{\bm{x}_i^0}^{2}}\in\bigg[1-\frac{1}{\log N},1+\frac{1}{\log N}\bigg]q_i,
                \end{align}
                for $i = 1,\cdots,s$.
        \end{subequations}
        \item Define
        \begin{align}\label{eq:def-T-gamma}
                T_{\gamma} & :=\min\big\{ t: \text{satifes (\ref{eq:condition_alpha_beta})}\big\},
        \end{align} where $\gamma>0$ is some sufficiently small constant.
        \item Define 
        \begin{align}
                T_{1} & :=\min\bigg\{ t: \min_i\frac{\alpha_{\bm{h}_i^t}}{q_i}\geq \frac{c_{7}}{\log^{5}m},\notag\\&\qquad\qquad\qquad\qquad\qquad\min_i\frac{\alpha_{\bm{x}_i^t}}{q_i}\geq \frac{c_{7}^\prime}{\log^{5}m}\bigg\} ,\label{eq:defn-T1}\\
                T_{2} & :=\min\left\{ t:\min_i \frac{\alpha_{\bm{h}_i^t}}{q_i}> c_{8},~ \min_i\frac{\alpha_{\bm{x}_i^t}}{q_i}> c_{8}^\prime\right\} ,\label{eq:defn-T2}
        \end{align}
        for some small absolute positive constants $c_{7},c_7^\prime,c_8,c_8^\prime>0$. 
        \item For $0\leq t\leq T_{\gamma}$, it has 
        \begin{align}
                &\frac{1}{2\sqrt{K\log K}}\leq\frac{\alpha_{\bm{h}_i^t}}{q_i}\leq2,~ c_{5}\leq\frac{\beta_{\bm{h}_i^t}}{q_i}\leq1.5\quad\text{and}\notag\\
                &\frac{\alpha_{\bm{h}_i^{t+1}}/\alpha_{\bm{h}_i^t}}{\beta_{\bm{h}_i^{t+1}}/\beta_{\bm{h}_i^t}}\geq1+c_{5}\eta,~ i = 1,\cdots,s,\label{eq:alpha-beta-range-SNRh}\\
                &\frac{1}{2\sqrt{N\log N}}\leq\frac{\alpha_{\bm{x}_i^t}}{q_i}\leq2,~ c_{6}\leq\frac{\beta_{\bm{x}_i^t}}{q_i}\leq1.5\quad\text{and}\notag\\
                &\frac{\alpha_{\bm{x}_i^{t+1}}/\alpha_{\bm{x}_i^t}}{\beta_{\bm{x}_i^{t+1}}/\beta_{\bm{x}_i^t}}\geq1+c_{6}\eta,~ i = 1,\cdots,s,\label{eq:alpha-beta-range-SNRx}
        \end{align}for some constants $c_{5},c_{6}>0$.
\end{itemize}
\begin{lemma}\label{lemma:iterative} Assume that
        the initial points obey condition (\ref{eq:init_condition}) and the perturbation terms in the approximate state evolution (\ref{eq:app_state}) obey
        $
        \max\left\{ |\psi_{\bm{h}_i^{t}}|,|\psi_{\bm{x}_i^{t}}|,| \varphi_{\bm{h}_i^{t}}|, |\varphi_{\bm{x}_i^{t}}| , \absn{\rho_{\bm{x}_i^{t}}}\right\} \leq\frac{c}{\log m},
        $
        for $i = 1,\cdots,s $, $t=0,1,\cdots$ and some sufficiently small constant $c>0$. 
        \begin{enumerate}
                
                \item Then for any sufficiently large $K,N$ and the stepsize $\eta>0$ that obeys $\eta\asymp s^{-1}$, it follows
               $
                        T_{\gamma}\lesssim s\log (\max{\{K,N\}})\label{eq:Tgamma-UB}
$ and
                (\ref{eq:alpha-beta-range-SNRh}), (\ref{eq:alpha-beta-range-SNRx}).
                \item Then with the stepsize $\eta>0$ following $\eta\asymp s^{-1}$, one has that
                $T_{1}\leq T_{2}\leq T_{\gamma}\lesssim s\log \max\{K,N\}$, $T_{2}-T_{1}\lesssim s\log\log m$,
                $T_{\gamma}-T_{2}\lesssim s$.

        \end{enumerate}
\end{lemma}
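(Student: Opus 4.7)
The plan is to partition Stage~I into three consecutive substages, demarcated by the stopping times $T_1$, $T_2$ and $T_\gamma$, and analyze the approximate state evolution (\ref{eq:app_state}) on each. The key structural observation is that, with all perturbation terms $\psi,\varphi,\rho$ bounded by $c/\log m$, each $\beta$-recursion is a pure contraction $\beta_{\bm{x}_i^{t+1}}=(1-\eta+o(\eta))\beta_{\bm{x}_i^t}$ (and analogously for $\beta_{\bm{h}_i^t}$), whereas each $\alpha$-recursion receives an upward boost from the cross-term $\eta q_i\alpha_{\bm{x}_i^t}/(\alpha_{\bm{x}_i^t}^2+\beta_{\bm{x}_i^t}^2)$ in (\ref{eq:evo_alpha_H}), and symmetrically in (\ref{eq:evo_alpha_X}). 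Throughout Phase~1 and Phase~2 the invariants force $\alpha_{\bm{x}_i^t}^2+\beta_{\bm{x}_i^t}^2\asymp q_i^2$, so the cross-term reduces to a linear function of $\alpha_{\bm{x}_i^t}$. The resulting coupled $2\times 2$ linear recursion for $(\alpha_{\bm{h}_i^t},\alpha_{\bm{x}_i^t})$ has dominant eigenvalue at least $1+c_0\eta$, which, combined with the $1-\eta$ contraction of $\beta$, yields the per-step gap $(\alpha^{t+1}/\alpha^t)/(\beta^{t+1}/\beta^t)\ge 1+c\eta$ demanded by (\ref{eq:alpha-beta-range-SNRh})--(\ref{eq:alpha-beta-range-SNRx}); the invariants are then propagated by induction on $t$.

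For \textbf{Phase~1} ($0\le t\le T_1$), starting from $\alpha_{\bm{h}_i^0}/q_i\ge 1/(K\log K)$ and $\alpha_{\bm{x}_i^0}/q_i\ge 1/(N\log N)$ in (\ref{eq:init_condition}), geometric growth at rate $1+c_0\eta$ pushes the signal components up to $\Theta(1/\log^5 m)$ after $T_1\lesssim \eta^{-1}\log\max\{K,N\}\asymp s\log\max\{K,N\}$ iterations, using $\eta\asymp s^{-1}$. In \textbf{Phase~2} ($T_1\le t\le T_2$) the same linearized recursion continues to hold and the signal grows from $\Theta(1/\log^5 m)$ to an absolute constant, requiring an additional $\eta^{-1}\log(\log^5 m)\asymp s\log\log m$ iterations, so $T_2-T_1\lesssim s\log\log m$. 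In \textbf{Phase~3} ($T_2\le t\le T_\gamma$), once $\alpha_{\bm{h}_i^t}/q_i\ge c_8$ and $\alpha_{\bm{x}_i^t}/q_i\ge c_8'$, a direct expansion of (\ref{eq:evo_alpha_H})--(\ref{eq:evo_beta_X}) around the fixed point $(q_i,0,q_i,0)$ shows that $|\alpha-q_i|$ and $\beta$ jointly contract at rate $1-\Theta(\eta/\kappa)$; hence $T_\gamma-T_2=O(\eta^{-1}\log(\kappa\sqrt{s}/\gamma))=O(s)$ since $\kappa$ and $\gamma$ are absolute constants. Summing the three phases gives $T_1\le T_2\le T_\gamma\lesssim s\log\max\{K,N\}$, as claimed.

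The main obstacle will be the Phase~1 induction, which absorbs essentially all of the $T_\gamma$ budget and in which the invariants (\ref{eq:alpha-beta-range-SNRh})--(\ref{eq:alpha-beta-range-SNRx}) are most delicate. The $2s$ scalar trajectories $(\alpha_{\bm{h}_i^t},\alpha_{\bm{x}_i^t})_{i=1}^s$ are coupled through the common step size $\eta\asymp s^{-1}$ and through perturbation terms that must remain $O(1/\log m)$ uniformly in $i$ and $t$, so the invariants must be preserved simultaneously over all $s$ users. In particular, I must ensure that $\sqrt{\alpha_{\bm{h}_i^t}^2+\beta_{\bm{h}_i^t}^2}$ and $\sqrt{\alpha_{\bm{x}_i^t}^2+\beta_{\bm{x}_i^t}^2}$ remain of order $q_i$ throughout (so that the denominators in the cross-terms stay $\asymp q_i^2$ and the linearization is justified), and correctly track how the condition number $\kappa=\max_i q_i/\min_i q_i$ enters: the user with the largest $q_i$ has the slowest effective growth rate and dictates the implicit constant in $T_\gamma\lesssim s\log\max\{K,N\}$. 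Once the Phase~1 invariants are established, the statements of the lemma follow by mechanically composing the per-phase iteration counts.
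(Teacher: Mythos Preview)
Your three-phase decomposition and the resulting iteration counts match the paper's approach, which simply defers to the proof of Lemma~1 in \cite{chen2018}. The mechanism you invoke for Phase~1, however, has a gap.

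You assert that ``the invariants force $\alpha_{\bm{x}_i^t}^2+\beta_{\bm{x}_i^t}^2\asymp q_i^2$'' throughout Phase~1 and then linearize the $(\alpha_{\bm{h}_i^t},\alpha_{\bm{x}_i^t})$ recursion to a \emph{fixed} $2\times 2$ system with dominant eigenvalue $1+c_0\eta$. Neither step survives scrutiny. The $\beta$-recursion contracts at rate $1-\eta+o(\eta)$; since $\alpha$ starts at $O(q_i/K)$ while $\beta$ starts at $\Theta(q_i)$, the norm $\alpha^2+\beta^2$ is dominated by $\beta^2$ over a long stretch of Phase~1 and therefore shrinks. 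And if you nonetheless plug $\|\bm{x}_i^t\|^2=q_i^2$ into the $(\alpha_h,\alpha_x)$ recursion you obtain the matrix
\[
\begin{pmatrix}1-\eta&\eta/q_i\\ \eta/q_i&1-\eta\end{pmatrix},
\]
whose top eigenvalue is $1-\eta+\eta/q_i$; for the user with $q_i=\max_k q_k=1$ this equals $1$, so $\alpha$ does not grow and $T_1=\infty$ under your argument. The ratio inequality in (\ref{eq:alpha-beta-range-SNRh})--(\ref{eq:alpha-beta-range-SNRx}) still follows from the $\beta$-contraction alone, but the absolute growth of $\alpha$ from $q_i/(K\log K)$ to $\Theta(q_i/\log^5 m)$ needed for $T_1\lesssim s\log\max\{K,N\}$ does not.

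The actual source of $\alpha$-growth in the bilinear evolution---and the place where it genuinely differs from the phase-retrieval analysis of \cite{chen2018}, in which the norm is self-stabilizing near $1$---is precisely that the denominators $\alpha_{\bm{x}_i^t}^2+\beta_{\bm{x}_i^t}^2$ \emph{decrease} with $t$, so the cross-term $\eta q_i\alpha_{\bm{x}_i^t}/(\alpha_{\bm{x}_i^t}^2+\beta_{\bm{x}_i^t}^2)$ is amplified over time. You therefore need to track a time-varying effective growth rate for $\alpha$, carrying a lower bound on $\alpha$ and an upper bound on $\alpha^2+\beta^2$ jointly in the induction, rather than freezing the norm at $q_i^2$. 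Phases~2 and~3 proceed as you describe once Phase~1 is repaired.
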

\begin{proof}
The proof of Lemma \ref{lemma:iterative} is inspired by the proof of Lemma 1 in \cite{chen2018}.
\end{proof}
The random initialization (\ref{eq:initial}) satisfies the condition (\ref{eq:init_condition}) with probability at least $ 1-\co(1/\sqrt{\log \min\{K,N\}})$ \cite{chen2018}.   According to this fact, Lemma \ref{lemma:iterative} ensures that under both random initialization (\ref{eq:initial}) and approximate state evolution (\ref{eq:app_state}) with the stepsize $\eta\asymp s^{-1}$, Stage I only lasts a few iterations, i.e., $T_\gamma = \co(s\log \max\{K,N\})$.  In addition, Lemma \ref{lemma:iterative} demonstrates the exponential growth of the ratios, i.e., ${\alpha_{\bm{h}_i^{t+1}}/\alpha_{\bm{h}_i^t}},{\beta_{\bm{h}_i^{t+1}}/\beta_{\bm{h}_i^t}}$, which contributes to the short duration of Stage I.

Moreover,
Lemma \ref{lemma:iterative} defines the midpoints $T_1$ when the sizes of the signal component, i.e., $\alpha_{\bm{h}_i^t}$ and $\alpha_{\bm{x}_i^t}$, $i = 1,\cdots, s$, become sufficiently large, which is crucial to the following analysis. In particular, when establishing the approximate state evolution (\ref{eq:app_state}) in Stage I, we analyze two subphases of Stage I individually:
\begin{itemize}
\item Phase 1:  consider the iterations in $0\leq t\leq T_1$, \item Phase 2: consider the iterations in $T_1<t\leq T_\gamma$,
\end{itemize} 
where $T_1$ is defined in (\ref{eq:defn-T1}).
\subsection{Leave-one-out Approach}\label{key}
According to Section \ref{section:dym} and Lemma \ref{lemma:iterative}, the unique challenge in  establishing the approximate state evolution (\ref{eq:app_state}) is to bound the perturbation terms to certain order, i.e., $ |\psi_{\bm{h}_i^{t}}|,|\psi_{\bm{x}_i^{t}}|,| \varphi_{\bm{h}_i^{t}}|, |\varphi_{\bm{x}_i^{t}}| ,\absn{\rho_{\bm{h}_i^{t}}} , \absn{\rho_{\bm{x}_i^{t}}} \ll 1/{\log m}$ for $i = 1,\cdots,s$. To achieve this goal, we exploit some variants of leave-one-out sequences \cite{chen2018,dong2018} to establish the ``near-independence" between $\{\bm{z}_i^t\}$ and $\{\bm{a}_i\}$. Hence, some terms can be approximated by a sum of independent variables with well-controlled weight, thereby be controlled via central limit theorem.

In the following, we define three sets of auxiliary sequences  $\{\bm{z}^{t,(l)}\}$, $\{\bm{z}^{t,\sgn}\}$ and $\{\bm{z}^{t,\sgn,(l)}\}$, respectively.
\begin{itemize}
        \item \emph{Leave-one-out sequences} $\{\bm{z}^{t,(l)}\}_{t\geq0}$. For
        each $1\leq l\leq m$, the  auxiliary sequence $\{\bm{z}^{t,(l)}\}$ is established by dropping the $l$-th sample and runs randomly initialized Wirtinger
flow with objective function
        \begin{equation}
        f^{(l)}\left(\bm{z}\right)=\sum_{j:j\neq l}\Big|\sum_{i=1}^{ s}\bm{b}_j^{\mathsf{H}}\bm{h}_i\bm{x}_i^{\mathsf{H}}\bm{a}_{ij}-{y}_j\Big|^2.\label{eq:auxiliary-loss-LOO}
        \end{equation}
        Thus, the sequences $\{\bm{z}_i^{t,(l)}\}$ (recall the definition of $\bm{z}_i$ (\ref{eq:z})) are statistically independent of $\{\bm{a}_{il}\}$. 
        \item \emph{Random-sign sequences} $\left\{ \bm{z}^{t,\mathrm{sgn}}\right\} _{t\geq0}$.
Define the auxiliary design vectors $\left\{ \bm{a}_{ij}^{\mathrm{sgn}}\right\} $ as
        \begin{equation}
        \bm{a}_{ij}^{\mathrm{\mathrm{sgn}}}:=\left[\begin{array}{c}
        \xi_{ij}a_{ij,1}\\
        \bm{a}_{ij,\perp}
        \end{array}\right],\label{eq:auxiliary-random-sign-vector}
        \end{equation}
        where $\left\{ \xi_{ij}\right\}$ is
        a set of standard complex uniform random variables independent of $\left\{ \bm{a}_{ij}\right\} $, i.e.,
\begin{align}\label{eq:xi}
        \xi_{ij}\overset{\text{i.i.d.}}{=}{u}/{|u|},  
\end{align}
where $u\sim \mathcal{N}({0},\frac{1}{2})+i\mathcal{N}({0},\frac{1}{2})$.
Moreover, with the corresponding $\xi_{ij}$, the auxiliary design vector $\{\bm{b}_j^{\sgn}\}$ is defined as $\bm{b}_j^{\sgn} = \xi_{ij}\bm{b}_j$. With these auxiliary design vectors, the sequences $\{\bm{z}^{t,\mathrm{sgn}}\}$ are generated by
running randomly initialized Wirtinger
flow with respect to the loss function
        \begin{equation}
        f^{\mathrm{sgn}}\big(\bm{z}\big)=\sum_{j=1}^m\Big|\sum_{i=1}^{ s}\bm{b}_j^{\sgn\, H}\bm{h}_i\bm{x}_i^{\mathsf{H}}\bm{a}_{ij}^{\sgn}-\bm{b}_j^{\sgn\,H}\bar{\bm{h}}_i\bar{\bm{x}}_i^{\mathsf{H}}\bm{a}_{ij}^{\sgn}\Big|^2.\label{eq:f-sgn-LOO}
        \end{equation}
Note that these auxiliary design
        vectors, i.e., $\{\bm{a}_{ij}^{\mathrm{\mathrm{sgn}}}\},\{\bm{b}_j^{\sgn}\}$ produce the same measurements as $\left\{ \bm{a}_{ij}\right\}, \{\bm{b}_j\}$:
        \begin{equation}
        \bm{b}_j^{\sgn \mathsf{H}}\bar{\bm{h}}_i\bar{\bm{x}}_i^{\mathsf{H}}\bm{a}_{ij}^{\sgn}=\bm{b}_j^{\mathsf{H}}\bar{\bm{h}}_i\bar{\bm{x}}_i^{\mathsf{H}}\bm{a}_{ij}=q_ia_{ij,1}\bm{b}_j^{\mathsf{H}}\bar{\bm{h}}_i ,\label{eq:by-construction}
        \end{equation}for $1\leq i\leq s,1\leq j\leq m$.
\end{itemize}
Note that all the auxiliary sequences are assumed to have the same initial point, namely, for $
1\leq l\leq m$,
\begin{align}\label{eq:seq_initial}
\{\bm{z}^0\} = \{\bm{z}^{0,(l)}\}  = \{\bm{z}^{0,\sgn}\} = \{\bm{z}^{0,\sgn,(l)}\}.
\end{align}

In view of the ambiguities, i.e., $\bar{\bm{h}}_i\bar{\bm{x}}_i = \frac{1}{{\omega}^*}\bar{\bm{h}}_i(\omega\bar{\bm{x}}_i)^{\mathsf{H}},$  several alignment parameters are further defined for the sequel analysis.  Specifically, the alignment parameter between ${\bm{z}}_i^{t,(l)} = [{\bm{h}}_i^{t,(l)\top}~~{\bm{x}}_i^{t,(l)\top}]^\top$ and $\widetilde{\bm{z}}_i^t = [\widetilde{\bm{h}}_i^{t\,\top}~~\widetilde{\bm{x}}_i^{t\,\top}]^\top$, where $\widetilde{\bm{h}}^t_i = \frac{1}{{\omega_i^t}^*}\bm{h}_i^t$ and $\widetilde{\bm{x}} ^t_i= \omega_i^t\bm{x}_i^t$, are represented as
\begin{align}
\omega_{i,\text{mutual}}^{t,(l)}: = \argmin_{\omega\in\mathbb{C}}\left\| \frac{1}{{\omega}^*}\bm{h}_i^{t,(l)}\!-\!\frac{1}{{\omega_i^t}^*}\bm{h}_i^{t}\right\|_2^2\!\!+\! \left\|\omega\bm{x}_i^{t,(l)}\!\!-\!\omega_i^t\bm{x}_i^{t}\right\|_2^2,
\end{align}
 for $i = 1,\cdots, s$. In addition, we denote $\widehat{\bm{z}}_i^{t,(l)} = [\widehat{\bm{h}}_i^{t,(l)\, \top}~\widehat{\bm{x}}_i^{t,(l)\, \top}]^\top$ where
\begin{align}\label{eq:mutual_seq}
\widehat{\bm{h}}_i^{t,(l)}:=\frac{1}{({\omega_{i,\text{mutual}}^{t,(l)}})^*}\bm{h}_i^{t,(l)}\quad\text{and}\quad\widehat{\bm{x}}_i^{t,(l)}:={{\omega_{i,\text{mutual}}^{t,(l)}}}\bm{x}_i^{t,(l)}.
\end{align}
Define the alignment parameter between ${\bm{z}}_i^{t,\mathrm{sgn}} = [{\bm{h}}_i^{t,\sgn\,\top}~~{\bm{x}}_i^{t,\sgn\,\top}]^\top$ and ${\bm{z}}_i^t = [{\bm{h}}_i^{t\,\top}~~{\bm{x}}_i^{t\,\top}]^\top$ as
\begin{align}
\omega_{i,\sgn}^{t}: =  \argmin_{\omega\in\mathbb{C}}\left\| \frac{1}{{\omega}^*}\bm{h}_i^{t,\sgn}\!-\!\frac{1}{{\omega_i^t}^*}\bm{h}_i^{t}\right\|_2^2\!+\! \left\|\omega\bm{x}_i^{t,\sgn}\!-\!\omega_i^t\bm{x}_i^{t}\right\|_2^2,
\end{align}
for $i = 1,\cdots, s$. In addition, we denote $\check{\bm{z}}_i^{t,\mathrm{sgn}} = [\check{\bm{h}}_i^{t,\sgn\,\top}~\check{\bm{x}}_i^{t,\sgn\,\top}]^\top$ where
\begin{align}
\check{\bm{h}}_i^{t,\mathrm{sgn}}:=\frac{1}{({\omega_{i,\sgn}^{t}})^*}
\bm{h}_i^{t,\mathrm{sgn}}\quad\text{and}\quad\check{\bm{x}}_i^{t,\mathrm{sgn}}:={{\omega_{i,\sgn}^{t}}}\bm{x}_i^{t,\mathrm{sgn}}.
\end{align}
\subsection{Establishing Approximate State Evolution for Phase 1 of Stage I}
In this subsection, we will justify that the approximate state evolution (\ref{eq:app_state}) for both the size of the signal component and the size of the perpendicular component is satisfied during Phase I.  In particular, we establish a collection of induction hypotheses which are crucial to the justification of approximate state evolution (\ref{eq:app_state}), and then identify these hypotheses via inductive argument.

To begin with, we list all the induction hypotheses: for $1\leq i \leq s$,
\begin{subequations}\label{subeq:induction}
        \begin{align}
                &\max_{ 1\leq l\leq m}\dist\paren{{\bm{z}}_i^{t,(l)},\widetilde{\bm{z}}_i^{t}}\notag\\
                \leq&(\beta_{{\bm{h}}_i^{t}}+\beta_{{\bm{x}}_i^{t}})\left(1+\frac{1}{s\log m}\right)^{t}C_{1}\frac{ s\mu^2\kappa\sqrt{\max\{K,N\}\log^{8}m}}{m}\label{eq:h-induction-leave}\\
&\max_{ 1\leq l\leq m}\dist\paren{      \bar{\bm{h}}_i^{\mathsf{H}}\bm{h}_i^{t,(l)},\bar{\bm{h}}_i^{\mathsf{H}}\wt{\bm{h}}_i^{t}}\cdot\normn{\bar{\bm{h}}_i}{2}^{-1}\notag\\
\leq&\alpha_{{\bm{h}}_i^{t}} \left(1+\frac{1}{s\log m}\right)^{t}C_{2}\frac{s\mu^2\kappa\sqrt{K\log^{13}m}}{m}\label{eq:l_h}\\
        &\max_{ 1\leq l\leq m}\dist\paren{      {{x}}_{i1}^{t,(l)},\widetilde{{x}}_{i1}^{t}}\notag\\
        \leq&\alpha_{{\bm{x}}_i^{t}} \left(1+\frac{1}{s\log m}\right)^{t}C_{2}\frac{s\mu^2\kappa\sqrt{N\log^{13}m}}{m}\label{eq:l_x}\\
        &\max_{1\leq i\leq s}\dist{\paren{\bm{h}_i^{t,\mathrm{sgn}},\widetilde{\bm{h}}_i^{t}}}\notag\\
        \leq&\alpha_{{\bm{h}}_i^t}\left(1+\frac{1}{s\log m}\right)^{t}C_{3}\sqrt{\frac{s\mu^2\kappa^2K\log^{8}m}{m}}\label{eq:sgn_h}\\
        &\max_{1\leq i\leq s}\dist{\paren{\bm{x}_i^{t,\mathrm{sgn}},\wt{\bm{x}}_i^{t}}}\notag\\
        \leq&\alpha_{{\bm{x}}_i^t}\left(1+\frac{1}{s\log m}\right)^{t}C_{3}\sqrt{\frac{s\mu^2\kappa^2N\log^{8}m}{m}}\label{eq:sgn_x}\\
        &\max_{ 1\leq l\leq m}\left\Vert \widetilde{\bm{h}}_{i}^{t}-\widehat{\bm{h}}_{i}^{t,\left(l\right)}-\widetilde{\bm{h}}_{i}^{t,\sgn}+\widehat{\bm{h}}_{i}^{t,\sgn,\left(l\right)}\right\Vert _{2}\notag\\
 \leq   &\alpha_{{\bm{h}}_i^t}\left(1+\frac{1}{s\log m}\right)^{t}C_{4}\frac{s\mu^2\sqrt{K\log^{16}m}}{m},\label{eq:h-induction-double}\\
        &\max_{ 1\leq l\leq m}\left\Vert \widetilde{\bm{x}}_{i}^{t}-\widehat{\bm{x}}_{i}^{t,\left(l\right)}-\widetilde{\bm{x}}_{i}^{t,\sgn}+\widehat{\bm{x}}_{i}^{t,\sgn,\left(l\right)}\right\Vert _{2}\notag\\
        \leq    &\alpha_{{\bm{x}}_i^t}\left(1+\frac{1}{s\log m}\right)^{t}C_{4}\frac{s\mu^2\sqrt{N\log^{16}m}}{m},\label{eq:x-induction-double}\\
&       c_{5}\leq  \left\Vert \bm{h}_i^{t}\right\Vert _{2},\left\Vert \bm{x}_i^{t}\right\Vert _{2}\leq C_{5},\label{eq:induction-norm-size}\\
&       \left\Vert \bm{h}_i^{t}\right\Vert _{2}  \leq5\alpha_{\bm{h}_i^{t}}\sqrt{\log^5 m},\label{eq:h-induction-norm-relative}\\
&       \left\Vert \bm{x}_i^{t}\right\Vert _{2}  \leq5\alpha_{\bm{x}_i^{t}}\sqrt{\log^5 m},\label{eq:x-induction-norm-relative}
        \end{align}\end{subequations}
        where $C_{1},\cdots,C_{5}$ and $c_{5}$ are some absolute positive
        constants and $\widehat{\bm{x}}_{i},\wt{\bm{x}}_{i}, \widehat{\bm{h}}_{i},\wt{\bm{h}}_{i}$ are defined in Section \ref{key}.

 Specifically, (\ref{eq:h-induction-leave}), (\ref{eq:l_x}), (\ref{eq:sgn_h}) and (\ref{eq:sgn_x}) identify that the auxiliary sequences  $\{\bm{z}^{t,(l)}\}$ and $\{\bm{z}^{t,\sgn}\}$ are extremely close to the original sequences $\{\bm{z}^{t}\}$. In addition, as claimed in (\ref{eq:h-induction-double}) and (\ref{eq:x-induction-double}), $\widetilde{\bm{h}}_{i}^{t}-\widetilde{\bm{h}}_{i}^{t,\sgn}$ (resp. $\widetilde{\bm{x}}_{i}^{t}-\widetilde{\bm{x}}_{i}^{t,\sgn}$) and $\widehat{\bm{h}}_{i}^{t,\left(l\right)}-\widehat{\bm{h}}_{i}^{t,\sgn,\left(l\right)}$ (resp. $\widehat{\bm{x}}_{i}^{t,\left(l\right)}-\widehat{\bm{x}}_{i}^{t,\sgn,\left(l\right)}$) are also exceedingly close to each other. The hypotheses (\ref{eq:induction-norm-size}) illustrates that the norm of the iterates $\{\bm{h}_i^t\}$ (resp. $\{\bm{ x}_i^t\}$) is well-controlled in Phase 1. 
 Moreover,  (\ref{eq:h-induction-norm-relative})  (resp. (\ref{eq:x-induction-norm-relative}))
 indicates that $\alpha_{\bm{h}_i^{t}}$ (resp. $\alpha_{\bm{x}_i^{t}}$) is comparable to $\norm{\bm{h}_i^t}{2}$ (resp. $\norm{\bm{x}_i^t}{2}$).

We are moving to prove that if the induction hypotheses (\ref{subeq:induction}) hold for the $t$-th iteration, then $\alpha_{\bm{h}_i}$ (\ref{eq:state-evolution-population-alphah}), $\beta_{\bm{h}_i}$ (\ref{eq:state-evolution-population-betah}), $\alpha_{\bm{x}_i}$ (\ref{eq:state-evolution-population-alphax}) and $\beta_{\bm{x}_i}$ (\ref{eq:state-evolution-population-betax}) obey the approximate state evolution (\ref{eq:app_state}). This is demonstrated in Lemma \ref{lemma:state_evo}.

 \begin{lemma}\label{lemma:state_evo}Suppose $m\geq Cs^2\mu^2\max\{K,N\}\log^{10}m$
        for some sufficiently large constant $C>0$. For any $0\leq t\leq T_{1}$  (\ref{eq:defn-T1}), if the $t$-th iterate satisfies
        the induction hypotheses (\ref{subeq:induction}) , then for $i = 1,\cdots,s$, with probability
        at least $1-c_1m^{-\nu}-c_1me^{-c_2N}$ for some constants $\nu, c_1, c_2>0$,
         the approximate evolution state (\ref{eq:app_state}) holds for  some $|\psi_{\bm{h}_i^{t}}|,|\psi_{\bm{x}_i^{t}}|,| \varphi_{\bm{h}_i^{t}}|, |\varphi_{\bm{x}_i^{t}}| ,\absn{\rho_{\bm{h}_i^{t}}}, \absn{\rho_{\bm{x}_i^{t}}}\ll1/\log m$, $i = 1,\cdots,s$. \end{lemma}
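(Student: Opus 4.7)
\smallskip
\noindent\textbf{Proof proposal.} My plan is to derive the approximate state evolution (\ref{eq:app_state}) by substituting $\nabla f=\nabla F+(\nabla f-\nabla F)$ into the finite-sample update (\ref{eq:update_finite}), tracking the population part through the projections onto the signal and perpendicular directions, and then reading off each perturbation term as a ratio whose numerator is a controllable fluctuation. Concretely, write $\bm{z}_i^{t+1}=\bm{z}_i^{t,\mathrm{pop}}+\bm{z}_i^{t,\mathrm{fluc}}$ where $\bm{z}_i^{t,\mathrm{pop}}$ is what the population gradient in (\ref{subeq:population-dynamics}) would produce, and the fluctuation is
\[
\Big[\tfrac{\eta}{\|\bm{x}_i^t\|_2^2}(\nabla_{\bm{h}_i}f-\nabla_{\bm{h}_i}F);\;\tfrac{\eta}{\|\bm{h}_i^t\|_2^2}(\nabla_{\bm{x}_i}f-\nabla_{\bm{x}_i}F)\Big].
\]
Using $\bar{\bm{x}}_i=q_i\bm{e}_1$ and the reductions in (\ref{eq:alpha_beta_x1}), I would project the $\bm{x}_i$-update onto $\bm{e}_1$ and onto its orthogonal complement to obtain updates of $\alpha_{\bm{x}_i^t}$ and $\beta_{\bm{x}_i^t}$ that match (\ref{eq:evo_alpha_X})--(\ref{eq:evo_beta_X}) exactly, provided I identify $\psi_{\bm{x}_i^t}$, $\varphi_{\bm{x}_i^t}$, $\rho_{\bm{x}_i^t}$ with specific inner products of the fluctuation against $\bm{e}_1$ and $\wt{\bm{x}}_{i\perp}^t/\beta_{\bm{x}_i^t}$ (and analogously for $\bm{h}_i$ after using $\langle\bar{\bm{h}}_i,\cdot\rangle$ and its orthogonal complement). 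What remains is then purely a concentration task: every $\psi,\varphi,\rho$ is a ratio whose numerator is of the form $\sum_{j=1}^m\chi_{ij}^t$ for random variables $\chi_{ij}^t$ that depend on $\bm{a}_{ij},\bm{b}_j$ and the current iterate.

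The main obstacle, and where the leave-one-out machinery of Section~\ref{key} is essential, is that $\bm{z}_i^t$ is a nonlinear function of all $\{\bm{a}_{ij},\bm{b}_j\}$, so $\chi_{ij}^t$ is not independent across $j$ and standard Bernstein cannot be applied directly. My plan is to split each summand as
\[
\chi_{ij}^t=\chi_{ij}^{t,(j)}+\bigl(\chi_{ij}^t-\chi_{ij}^{t,(j)}\bigr),
\]
where $\chi_{ij}^{t,(j)}$ is the analogue formed from the leave-one-out iterate $\bm{z}_i^{t,(j)}$, which is independent of $(\bm{a}_{ij},\bm{b}_j)$. For the first piece I can invoke a Bernstein/Hoeffding bound conditional on $\bm{z}_i^{t,(j)}$; for the second piece I use the induction hypothesis (\ref{eq:h-induction-leave})--(\ref{eq:l_x}) together with $\|\bm{a}_{ij}\|_2,|\bm{b}_j^{\mathsf{H}}\bm{h}_i|$ tail bounds to show the difference is $o(1/\log m)$. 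For the bilinear aggregates like $\sum_j a_{ij,1}^*|\bm{b}_j^{\mathsf{H}}\bm{h}_i^t|^2$ that govern $\psi$, I substitute the random-sign sequence $\bm{z}^{t,\mathrm{sgn}}$ via (\ref{eq:sgn_h})--(\ref{eq:sgn_x}) so that the sign $a_{ij,1}/|a_{ij,1}|$ becomes independent of the rest of the summand, making a central-limit-type cancellation available; the double-difference hypotheses (\ref{eq:h-induction-double})--(\ref{eq:x-induction-double}) are then exactly what is needed to replace the leave-one-out random-sign version with the truly independent one without losing a factor of $\log m$.

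After this substitution step, the ratios defining $\psi,\varphi,\rho$ have denominators $\|\bm{h}_i^t\|_2^2$ or $\|\bm{x}_i^t\|_2^2$ that are $\Theta(1)$ by (\ref{eq:induction-norm-size}), and numerators bounded by $\mu\sqrt{\max\{K,N\}\log^{c}m/m}$ times polylogarithmic factors coming from the norm-relative bounds (\ref{eq:h-induction-norm-relative})--(\ref{eq:x-induction-norm-relative}). Under the sample complexity $m\ge C s^2\mu^2\max\{K,N\}\log^{10}m$, all these bounds collapse to $o(1/\log m)$ as required. The incoherence condition in Definition~\ref{inco} is invoked here to keep $|\bm{b}_j^{\mathsf{H}}\bar{\bm{h}}_i|\lesssim \mu/\sqrt{m}$, which is what lets the deterministic partial-DFT vectors $\bm{b}_j$ be treated on the same footing as Gaussian $\bm{a}_{ij}$ in the concentration steps. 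Finally, a union bound over $i\in[s]$, $j\in[m]$, and the (finitely many) perturbation quantities gives the $1-c_1m^{-\nu}-c_1me^{-c_2N}$ probability; the argument is valid only for $t\le T_1$ because the lower bound $\alpha_{\bm{x}_i^t}/q_i\gtrsim 1/\log^5 m$ in (\ref{eq:defn-T1}) is precisely what keeps the numerators of $\psi,\rho$ from dominating the ratio.
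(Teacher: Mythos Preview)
Your high-level plan---expand the update, project onto signal and perpendicular directions, and identify each perturbation $\psi,\varphi,\rho$ as a fluctuation to be concentrated---matches the paper. But the concentration mechanism you propose has a real gap. The per-summand leave-one-out split $\chi_{ij}^t=\chi_{ij}^{t,(j)}+(\chi_{ij}^t-\chi_{ij}^{t,(j)})$ does \emph{not} yield a tractable sum: the terms $\{\chi_{ij}^{t,(j)}\}_{j}$ are still mutually dependent (each $\bm{z}_i^{t,(j)}$ is a function of every design vector except the $j$-th), so there is no valid Bernstein/Hoeffding step ``conditional on $\bm{z}_i^{t,(j)}$''. In fact the paper's proof of this lemma does not use the leave-one-out sequences $\{\bm{z}^{t,(l)}\}$ at all; those enter only in Lemmas~\ref{lemma:xt-xt-l}--\ref{lemma:double-diff}, and always with a \emph{single fixed} index $l$.

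What the paper does instead is expand $\wt{x}_{i1}^{t+1}=\wt{x}_{i1}^t+\eta'(J_{i1}-J_{i2}-J_{i3})$ and handle each piece with a different tool. For $J_{i1}$ and $J_{i2}$ no decoupling is needed: the iterates are factored outside an iterate-independent random matrix such as $\sum_j a_{kj,1}^*a_{ij,1}\bm{b}_j\bm{b}_j^{\mathsf H}$, which is bounded in operator norm via Lemmas~\ref{lemma:concentration-identity-ii}--\ref{lemma:concentration-identity-ij}; the iterate dependence enters only through a final sandwich by vectors of bounded norm, so correlation with the design is irrelevant. The random-sign trick is invoked only for the cross term $J_{i3}$, which contains $\bm{a}_{kj,\perp}^{\mathsf H}\wt{\bm{x}}_{i\perp}^t$: one makes a \emph{single global} substitution $\wt{\bm{x}}_{i\perp}^t\to\check{\bm{x}}_{i\perp}^{t,\sgn}$ so that the summands become conditionally independent over the phases $\xi_{ij}$, then applies Bernstein over $j$ and controls the substitution error via hypothesis (\ref{eq:sgn_x}). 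The term $\sum_j a_{ij,1}^*|\bm{b}_j^{\mathsf H}\bm{h}_i^t|^2$ you single out does not arise in this decomposition, and the double-difference hypotheses (\ref{eq:h-induction-double})--(\ref{eq:x-induction-double}) play no role here; they are used only in the proof of Lemma~\ref{lemma:xt-xt-l-signal}.
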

 \begin{proof}
        Please refer to Appendix \ref{sec:proof_state_evo} for details.
 \end{proof}    
In the sequel, we will prove the hypotheses (\ref{subeq:induction}) hold for Phase 1 of Stage I via inductive arguments. Before moving forward, we first investigate the incoherence between $\{\bm{x}_i^t\}$, $\{\bm{x}_i^{t,\sgn}\}$ (resp. $\{\bm{h}_i^t\}$, $\{\bm{h}_i^{t,\sgn}\}$) and $\{\bm{a}_{ij}\}$, $\{\bm{a}_{ij}^{\sgn}\}$ (resp. $\{\bm{b}_{j}\}$, $\{\bm{b}_{j}^{\sgn}\}$).

\begin{lemma}\label{lemma:consequence-main-text}
        Suppose that $m\geq Cs^2\mu^2\max\{K,N\}\log^8 m$
        for some sufficiently large constant $C>0$ and the $t$-th iterate
satisfies the induction hypotheses (\ref{subeq:induction}) for $t\leq T_{0}$ (\ref{eq:defn-T1}),
        then with probability at least $1-c_1m^{-\nu}-c_1me^{-c_2N}$ for some constants $\nu, c_1, c_2>0$,\begin{subequations}\label{eq:claim_1}
                \begin{align}
                \max_{1\leq i \leq s, 1\leq l\leq m}\left|\bm{a}_{il}^{\mathsf{H}}\wt{\bm{x}}_i^t \right|\cdot\|\wt{\bm{x}}_i^t\|_2^{-1}&\lesssim\sqrt{\log m},\\
                \max_{1\leq i \leq s, 1\leq l\leq m}\left|\bm{a}_{il,\perp}^{\mathsf{H}}\wt{\bm{x}}_{i\perp}^{t}\right|\cdot\|\wt{\bm{x}}_{i\perp}^{t}\|_2^{-1}&\lesssim\sqrt{\log m},\label{eq:x_perp}\\
                \max_{1\leq i \leq s, 1\leq l\leq m}\left|\bm{a}_{il}^{\mathsf{H}}\check{\bm{x}}_{i}^{t,\mathrm{sgn}}\right|\cdot\|\check{\bm{x}}_{i}^{t,\mathrm{sgn}}\|_2^{-1}&\lesssim\sqrt{\log m},\\
                \max_{1\leq i \leq s, 1\leq l\leq m}\left|\bm{a}_{il,\perp}^{\mathsf{H}}\check{\bm{x}}_{i\perp}^{t,\mathrm{sgn}}\right|\cdot\|\check{\bm{x}}_{i\perp}^{t,\mathrm{sgn}}\|_2^{-1}&\lesssim\sqrt{\log m},\label{eq:x_sgn_perp}\\
                \max_{1\leq i \leq s, 1\leq l\leq m}\left|\bm{a}_{il}^{\mathrm{sgn}\,\mathsf{H}}\check{\bm{x}}_i^{t,\mathrm{sgn}}\right|\cdot\|\check{\bm{x}}_i^{t,\mathrm{sgn}}\|_2^{-1}&\lesssim\sqrt{\log m},
                \end{align}
        \end{subequations}
        \begin{subequations}\label{eq:claim_2}
                \begin{align}
        &       \max_{1\leq i \leq s, 1\leq l\leq m}\left|\bm{b}_{l}^{\mathsf{H}}\wt{\bm{h}}_i^t \right|\cdot\|\wt{\bm{h}}_i^t\|_2^{-1}\lesssim\frac{\mu}{\sqrt{m}}\log^2m,\label{eq:wth}\\
        &       \max_{1\leq i \leq s, 1\leq l\leq m}\left|\bm{b}_{l}^{\mathsf{H}}\check{\bm{h}}_i^{t,\mathrm{sgn}}\right|\cdot\|\check{\bm{h}}_i^{t,\mathrm{sgn}}\|_2^{-1}\lesssim\frac{\mu}{\sqrt{m}}\log^2m,\\
        &       \max_{1\leq i \leq s, 1\leq l\leq m}\left|\bm{b}_{l}^{\mathrm{sgn}\,\mathsf{H}}\check{\bm{h}}_i^{t,\mathrm{sgn}}\right|\cdot\|\check{\bm{h}}_i^{t,\mathrm{sgn}}\|_2^{-1}\lesssim\frac{\mu}{\sqrt{m}}\log^2m.
                \end{align}
        \end{subequations}
        \end{lemma}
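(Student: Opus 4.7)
The plan is to prove the lemma by induction on $t$, assuming the induction hypotheses (\ref{subeq:induction}) hold at iteration $t$. For each inner product, I would apply the standard leave-one-out decomposition: write it as a sum of a term involving an auxiliary sequence that is (statistically or structurally) independent of the quantity we wish to randomize against, plus a residual controlled by the closeness hypotheses; then bound the first term by concentration and absorb the second term via Cauchy--Schwarz and the sample-size condition.

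For the bounds in (\ref{eq:claim_1}) involving the Gaussian design vectors $\bm{a}_{il}$, I would write
\[
\bm{a}_{il}^{\mathsf{H}}\widetilde{\bm{x}}_i^t \;=\; \bm{a}_{il}^{\mathsf{H}}\widehat{\bm{x}}_i^{t,(l)} \;+\; \bm{a}_{il}^{\mathsf{H}}\bigl(\widetilde{\bm{x}}_i^t-\widehat{\bm{x}}_i^{t,(l)}\bigr).
\]
By construction $\widehat{\bm{x}}_i^{t,(l)}$ is independent of $\bm{a}_{il}$, so the first term is a complex Gaussian inner product bounded by $\lesssim \sqrt{\log m}\,\|\widehat{\bm{x}}_i^{t,(l)}\|_2$ uniformly in $l$ via sub-Gaussian concentration and a union bound. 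The residual is controlled by $\|\bm{a}_{il}\|_2\lesssim \sqrt{N}$ (Chi-square tail) times the induction bound (\ref{eq:l_x}), which, under $m\gtrsim s^2\mu^2\max\{K,N\}\log^8 m$, is $o(\sqrt{\log m}\,\|\widetilde{\bm{x}}_i^t\|_2)$. The perpendicular versions (\ref{eq:x_perp}) and (\ref{eq:x_sgn_perp}) follow identically after restricting to coordinates $2,\dots,N$ (where $\bm{a}_{il,\perp}$ is independent of the first coordinate); the random-sign variants replace $\widehat{\bm{x}}_i^{t,(l)}$ by the doubly auxiliary $\check{\bm{x}}_i^{t,\sgn,(l)}$ and invoke hypotheses (\ref{eq:sgn_x}), (\ref{eq:x-induction-double}).

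For the DFT bounds in (\ref{eq:claim_2}) the Gaussian concentration step is unavailable because $\bm{b}_l$ is deterministic, so the argument is subtler. I would first split
\[
\bm{b}_l^{\mathsf{H}}\widetilde{\bm{h}}_i^t \;=\; \bm{b}_l^{\mathsf{H}}\widehat{\bm{h}}_i^{t,(l)} \;+\; \bm{b}_l^{\mathsf{H}}\bigl(\widetilde{\bm{h}}_i^t-\widehat{\bm{h}}_i^{t,(l)}\bigr),
\]
bounding the residual by $\|\bm{b}_l\|_2=\sqrt{K/m}$ times (\ref{eq:h-induction-leave}), which the sample-size condition absorbs into the target $\mu\log^2 m/\sqrt{m}$. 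For the first term, further decompose $\widehat{\bm{h}}_i^{t,(l)} = \bigl(\widehat{\bm{h}}_i^{t,(l)}-\check{\bm{h}}_i^{t,\sgn,(l)}\bigr)+\check{\bm{h}}_i^{t,\sgn,(l)}$, bounding the difference by (\ref{eq:h-induction-double}). The remaining random-sign piece is handled via $\bm{b}_l^{\mathsf{H}}\check{\bm{h}}_i^{t,\sgn,(l)} = \xi_{il}^{*}\,\bm{b}_l^{\sgn,\mathsf{H}}\check{\bm{h}}_i^{t,\sgn,(l)}$: since $\xi_{il}$ is a uniform unit-modulus phase independent of both $\check{\bm{h}}_i^{t,\sgn,(l)}$ and (by the leave-one-out construction) of $\bm{b}_l^{\sgn,\mathsf{H}}\check{\bm{h}}_i^{t,\sgn,(l)}$, symmetrization plus the deterministic incoherence $|\bm{b}_l^{\mathsf{H}}\bar{\bm{h}}_i|\le\mu\|\bar{\bm{h}}_i\|_2/\sqrt{m}$ from Definition \ref{inco} yields the required rate $\mu\log^2 m/\sqrt{m}$.

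The main obstacle is book-keeping: every decomposition introduces a residual, and one must verify that $m\gtrsim C s^2\mu^2\max\{K,N\}\log^8 m$ simultaneously absorbs all polylogarithmic slack, the $\sqrt{K/m}$ loss in the DFT Cauchy--Schwarz, and the polynomial blow-up from the leave-one-out and random-sign discrepancies, without any bound becoming vacuous. The most delicate piece is the DFT case, where one must carefully track how the independence between $\xi_{il}$ and the doubly auxiliary iterates is preserved across the gradient updates; this is precisely the role played by the refined hypotheses (\ref{eq:h-induction-double})--(\ref{eq:x-induction-double}) together with the norm-control hypotheses (\ref{eq:induction-norm-size})--(\ref{eq:x-induction-norm-relative}) which keep $\|\widetilde{\bm{h}}_i^t\|_2$ and $\|\widetilde{\bm{x}}_i^t\|_2$ on the correct scale throughout Phase~1.
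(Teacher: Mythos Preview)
For the Gaussian-vector bounds (\ref{eq:claim_1}) your plan coincides with the paper's: split via the leave-one-out sequence, apply standard Gaussian concentration to the independent piece $\bm{a}_{il}^{\mathsf H}\widehat{\bm x}_i^{t,(l)}$, and absorb the residual by Cauchy--Schwarz using $\|\bm a_{il}\|_2\lesssim\sqrt N$. One minor correction: the full residual $\|\widetilde{\bm x}_i^t-\widehat{\bm x}_i^{t,(l)}\|_2$ is controlled by (\ref{eq:h-induction-leave}), not (\ref{eq:l_x}), which bounds only the first-coordinate gap.

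For the DFT bounds (\ref{eq:claim_2}) you depart from the paper and the departure contains a real gap. The paper's argument is much more elementary: it uses only the deterministic incoherence $|\bm b_l^{\mathsf H}\bar{\bm h}_i|\le \mu\|\bar{\bm h}_i\|_2/\sqrt m$ from Definition~\ref{inco} together with Cauchy--Schwarz via $\|\bm b_l\|_2=\sqrt{K/m}$ and the norm hypotheses in (\ref{subeq:induction}); no random-sign or doubly-auxiliary sequences are invoked at this step. Your symmetrization step does not do what you claim. Multiplying by a single unit-modulus phase $\xi_{il}$ leaves $|\bm b_l^{\mathsf H}\check{\bm h}_i^{t,\sgn,(l)}|$ unchanged, so no concentration is gained; symmetrization is a device for sums over many independent signs, not for one fixed inner product. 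Moreover, $\bm b_l^{\sgn}=\xi_{il}\bm b_l$ itself carries $\xi_{il}$, so $\bm b_l^{\sgn,\mathsf H}\check{\bm h}_i^{t,\sgn,(l)}$ is not independent of $\xi_{il}$ as you assert. The random-sign construction in this paper is built to decouple the \emph{first Gaussian coordinate} $a_{ij,1}$ from the iterates (see (\ref{eq:by-construction}) and its use in Appendix~\ref{sec:proof_state_evo}); it provides no leverage on the deterministic Fourier vectors $\bm b_l$. Drop the random-sign detour for (\ref{eq:claim_2}) and follow the paper's simpler route.
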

\begin{proof}
Based on the induction hypotheses (\ref{subeq:induction}),  we can prove the claim (\ref{eq:claim_1}) in Lemma \ref{lemma:consequence-main-text} by invoking the triangle inequality, Cauchy-Schwarz inequality and standard Gaussian concentration. Furthermore, based on the induction hypotheses (\ref{subeq:induction}), the claim (\ref{eq:claim_2}) can be identified according to the definition of the incoherence parameter in Definition \ref{inco} and the fact $\norm{\bm{b}_j}{2} =  \sqrt{K/M}$.
\end{proof}
Now we are ready to specify that the hypotheses (\ref{subeq:induction}) hold for $0\leq t\leq T_1$ (\ref{eq:defn-T1}). We aim to demonstrate that if the hypotheses (\ref{subeq:induction}) hold up to the $t$-th iteration for some $0\leq t\leq T_1$, then they hold for the $(t+1)$-th iteration. Since the case for $t  =0$ can be easily justified due to the equivalent initial points (\ref{eq:seq_initial}), we mainly focus the inductive step.

\begin{lemma}\label{lemma:xt-xt-l}Suppose the induction hypotheses
        (\ref{subeq:induction}) hold true up to the $t$-th iteration for
        some $t\leq T_{1}$ (\ref{eq:defn-T1}), then for $i = 1,\cdots,s$, with probability at least $1-c_1m^{-\nu}-c_1me^{-c_2N}$ for some constants $\nu, c_1, c_2>0$,
        \begin{align}
        &\max_{ 1\leq l\leq m}\dist\paren{{\bm{z}}_i^{t+1,(l)},\widetilde{\bm{z}}_i^{t+1}}\notag\\
        \leq&(\beta_{{\bm{h}}_i^{t+1}}+\beta_{{\bm{x}}_i^{t+1}})\left(1+\frac{1}{s\log m}\right)^{t+1}C_{1}\cdot\notag\\
        &\frac{ s\mu^2\kappa\sqrt{\max\{K,N\}\log^{8}m}}{m}
        \end{align}
        holds $m\geq Cs\mu^2\kappa\sqrt{\max\{K,N\}\log^{8}m}$ with
        some sufficiently large constant $C>0$ as long as the stepsize $\eta>0$ obeys $\eta\asymp s^{-1}$ and $C_{1}>0$
        is sufficiently large. \end{lemma}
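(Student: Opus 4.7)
The plan is to carry out an inductive step: assume the full set of hypotheses (\ref{subeq:induction}) at iteration $t$ and deduce the target bound at iteration $t+1$. First I would fix, for each $l$, the alignment parameter $\omega_{i,\mathrm{mutual}}^{t+1,(l)}$ defined in (\ref{eq:mutual_seq}) and work with $\widehat{\bm{z}}_i^{t+1,(l)}$ rather than $\bm{z}_i^{t+1,(l)}$ itself, so that $\mathrm{dist}(\bm{z}_i^{t+1,(l)},\widetilde{\bm{z}}_i^{t+1}) \le d_i^{-1/2}\|\widehat{\bm{z}}_i^{t+1,(l)}-\widetilde{\bm{z}}_i^{t+1}\|_2$, after which it suffices to bound the Euclidean difference under a suboptimal alignment (e.g.\ the one inherited from iteration $t$). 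Plugging the gradient update (\ref{eq:update role}) into both sequences, the comparison splits cleanly as
\[
\widehat{\bm{z}}_i^{t+1,(l)}-\widetilde{\bm{z}}_i^{t+1} = \big(\widehat{\bm{z}}_i^{t,(l)}-\widetilde{\bm{z}}_i^{t}\big) \;-\; \eta\,\bm{D}^t\!\left[\nabla f^{(l)}(\widehat{\bm{z}}^{t,(l)})-\nabla f(\widetilde{\bm{z}}^t)\right],
\]
where $\bm{D}^t$ is the block-diagonal normalization by $\|\bm{x}_i^t\|_2^{-2}$ and $\|\bm{h}_i^t\|_2^{-2}$, which by (\ref{eq:induction-norm-size}) is well-conditioned.

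Next I would decompose the bracketed gradient difference into three pieces:
\[
\underbrace{\nabla f^{(l)}(\widehat{\bm{z}}^{t,(l)})-\nabla f(\widehat{\bm{z}}^{t,(l)})}_{\text{missing }l\text{-th sample}} \;+\; \underbrace{\nabla f(\widehat{\bm{z}}^{t,(l)})-\nabla F(\widehat{\bm{z}}^{t,(l)})+\nabla F(\widetilde{\bm{z}}^t)-\nabla f(\widetilde{\bm{z}}^t)}_{\text{sample-vs-population mismatch}}\;+\;\underbrace{\nabla F(\widehat{\bm{z}}^{t,(l)})-\nabla F(\widetilde{\bm{z}}^t)}_{\text{population contraction}}.
\]
The first piece is a single rank-one term in $\bm{b}_l,\bm{a}_{il}$ whose size is controlled by the incoherence bounds (\ref{eq:wth}) in Lemma \ref{lemma:consequence-main-text} together with the near-independence between $\bm{z}_i^{t,(l)}$ and $\bm{a}_{il}$; this contributes at most $O(\eta\mu^2\sqrt{K\log^{8}m}/m)\cdot\alpha_{\bm{h}_i^t}\alpha_{\bm{x}_i^t}$ after normalization. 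The third piece, the population gradient evaluated at two nearby points, is handled by the explicit form $\nabla_{\bm{h}_i}F,\nabla_{\bm{x}_i}F$ given earlier in the section: a direct calculation shows it acts as $(1-\eta)\mathrm{Id}$ on the perpendicular components and as near-identity on the signal components, so coupled with the normalization it contributes $(1+O(\eta/\kappa))\,\mathrm{dist}(\widehat{\bm{z}}_i^{t,(l)},\widetilde{\bm{z}}_i^{t})$. The middle piece, a sample-complexity-type uniform deviation, can be bounded along the lines used in the proof sketch for Lemma \ref{lemma:state_evo} (Appendix \ref{sec:proof_state_evo}) using the induction hypotheses (\ref{eq:h-induction-norm-relative})--(\ref{eq:x-induction-norm-relative}) so that vectors to which $\nabla f - \nabla F$ is applied lie in an incoherent regime.

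Finally I would assemble these three contributions and invoke the induction hypothesis (\ref{eq:h-induction-leave}) to get
\[
\mathrm{dist}(\bm{z}_i^{t+1,(l)},\widetilde{\bm{z}}_i^{t+1}) \le (1+c\eta/s)\,\mathrm{dist}(\bm{z}_i^{t,(l)},\widetilde{\bm{z}}_i^{t}) \;+\; O\!\left(\eta\,(\beta_{\bm{h}_i^t}+\beta_{\bm{x}_i^t})\cdot\frac{s\mu^2\kappa\sqrt{\max\{K,N\}\log^{8}m}}{m}\right),
\]
then use $\eta\asymp s^{-1}$ and the evolution $\beta_{\bm{h}_i^{t+1}}+\beta_{\bm{x}_i^{t+1}}\ge (1-\eta)(\beta_{\bm{h}_i^t}+\beta_{\bm{x}_i^t})$ from Lemma \ref{lemma:iterative} to absorb the propagation into the $(1+1/(s\log m))^{t+1}$ factor, provided $C_1$ is chosen large enough relative to the absolute constants in the middle and third pieces. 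The main obstacle will be the middle, sample-vs-population mismatch term: unlike the first piece, which is literally one summand and yields the announced rate via a single Bernstein/Hoeffding bound under the leave-one-out independence, the mismatch term involves quadratic coupling across $i=1,\dots,s$ and requires a union bound over all $l$ together with the chain of incoherence facts in Lemma \ref{lemma:consequence-main-text}; getting its contribution to be $o(1/(s\log m))$ per step — rather than merely $O(1)$ — is what forces the sample-complexity threshold $m\gtrsim s\mu^{2}\kappa\sqrt{\max\{K,N\}\log^{8}m}$ stated in the lemma and dictates the precise choice of $C_1$.
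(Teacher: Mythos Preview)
Your proposal is correct and follows essentially the same route as the paper. The paper does not give a standalone proof of this lemma; it states (in the remark after Lemma~\ref{lemma:double-diff}) that Lemmas~\ref{lemma:xt-xt-l}--\ref{lemma:double-diff} are all handled by the same argument and then details only the proof of (\ref{eq:dist_l_x}) in Appendix~\ref{sec:proof_leave_one}. There, the paper peels off the single $l$-th summand exactly as in your ``missing $l$-th sample'' piece, and treats the remaining gradient difference $\nabla_{\bm x_i}f(\widetilde{\bm z}^t)-\nabla_{\bm x_i}f(\widehat{\bm z}^{t,(l)})$ via the fundamental theorem of calculus, i.e.\ as an integral of the empirical Hessian along the segment, which is then compared to the population Hessian through Lemma~\ref{lemma:Hessian-UB-Stage1}. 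Your split into ``population contraction'' plus ``sample-vs-population mismatch'' is the same step written at the gradient level rather than the Hessian level: applying the fundamental theorem of calculus to your middle piece immediately yields $\int_0^1[\nabla^2 f(\bm z(\tau))-\nabla^2 F(\bm z(\tau))]\,\mathrm{d}\tau$ acting on $\widehat{\bm z}_i^{t,(l)}-\widetilde{\bm z}_i^{t}$, which is exactly what Lemma~\ref{lemma:Hessian-UB-Stage1} controls. The only cosmetic difference is that the paper, in its detailed calculation for the signal coordinate, further splits the Hessian action into the pieces $\omega_1,\omega_2,\omega_3$ and invokes the random-sign sequence to handle $\omega_2$; for the full-vector distance in Lemma~\ref{lemma:xt-xt-l} this refinement is not needed because the target bound scales with $\beta_{\bm h_i^t}+\beta_{\bm x_i^t}\asymp 1$ rather than with $\alpha$, so your coarser Hessian-deviation bound already suffices.
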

In terms of the difference between $\bm{x}^{t}$ and $\bm{x}_i^{t,(l)}$ (resp. $\bm{h}_i^{t}$ and $\bm{h}_i^{t,(l)}$) along with the signal direction, i.e., (\ref{eq:l_h}) and (\ref{eq:l_x}), we reach the following lemma.

\begin{lemma}\label{lemma:xt-xt-l-signal}Suppose the induction hypotheses
        (\ref{subeq:induction}) hold true up to the $t$-th iteration for
        some $t\leq T_{1}$ (\ref{eq:defn-T1}), then with probability at least $1-c_1m^{-\nu}-c_1me^{-c_2N}$ for some constants $\nu, c_1, c_2>0$,
        \begin{align}
        &\max_{ 1\leq l\leq m}\dist\paren{      \bar{\bm{h}}_i^{\mathsf{H}}\bm{h}_i^{t+1,(l)},\bar{\bm{h}}_i^{\mathsf{H}}\wt{\bm{h}}_i^{t+1}}\cdot\normn{\bar{\bm{h}}_i}{2}^{-1}\notag\\
        \leq&\alpha_{{\bm{h}}_i^{t+1}} \left(1+\frac{1}{s\log m}\right)^{t+1}C_{2}\frac{s\mu^2\kappa\sqrt{K\log^{13}m}}{m}\\
        &\max_{ 1\leq l\leq m}\dist\paren{      {{x}}_{i1}^{t+1,(l)},\widetilde{{x}}_{i1}^{t+1}}\notag\\
        \leq&\alpha_{{\bm{x}}_i^{t+1}} \left(1+\frac{1}{s\log m}\right)^{t+1}C_{2}\frac{s\mu^2\kappa\sqrt{N\log^{13}m}}{m}\label{eq:dist_l_x}
        \end{align}
        holds for some sufficiently large $C_{2}>0$ with $C_2\gg C_4$, provided that $m\geq Cs\mu^2\kappa \max\{K,N\}\log^{12} m$ for some sufficiently large constant $C>0$ and the stepsize $\eta>0$ obeys $\eta\asymp s^{-1}$.
\end{lemma}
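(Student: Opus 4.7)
The plan is to prove Lemma \ref{lemma:xt-xt-l-signal} by induction on $t$, combining the Wirtinger-flow update rule with the incoherence bound from Definition \ref{inco} to obtain the extra $\mu/\sqrt{m}$ factor that distinguishes the signal-direction estimate from the full-vector estimate already handled in Lemma \ref{lemma:xt-xt-l}. First I would write out the iterate $\widetilde{\bm{h}}_i^{t+1}$ by applying the update rule (\ref{eq:update role}) to $\bm{h}_i^{t+1}$ and absorbing the alignment parameter $\omega_i^{t+1}$, and analogously for $\widehat{\bm{h}}_i^{t+1,(l)}$ using the leave-one-out loss $f^{(l)}$ in (\ref{eq:auxiliary-loss-LOO}) with the mutual alignment parameter $\omega_{i,\text{mutual}}^{t+1,(l)}$ from (\ref{eq:mutual_seq}). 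Projecting onto $\bar{\bm{h}}_i$, the difference $\bar{\bm{h}}_i^{\mathsf{H}}(\widetilde{\bm{h}}_i^{t+1} - \widehat{\bm{h}}_i^{t+1,(l)})$ splits cleanly into (i) a linear-operator piece acting on $\widetilde{\bm{h}}_i^{t}-\widehat{\bm{h}}_i^{t,(l)}$ and the corresponding $\bm{x}$-difference, controlled by the inductive hypotheses (\ref{eq:h-induction-leave}), (\ref{eq:l_h}) together with the norm bounds (\ref{eq:induction-norm-size}); and (ii) a single-sample perturbation of the form $\eta\|\widetilde{\bm{x}}_i^t\|_2^{-2}\paren{\sum_k\bm{b}_l^{\mathsf{H}}\widetilde{\bm{h}}_k^t\widetilde{\bm{x}}_k^{t\,\mathsf{H}}\bm{a}_{kl}-y_l}\bm{b}_l^{\mathsf{H}}\bar{\bm{h}}_i\,\bm{a}_{il}^{\mathsf{H}}\widetilde{\bm{x}}_i^t$ evaluated at time $t$.

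The key observation is that after projection onto $\bar{\bm{h}}_i$, the factor $|\bm{b}_l^{\mathsf{H}}\bar{\bm{h}}_i|\leq(\mu/\sqrt{m})\|\bar{\bm{h}}_i\|_2$ from Definition \ref{inco} improves the bound by $\mu/\sqrt{m}$ relative to the generic $\|\bm{b}_l\|_2\asymp\sqrt{K/m}$ bound used in Lemma \ref{lemma:xt-xt-l}; this is precisely what yields the $\sqrt{K\log^{13}m}$ scaling in (\ref{eq:l_h}). The auxiliary inner products $\bm{a}_{il}^{\mathsf{H}}\widetilde{\bm{x}}_k^t$ that appear in the single-sample term are controlled at $\co(\sqrt{\log m})$ by Lemma \ref{lemma:consequence-main-text} after a union bound over $l$ and the $s$ users, and the residual $\sum_k\bm{b}_l^{\mathsf{H}}\widetilde{\bm{h}}_k^t\widetilde{\bm{x}}_k^{t\,\mathsf{H}}\bm{a}_{kl}-y_l$ is rewritten as a bilinear form in $\widetilde{\bm{z}}_k^t-\bar{\bm{z}}_k$ and bounded using (\ref{eq:h-induction-norm-relative}), (\ref{eq:x-induction-norm-relative}) together with the incoherence estimate (\ref{eq:wth}).

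The argument for $x_{i1}^{t+1,(l)}$ is structurally identical but simpler. Under the rotational-invariance reduction $\bar{\bm{x}}_i=q_i\bm{e}_1$ of Section \ref{section:dym}, projecting onto $\bar{\bm{x}}_i$ simply extracts the first coordinate $a_{il,1}$, a single standard complex Gaussian bounded by $\co(\sqrt{\log m})$ with high probability; this replaces the role of $|\bm{b}_l^{\mathsf{H}}\bar{\bm{h}}_i|$ and produces the $\sqrt{N\log^{13}m}$ scaling in (\ref{eq:dist_l_x}). Aggregating the two pieces with $\eta\asymp s^{-1}$ and $m\gtrsim s\mu^2\kappa\max\{K,N\}\log^{12}m$, the linear-operator piece contracts by a factor $1+\co(1/(s\log m))$ so the inductive bound propagates to step $t+1$, while the single-sample piece is absorbed by taking $C_2$ sufficiently large relative to $C_1$ and $C_4$; the condition $C_2\gg C_4$ is needed because the double-difference hypothesis (\ref{eq:h-induction-double})--(\ref{eq:x-induction-double}) enters when decoupling the correlated randomness between $\bm{z}^{t,(l)}$ and $\bm{a}_{il}$ via the random-sign surrogate $\bm{z}^{t,\sgn,(l)}$.

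The main obstacle is bookkeeping the cross-user interference: each of the $s$ transmitters contributes a gradient term that must be aligned via its own $\omega_{k,\text{mutual}}^{t,(l)}$ and projected, and the condition number $\kappa$ enters through the worst-case ratio $\max_k q_k/\min_k q_k$ when converting bounds expressed in $\|\widetilde{\bm{x}}_k^t\|_2$ into bounds expressed in $\alpha_{\bm{h}_i^{t+1}}$ or $\alpha_{\bm{x}_i^{t+1}}$. Once the $s$-user alignment is tracked consistently with the single-user analysis of \cite{chen2018} and the blind-demixing analysis of \cite{dong2018}, the remaining work is routine. I would therefore structure the proof as (a) derive the decomposition, (b) bound the linear-operator piece using the $t$-th induction hypotheses, (c) bound the single-sample piece using incoherence and Lemma \ref{lemma:consequence-main-text}, and (d) take a union bound over $l\in[m]$ and $i\in[s]$ to close the induction.
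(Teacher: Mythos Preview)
Your proposal is correct and follows essentially the same route as the paper's proof: decompose $\widetilde{x}_{i1}^{t+1}-\widehat{x}_{i1}^{t+1,(l)}$ (resp.\ the $\bar{\bm h}_i$-projection) into a Hessian-type linear operator acting on the time-$t$ difference plus a single-sample leave-one-out remainder, bound the latter via the incoherence factor $|\bm b_l^{\mathsf H}\bar{\bm h}_i|\le\mu/\sqrt{m}$ (resp.\ $|a_{il,1}|\lesssim\sqrt{\log m}$), and handle the perpendicular contribution inside the operator piece with the random-sign/double-difference hypothesis~(\ref{eq:h-induction-double})--(\ref{eq:x-induction-double}), which is exactly why $C_2\gg C_4$ is needed. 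The only cosmetic difference is that the paper writes the linear-operator piece explicitly as an integrated Wirtinger Hessian via the fundamental theorem of calculus before splitting it into the $\omega_1,\omega_2,\omega_3$ terms, whereas you describe it more abstractly; the content is the same.
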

\begin{proof}
        Please refer to Appendix \ref{sec:proof_leave_one} for details.
\end{proof}
The next lemma concerns the relation between $\bm{h}_i^t$ and $\bm{h}_{i}^{t,\sgn}$, i.e., (\ref{eq:sgn_h}), and the relation between  $\bm{x}_i^t$ and $\bm{x}_i^{t,\sgn}$, i.e.,  (\ref{eq:sgn_x}).
\begin{lemma}\label{lemma:xt-xt-sgn}Suppose the induction hypotheses
        (\ref{subeq:induction}) hold true up to the $t$-th iteration for
        some $t\leq T_{1}$ (\ref{eq:defn-T1}), then with probability at least $1-c_1m^{-\nu}-c_1me^{-c_2N}$ for some constants $\nu, c_1, c_2>0$,
\begin{subequations}
        \begin{align}
        &\max_{1\leq i\leq s}\dist{\paren{\bm{h}_i^{t+1,\mathrm{sgn}},\widetilde{\bm{h}}_i^{t+1}}}\notag\\
        \leq&\alpha_{{\bm{h}}_i^{t+1}}\left(1+\frac{1}{s\log m}\right)^{t+1}C_{3}\sqrt{\frac{s\mu^2\kappa^2K\log^{8}m}{m}}\\
        &\max_{1\leq i\leq s}\dist{\paren{\bm{x}_i^{t+1,\mathrm{sgn}},\widetilde{\bm{x}}_i^{t+1}}}\notag\\
        \leq&\alpha_{{\bm{x}}_i^{t+1}}\left(1+\frac{1}{s\log m}\right)^{t+1}C_{3}\sqrt{\frac{s\mu^2\kappa^2N\log^{8}m}{m}}
        \end{align}
\end{subequations}
                holds for some sufficiently large $C_{3}>0$,
        provided that $m\geq Cs\mu^2\kappa^2\max\{K,N\}\log^{8} m$ for some sufficiently large constant $C>0$ and the stepsize $\eta>0$ obeys $\eta\asymp s^{-1}$. \end{lemma}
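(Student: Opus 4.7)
The plan is to establish the claim by induction on $t$. Since $\bm{z}^{0,\mathrm{sgn}}=\bm{z}^{0}$ by the initialization convention (\ref{eq:seq_initial}), the base case is immediate, so I will focus on the inductive step, assuming (\ref{subeq:induction}) holds up to iteration $t\leq T_1$. First I would write the Wirtinger update (\ref{eq:update_finite}) for both the original sequence $\bm{z}_i^{t+1}$ and the random-sign sequence $\bm{z}_i^{t+1,\mathrm{sgn}}$ (the latter driven by the loss $f^{\mathrm{sgn}}$ in (\ref{eq:f-sgn-LOO})), then subtract them. Because of the identity (\ref{eq:by-construction}) $\bm{b}_j^{\mathrm{sgn}\,\mathsf{H}}\bar{\bm{h}}_i\bar{\bm{x}}_i^{\mathsf{H}}\bm{a}_{ij}^{\mathrm{sgn}}=\bm{b}_j^{\mathsf{H}}\bar{\bm{h}}_i\bar{\bm{x}}_i^{\mathsf{H}}\bm{a}_{ij}$, the ``ground-truth'' contributions cancel and only differences involving the current iterates remain. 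After picking the rotation $\omega_{i,\mathrm{sgn}}^{t+1}$ — which I can upper bound by plugging in the feasible choice $\omega_{i,\mathrm{sgn}}^{t}$ — controlling $\mathrm{dist}(\bm{z}_i^{t+1,\mathrm{sgn}},\widetilde{\bm{z}}_i^{t+1})$ reduces to bounding $\|\check{\bm{h}}_i^{t+1,\mathrm{sgn}}-\widetilde{\bm{h}}_i^{t+1}\|_2$ and $\|\check{\bm{x}}_i^{t+1,\mathrm{sgn}}-\widetilde{\bm{x}}_i^{t+1}\|_2$ in terms of the analogous quantities at time $t$.

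Next I would split the gradient difference into a contraction term and a perturbation term. The contraction term is a Hessian-like linear map applied to the previous discrepancy $\check{\bm{z}}_i^{t,\mathrm{sgn}}-\widetilde{\bm{z}}_i^{t}$; using the induction hypotheses (\ref{eq:induction-norm-size})--(\ref{eq:x-induction-norm-relative}) together with the incoherence bounds (\ref{eq:claim_1})--(\ref{eq:claim_2}) of Lemma \ref{lemma:consequence-main-text}, this map is a near-isometry with operator norm $1+O(\eta/s)$ provided $\eta\asymp s^{-1}$, matching the geometric factor $(1+1/(s\log m))^{t+1}$ in the claimed bound. The perturbation term collects the randomness of the auxiliary vectors $\bm{a}_{ij}^{\mathrm{sgn}}$ and $\bm{b}_j^{\mathrm{sgn}}$: the decisive point is that, conditional on $\{\bm{a}_{ij,\perp}\}$ and $\{\bm{b}_j\}$, the random phases $\xi_{ij}$ (cf.\ (\ref{eq:xi})) are independent of the iterates $\bm{z}_i^{t,\mathrm{sgn}}$ and of $\widetilde{\bm{z}}_i^{t}$, so a Bernstein/Hoeffding-type concentration yields a noise contribution of the order $\alpha_{\bm{h}_i^{t}}\sqrt{s\mu^2\kappa^2 K\log^8 m/m}$ (and its $N$-counterpart for the $\bm{x}$-block) with high probability, after taking a union bound over $1\le i\le s$.

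Combining the two parts produces the recursion
\[
\mathrm{dist}(\bm{z}_i^{t+1,\mathrm{sgn}},\widetilde{\bm{z}}_i^{t+1})\;\le\;\Bigl(1+\frac{c}{s\log m}\Bigr)\mathrm{dist}(\bm{z}_i^{t,\mathrm{sgn}},\widetilde{\bm{z}}_i^{t})\;+\;\text{(noise term)},
\]
and unrolling this inequality, together with the approximate-state-evolution fact (Lemma \ref{lemma:state_evo}) that $\alpha_{\bm{h}_i^{t+1}}\gtrsim(1-\eta)\alpha_{\bm{h}_i^{t}}$ and likewise for $\bm{x}$, yields the advertised bound provided $C_3$ is taken sufficiently large to absorb the accumulated noise. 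The sample size condition $m\geq C s\mu^2\kappa^2\max\{K,N\}\log^8 m$ enters exactly to make the noise term smaller than the target growth factor times the prior distance.

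The main obstacle I anticipate is the multi-user coupling: the gradient of $f$ with respect to $\bm{h}_i$ involves the sum $\sum_{k=1}^{s}\bm{b}_j^{\mathsf{H}}\bm{h}_k\bm{x}_k^{\mathsf{H}}\bm{a}_{kj}$, so the contraction step cannot be performed one user at a time but must be done jointly, and one must check that the cross-user Hessian blocks do not spoil the per-user near-isometry. This is where the factor $s$ in the sample complexity and the joint induction hypotheses (\ref{eq:h-induction-leave})--(\ref{eq:x-induction-norm-relative}) — in particular the uniform-in-$i$ bounds on $\|\bm{h}_i^t\|_2$, $\|\bm{x}_i^t\|_2$ and the leave-one-out closeness — are essential. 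A secondary subtlety is that the alignment parameters $\omega_{i,\mathrm{sgn}}^{t+1}$ may differ from $\omega_{i,\mathrm{sgn}}^{t}$; the remedy is the standard trick of upper bounding $\mathrm{dist}$ at time $t+1$ by evaluating the discrepancy under the older alignment, which only loses a harmless constant.
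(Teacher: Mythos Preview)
Your outline matches the paper's approach: the paper gives no explicit proof of this lemma but points (in the remark following Lemma~\ref{lemma:double-diff}) to the template argument for (\ref{eq:dist_l_x}) in Appendix~\ref{sec:proof_leave_one}, which is precisely the induction/Hessian-contraction-plus-perturbation scheme you describe, with the alignment handled by the ``old $\omega$'' trick.

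There is, however, one imprecision worth fixing. Your ``decisive point'' claims that, conditional on $\{\bm{a}_{ij,\perp}\}$ and $\{\bm{b}_j\}$, the phases $\xi_{ij}$ are independent of both $\widetilde{\bm{z}}_i^{t}$ and $\bm{z}_i^{t,\mathrm{sgn}}$. The first half is true (the original iterate is built from $\{\bm{a}_{ij},\bm{b}_j\}$, which are independent of $\{\xi_{ij}\}$ by construction), but the second half is not: $\bm{z}_i^{t,\mathrm{sgn}}$ is generated from $\bm{a}_{ij}^{\mathrm{sgn}}$ and $\bm{b}_j^{\mathrm{sgn}}$, both of which carry $\xi_{ij}$. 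The repair is either (i) to run the fundamental-theorem-of-calculus step with the Hessian of $f^{\mathrm{sgn}}$ so that the perturbation $\nabla_i f-\nabla_i f^{\mathrm{sgn}}$ is evaluated at $\widetilde{\bm{z}}^{t}$ (which \emph{is} independent of $\{\xi_{ij}\}$), or (ii) to use the rotational-invariance identity $a_{ij,1}=\xi_{ij}^{*}\,a_{ij,1}^{\mathrm{sgn}}$ together with the fact that $a_{ij,1}^{\mathrm{sgn}}=\xi_{ij}a_{ij,1}$ is independent of $\xi_{ij}$ for circularly symmetric Gaussians --- exactly the device the paper employs when bounding $J_{i3}$ in Appendix~\ref{sec:proof_state_evo}. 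Either route restores the Bernstein step and leaves the rest of your argument intact.
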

We still need to characterize the difference $\widetilde{\bm{h}}_{i}^{t}-\widehat{\bm{h}}_{i}^{t,\left(l\right)}-\widetilde{\bm{h}}_{i}^{t,\sgn}+\widehat{\bm{h}}_{i}^{t,,\sgn,\left(l\right)}$, i.e., (\ref{eq:h-induction-double}), and the difference $ \widetilde{\bm{x}}_{i}^{t}-\widehat{\bm{x}}_{i}^{t,\left(l\right)}-\widetilde{\bm{x}}_{i}^{t,\sgn}+\widehat{\bm{x}}_{i}^{t,\sgn,\left(l\right)}$, i.e.,  (\ref{eq:x-induction-double}), in the following lemma.
\begin{lemma}\label{lemma:double-diff}Suppose the induction hypotheses
        (\ref{subeq:induction}) hold true up to the $t$-th iteration for
        some $t\leq T_{1}$ (\ref{eq:defn-T1}), then with probability at least $1-c_1m^{-\nu}-c_1me^{-c_2N}$ for some constants $\nu, c_1, c_2>0$,
        \begin{subequations}
        \begin{align}
                &\max_{ 1\leq l\leq m}\left\Vert \widetilde{\bm{h}}_{i}^{t+1}-\widehat{\bm{h}}_{i}^{t+1,\left(l\right)}-\widetilde{\bm{h}}_{i}^{t+1,\sgn}+\widehat{\bm{h}}_{i}^{t+1,\sgn,\left(l\right)}\right\Vert _{2}\notag\\
                \leq&\alpha_{{\bm{h}}_i^{t+1}}\left(1+\frac{1}{s\log m}\right)^{t+1}C_{4}\frac{s\mu^2\sqrt{K\log^{16}m}}{m}\label{eq:h-l-sgn}
                        \\&\max_{ 1\leq l\leq m}\left\Vert \widetilde{\bm{x}}_{i}^{t+1,}-\widehat{\bm{x}}_{i}^{t+1,,\left(l\right)}-\widetilde{\bm{x}}_{i}^{t+1,\sgn}+\widehat{\bm{x}}_{i}^{t+1,\sgn,\left(l\right)}\right\Vert _{2}\notag\\
                \leq&\alpha_{{\bm{x}}_i^{t+1}}\left(1+\frac{1}{s\log m}\right)^{t+1}C_{4}\frac{s\mu^2\sqrt{N\log^{16}m}}{m}\label{eq:x-l-sgn}
                \end{align}
        \end{subequations}
        holds for some sufficiently large $C_{4}>0$, provided that $m\geq Cs\mu^2\max\{K,N\}\log^8 m$ for some sufficiently large constant $C>0$ and the stepsize $\eta>0$ obeys $\eta\asymp s^{-1}$. \end{lemma}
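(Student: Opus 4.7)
The plan is to close the induction step for both (\ref{eq:h-induction-double}) and (\ref{eq:x-induction-double}) simultaneously by tracking how a quadruple difference propagates through the Wirtinger flow update. Writing the update for each of the four sequences $\bm{z}^{t}, \bm{z}^{t,(l)}, \bm{z}^{t,\sgn}, \bm{z}^{t,\sgn,(l)}$ using (\ref{g1})--(\ref{eq:update role}), rescaling them by their respective alignment parameters $\omega_i^t$, $\omega_{i,\mathrm{mutual}}^{t,(l)}$, $\omega_{i,\sgn}^t$, and the analogous parameter for the doubly-modified sequence, and then forming the combination $\widetilde{\bm{h}}_i^{t+1}-\widehat{\bm{h}}_i^{t+1,(l)}-\widetilde{\bm{h}}_i^{t+1,\sgn}+\widehat{\bm{h}}_i^{t+1,\sgn,(l)}$ turns the bound into an inequality of the form $\|\text{double diff}^{t+1}\|_2 \le (1-\eta+\text{small})\|\text{double diff}^{t}\|_2 + (\text{sample-}l\text{ perturbation})$.

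The first step is the \emph{contraction} analysis: using Lemma \ref{lemma:iterative} and Lemma \ref{lemma:state_evo}, in Phase 1 the approximate state evolution (\ref{eq:app_state}) holds with well-controlled perturbations, so that in the region spanned by the four iterates the Wirtinger Hessian is essentially diagonal-dominant with spectrum close to $\|\bar{\bm x}_i\|_2^2$ and $\|\bar{\bm h}_i\|_2^2$. Consequently the "main" diagonal piece of the four-way Jacobian acts as a contraction of strength $(1-\eta)$ up to a multiplicative factor at most $1+O(1/(s\log m))$, which precisely matches the $(1+1/(s\log m))^{t}$ prefactor on the right-hand side when combined with $\eta\asymp s^{-1}$.

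The second step is the \emph{perturbation} analysis: the residual terms are all of the form $\bm{a}_{il}\bm{a}_{il}^{\mathsf{H}}(\cdot)\bm{b}_l\bm{b}_l^{\mathsf{H}}$ or products of $\xi_{il}$ with the same, applied to differences already controlled at iteration $t$. I would split these residuals into (a) those already of second-order size, which are directly absorbed using the incoherence bounds (\ref{eq:claim_1})--(\ref{eq:claim_2}) from Lemma \ref{lemma:consequence-main-text} combined with (\ref{eq:h-induction-leave})--(\ref{eq:l_x}); and (b) the genuinely new single-differences $\widetilde{\bm{h}}_i^t-\widehat{\bm{h}}_i^{t,(l)}$ minus $\widetilde{\bm{h}}_i^{t,\sgn}-\widehat{\bm{h}}_i^{t,\sgn,(l)}$, which the induction hypothesis (\ref{eq:h-induction-double}) bounds. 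The algebraic cancellation that is crucial here is that the portion of the gradient independent of the sign $\xi_{il}$ contributes equally to the $(\text{original})-(\text{sign})$ pairs and therefore drops out; only the $\xi_{il}$-dependent cross-terms survive, and these carry an extra $\mu/\sqrt m\cdot\mathrm{polylog}(m)$ factor via (\ref{eq:wth}).

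The main obstacle will be producing the sharp $s\mu^2\sqrt{K\log^{16}m}/m$ (resp.\ $s\mu^2\sqrt{N\log^{16}m}/m$) order on the right-hand side rather than the naive $\sqrt{\mathrm{polylog}(m)/m}$ one would get from term-by-term triangle inequality. This requires using the random-sign trick in the style of the proof of Lemma \ref{lemma:xt-xt-sgn}: conditioning on $\{|a_{il,1}|,\bm{a}_{il,\perp}\}$ and using independence of $\xi_{il}$, combined with matrix Bernstein or a Hanson--Wright-type inequality, to convert one $1/\sqrt{m}$ factor into $\mathrm{polylog}(m)/\sqrt{m}$ and then absorbing the second $1/\sqrt{m}$ through the sample-complexity hypothesis $m\ge Cs\mu^2\max\{K,N\}\log^8m$. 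Once this concentration is in place, collecting the three types of contributions and choosing $C_4$ larger than the absolute constants produced by the concentration inequalities closes the induction with the stated probability $1-c_1m^{-\nu}-c_1me^{-c_2N}$ and completes the proof.
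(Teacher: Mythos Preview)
Your proposal is essentially correct and follows the same template the paper uses. The paper does not spell out the proof of Lemma \ref{lemma:double-diff} but, in the remark immediately following it, says the argument is analogous to that of Lemma \ref{lemma:xt-xt-l-signal} (Appendix \ref{sec:proof_leave_one}): write the update via the fundamental theorem of calculus as a Hessian integral plus a sample-$l$ perturbation, extract the $(1-\eta)$ contraction from the diagonal of $\nabla_{\bm{x}_i}^2 f$ (resp.\ $\nabla_{\bm{h}_i}^2 f$), and bound the remaining pieces through the incoherence estimates of Lemma \ref{lemma:consequence-main-text} and the standing induction hypotheses. Your contraction/perturbation split, the use of the approximate state evolution to supply the $(1+1/(s\log m))$ amplification factor, and the Bernstein-type concentration over $\xi_{il}$ all match that template; the one ingredient you add beyond the Lemma \ref{lemma:xt-xt-l-signal} argument---the observation that the $\xi_{il}$-independent portions of the $l$-th summand cancel pairwise between the original and sign sequences, leaving only the $\xi_{il}$-dependent cross-terms carrying the extra $\mu/\sqrt{m}\cdot\mathrm{polylog}(m)$ factor from (\ref{eq:wth})---is exactly the structural point that distinguishes the double-difference bound from the single-difference bounds and is what the paper implicitly relies on (and what \cite{chen2018}, which the paper cites for this family of lemmas, makes explicit).
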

\begin{remark}
The arguments applied to prove Lemma \ref{lemma:xt-xt-l}-Lemma \ref{lemma:double-diff} are similar to each other. We thus mainly focus on the proof of (\ref{eq:dist_l_x}) in Lemma \ref{lemma:xt-xt-l-signal} in Appendix \ref{sec:proof_leave_one}.
\end{remark}
\subsection{Establishing Approximate State Evolution for Phase 2 of Stage I}
In this subsection, we move to prove that the approximate state evolution (\ref{eq:app_state}) holds for $T_1< t\leq T_\gamma$ ($T_\gamma$ and $T_1$ are defined in (\ref{eq:def-T-gamma}) and (\ref{eq:defn-T1}) respectively) via inductive argument. Different from the analysis in Phase 1, only $\{\bm{z}^{t,(l)}\}$ is sufficient to establish the ``near-independence" between iterates and design vectors when the sizes of the signal component follow $\alpha_{\bm{h}_i^t}$, $\alpha_{\bm{x}_i}\gtrsim{1}/{\log m}$ in Phase 2 (according to the definition of $T_1$).
As in Phase 1, we begin with specifying the induction hypotheses: for $1\leq i \leq s$,
\begin{subequations}\label{eq:hypo}
\begin{align}
&\max_{ 1\leq l\leq m}\dist\paren{{\bm{z}}_i^{t,(l)},\widetilde{\bm{z}}_i^{t}}\notag\\
\leq&(\beta_{{\bm{h}}_i^{t}}+\beta_{{\bm{x}}_i^{t}})\left(1+\frac{1}{s\log m}\right)^{t}C_{6}\frac{ s\mu^2\kappa\sqrt{\max\{K,N\}\log^{18}m}}{m}\label{eq:h-induction-leave_2}\\
&       c_{5}\leq  \left\Vert \bm{h}_i^{t}\right\Vert _{2},\left\Vert \bm{x}_i^{t}\right\Vert _{2}\leq C_{5},\label{eq:induction-norm-size_2}
\end{align}
\end{subequations}
From (\ref{eq:hypo}), we can conclude that one has
\begin{align}
        \max_{1\leq i \leq s, 1\leq l\leq m}\left|\bm{a}_{il}^{\mathsf{H}}\wt{\bm{x}}_i^t \right|\cdot\|\wt{\bm{x}}_i^t\|_2^{-1}&\lesssim\sqrt{\log m},\label{eq:incohx}\\
                \max_{1\leq i \leq s, 1\leq l\leq m}\left|\bm{b}_{l}^{\mathsf{H}}\wt{\bm{h}}_i^t \right|\cdot\|\wt{\bm{h}}_i^t\|_2^{-1}&\lesssim\frac{\mu}{\sqrt{m}}\log^2m,\label{incohh}
\end{align}
with probability at least $1-c_1m^{-\nu}-c_1me^{-c_2N}$ for some constants $\nu, c_1, c_2>0$ during $T_1<t\leq T_\gamma$ as long as $m\gg  Cs\mu^2\kappa K\log^{8}m$.

We then move to prove that if the induction hypotheses (\ref{subeq:induction}) hold for the $t$-th iteration, then $\alpha_{\bm{h}_i}$ (\ref{eq:state-evolution-population-alphah}), $\beta_{\bm{h}_i}$ (\ref{eq:state-evolution-population-betah}), $\alpha_{\bm{x}_i}$ (\ref{eq:state-evolution-population-alphax}) and $\beta_{\bm{x}_i}$ (\ref{eq:state-evolution-population-betax}) obey the approximate state evolution (\ref{subeq:induction}). This is demonstrated in Lemma \ref{lemma:state_evo_2}.

\begin{lemma}\label{lemma:state_evo_2}Suppose $m\geq Cs^2\mu^2\kappa^4\max\{K,N\}\log^{12}m$
for some sufficiently large constant $C>0$. For any $T_{1}\leq t\leq T_{\gamma}$  ($T_1$ and $T_\gamma$ are defined in (\ref{eq:def-T-gamma}) and (\ref{eq:defn-T1}) respectively), if the $t$-th iterate satisfies
the induction hypotheses (\ref{subeq:induction}) , then for $i = 1,\cdots,s$, with probability
at least $1-c_1m^{-\nu}-c_1me^{-c_2N}$ for some constants $\nu, c_1, c_2>0$,
the approximate evolution state (\ref{eq:app_state}) hold for  some $|\psi_{\bm{h}_i^{t}}|,|\psi_{\bm{x}_i^{t}}|,| \varphi_{\bm{h}_i^{t}}|, |\varphi_{\bm{x}_i^{t}}| ,\absn{\rho_{\bm{h}_i^{t}}}, \absn{\rho_{\bm{x}_i^{t}}}\ll1/\log m$, $i = 1,\cdots,s$. \end{lemma}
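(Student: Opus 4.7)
The plan is to mirror the proof strategy of Lemma \ref{lemma:state_evo} for Phase 1, while exploiting the strictly larger signal component available in Phase 2, namely $\alpha_{\bm{h}_i^{t}},\alpha_{\bm{x}_i^{t}}\gtrsim 1/\log m$ by the definition of $T_1$ in (\ref{eq:defn-T1}). The key structural simplification is that this quantitative lower bound makes the random-sign auxiliary sequences $\{\bm{z}^{t,\sgn}\}$ and $\{\bm{z}^{t,\sgn,(l)}\}$ unnecessary: we only need the leave-one-out iterates $\{\bm{z}^{t,(l)}\}$ to decouple each sample from the current iterate.

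First, starting from the finite-sample update rule (\ref{eq:update_finite}) and matching it term by term against the population recursions (\ref{subeq:population-iterative_x})–(\ref{subeq:population-iterative_h}), I would express the six perturbation terms $\psi_{\bm{h}_i^{t}},\psi_{\bm{x}_i^{t}},\varphi_{\bm{h}_i^{t}},\varphi_{\bm{x}_i^{t}},\rho_{\bm{h}_i^{t}},\rho_{\bm{x}_i^{t}}$ in (\ref{eq:app_state}) as appropriately normalized projections of $\nabla_{\bm{h}_i} f(\bm{z}^t)-\nabla_{\bm{h}_i} F(\bm{z}^t)$ and $\nabla_{\bm{x}_i} f(\bm{z}^t)-\nabla_{\bm{x}_i} F(\bm{z}^t)$ onto the signal directions $\bar{\bm{h}}_i,\bar{\bm{x}}_i$ and their orthogonal complements. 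Because $\alpha_{\bm{h}_i^{t}}$ and $\alpha_{\bm{x}_i^{t}}$ appear in the denominators, it is exactly at this step that the hypothesis $t\geq T_1$ is used to prevent blow-up.

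Next, to control each projection, I would replace $\bm{z}_i^t$ by $\bm{z}_i^{t,(l)}$ inside the $l$-th summand of $\nabla_{\bm{h}_i} f - \nabla_{\bm{h}_i} F$ (and similarly for $\bm{x}_i$). The hypothesis (\ref{eq:h-induction-leave_2}) guarantees that $\bm{z}_i^{t}$ and $\bm{z}_i^{t,(l)}$ are quantitatively close, so one obtains a decomposition into (i) a main term that is a sum of i.i.d.\ mean-zero contributions independent of $(\bm{a}_{il},\bm{b}_l)$, amenable to Bernstein/Hoeffding-type concentration; (ii) a residual term whose magnitude is bounded by the leave-one-out discrepancy multiplied by standard Gaussian/Fourier factors. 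The residual is controlled by combining (\ref{eq:h-induction-leave_2}) with the incoherence bounds (\ref{eq:incohx})–(\ref{incohh}) and the norm bounds (\ref{eq:induction-norm-size_2}), both of which have already been established in Phase 2.

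The main obstacle is quantitative: the target bound $o(1/\log m)$ is tight, so every factor of $\sqrt{\log m}$ coming from Gaussian tails and of $\log^{2}m$ coming from the DFT-row incoherence must be carefully tracked. In particular, terms of the form $\bm{b}_l^{\mathsf{H}}\wt{\bm{h}}_i\,\bm{a}_{il}^{\mathsf{H}}\wt{\bm{x}}_i$ accumulate up to $\mathrm{poly}\log m$ factors, and this is precisely what forces the enlarged sample complexity $m\geq Cs^2\mu^2\kappa^4\max\{K,N\}\log^{12}m$ compared to the Phase 1 requirement. Once each of the six perturbation terms is shown to be $\lesssim \mathrm{poly}\log(m)\cdot \sqrt{s^2\mu^2\kappa^4\max\{K,N\}/m}$ with the stated failure probability, taking a union bound over $1\leq i\leq s$ and $1\leq l\leq m$ (together with, if needed, a standard epsilon-net over the direction of $\bm{z}^{t,(l)}$) yields the claim $\max\{|\psi|,|\varphi|,|\rho|\}\ll 1/\log m$ uniformly for $T_1\leq t\leq T_\gamma$.
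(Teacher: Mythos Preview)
Your proposal is correct and matches the paper's approach. The paper does not spell out a detailed proof for this lemma; it only remarks that the argument is ``inspired by the arguments used in Section H and Section I in \cite{chen2018},'' and the surrounding discussion makes exactly the structural points you identify: in Phase~2 the lower bound $\alpha_{\bm{h}_i^t},\alpha_{\bm{x}_i^t}\gtrsim 1/\log^5 m$ (from the definition of $T_1$) renders the random-sign sequences unnecessary, so that the simpler hypotheses (\ref{eq:hypo}) and the incoherence consequences (\ref{eq:incohx})--(\ref{incohh}) suffice to run the same decomposition-and-concentration scheme as in Appendix~\ref{sec:proof_state_evo}. One minor note: the lemma statement as written references the Phase~1 hypotheses (\ref{subeq:induction}), but---consistent with the paper's own setup immediately preceding the lemma---you are right to work under the Phase~2 hypotheses (\ref{eq:hypo}); this appears to be a typo in the statement.
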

It remains to proof the induction step on the difference between leave-one-out sequences $\{\bm{z}^{t,(l)}\}$ and the original sequences $\{\bm{z}^t\}$, which is demonstrated in the following lemma.
\begin{lemma}\label{lemma:xt-xt-l_2}Suppose the induction hypotheses
        (\ref{subeq:induction}) are valid during Phase 1 and the induction hypotheses (\ref{eq:hypo}) hold true from $T_1$-th to the $t$-th for some $t\leq T_\gamma$ (\ref{eq:def-T-gamma}), then for $i = 1,\cdots,s$, with probability at least $1-c_1m^{-\nu}-c_1me^{-c_2N}$ for some constants $\nu, c_1, c_2>0$,
        \begin{align}
        &\max_{ 1\leq l\leq m}\dist\paren{{\bm{z}}_i^{t,(l)},\widetilde{\bm{z}}_i^{t}}\notag\\
        \leq&(\beta_{{\bm{h}}_i^{t+1}}+\beta_{{\bm{x}}_i^{t+1}})\left(1+\frac{1}{s\log m}\right)^{t+1}C_{6}\frac{ s\mu^2\kappa\sqrt{K\log^{18}m}}{m}
        \end{align}
        holds $m\geq Cs\mu^2\kappa{K\log^{8}m}$ with
        some sufficiently large constant $C>0$ as long as the stepsize $\eta>0$ obeys $\eta\asymp s^{-1}$ and $C_{6}>0$
        is sufficiently large. \end{lemma}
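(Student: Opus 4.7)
The plan is to prove Lemma \ref{lemma:xt-xt-l_2} by induction on $t\in[T_{1},T_{\gamma}]$. For the base case $t=T_{1}$, the Phase 1 bound (\ref{eq:h-induction-leave}) already yields the desired inequality, provided $C_{6}$ is chosen large enough relative to $C_{1}$ to absorb the gap between $\sqrt{\log^{8}m}$ and $\sqrt{\log^{18}m}$. For the inductive step, I would align $\bm{z}_{i}^{t+1,(l)}$ to $\widetilde{\bm{z}}_{i}^{t+1}$ via the mutual alignment parameter $\omega_{i,\text{mutual}}^{t+1,(l)}$ and decompose the discrepancy $\widetilde{\bm{z}}_{i}^{t+1}-\widehat{\bm{z}}_{i}^{t+1,(l)}$ into three contributions: (i) a linearized gradient difference evaluated on the shared $m-1$ samples; (ii) the gradient contribution from the $l$-th sample, which is present only in the update of $\widetilde{\bm{z}}_{i}^{t+1}$; and (iii) cross terms due to the mismatch between the rescaling factors $1/\|\bm{x}_{i}^{t}\|_{2}^{2}$ and $1/\|\bm{x}_{i}^{t,(l)}\|_{2}^{2}$ (and the analogous factors for $\bm{h}_{i}$).

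For (i), I would invoke the local strong convexity and smoothness of the Wirtinger loss established in \cite{dong2018}, together with uniform concentration of the Hessian over the region carved out by the hypotheses (\ref{eq:hypo}), to obtain a one-step contraction $\|\cdot\|_{2}\le(1-c\eta/\kappa)\,\dist(\bm{z}_{i}^{t,(l)},\widetilde{\bm{z}}_{i}^{t})$; with $\eta\asymp s^{-1}$ this per-step contraction dominates the tiny growth factor $1+1/(s\log m)$ that accumulates in the inductive hypothesis. For (ii), the statistical independence between $\bm{z}_{i}^{t,(l)}$ and $(\bm{a}_{il},\bm{b}_{l})$ built into the leave-one-out sequence allows a direct application of the incoherence estimates (\ref{eq:incohx})–(\ref{incohh}), giving $|\bm{b}_{l}^{\mathsf{H}}\widehat{\bm{h}}_{i}^{t,(l)}|\lesssim(\mu/\sqrt{m})\log^{2}m$ and $|\bm{a}_{il}^{\mathsf{H}}\widehat{\bm{x}}_{i}^{t,(l)}|\lesssim\sqrt{\log m}$; combined with the norm control (\ref{eq:induction-norm-size_2}) this produces a residual of order $\eta\,\mu^{2}\kappa\sqrt{\max\{K,N\}}\,(\log m)^{O(1)}/m$ per iteration. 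Term (iii) is then a higher-order perturbation: the hypothesis already bounds $\|\bm{z}_{i}^{t}-\widehat{\bm{z}}_{i}^{t,(l)}\|_{2}$, and together with the norm bounds (\ref{eq:induction-norm-size_2}) this term is dominated by (i) and (ii).

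Collecting the three contributions and multiplying by $1+1/(s\log m)$ gives the inductive bound at step $t+1$, where the sample complexity $m\gtrsim Cs\mu^{2}\kappa K\log^{8}m$ is precisely what is needed for the residual from (ii) to fit under the target scale $(\beta_{\bm{h}_{i}^{t+1}}+\beta_{\bm{x}_{i}^{t+1}})\,C_{6}\,s\mu^{2}\kappa\sqrt{K\log^{18}m}/m$. The main obstacle is the reconciliation in (i): on one hand the local contraction factor $1-c\eta/\kappa$ must be shown to hold uniformly in $t$ throughout Phase 2 despite the $\beta$ components shrinking geometrically, which in turn deforms the conditioning of the local Hessian; on the other hand this contraction must tightly balance the exponential decay of $\beta_{\bm{h}_{i}^{t+1}}+\beta_{\bm{x}_{i}^{t+1}}$ appearing on the right-hand side of the hypothesis, since otherwise the accumulated growth factor $(1+1/(s\log m))^{t+1}$ would eventually overwhelm the shrinking target and break the induction. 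Threading this balance using the quantitative local geometry from \cite{dong2018} is the critical step and dictates both the choice of $C_{6}$ and the polynomial $\log$ exponent $18$ appearing in the statement.
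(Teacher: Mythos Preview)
Your inductive framework and the three-part decomposition of $\widetilde{\bm{z}}_{i}^{t+1}-\widehat{\bm{z}}_{i}^{t+1,(l)}$ into (i) the shared-sample linearized gradient difference, (ii) the $l$-th sample contribution, and (iii) the rescaling mismatch is exactly the leave-one-out scheme the paper uses (cf.\ Appendix~\ref{sec:proof_leave_one} for the Phase~1 analogue, and the remark that Lemma~\ref{lemma:xt-xt-l_2} follows \cite[Sections~H--I]{chen2018}). Your treatment of (ii) and (iii) is fine.

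The gap is in (i). The local strong convexity and smoothness of \cite{dong2018} is a Stage~II tool: it requires $\mathrm{dist}(\bm{z}^{t},\bar{\bm{z}})\leq\gamma$, which by the very definition (\ref{eq:def-T-gamma}) of $T_{\gamma}$ is \emph{not yet available} for $T_{1}<t\leq T_{\gamma}$. Worse, even if a contraction $1-c\eta/\kappa$ held, it would not close the induction when $\kappa>1$: since $\beta_{\bm{h}_i^{t+1}}+\beta_{\bm{x}_i^{t+1}}=(1-\eta+o(1))(\beta_{\bm{h}_i^{t}}+\beta_{\bm{x}_i^{t}})$ by (\ref{eq:app_state}), you would need $1-c\eta/\kappa\leq(1-\eta)(1+\tfrac{1}{s\log m})$, i.e.\ $\eta(1-c/\kappa)\lesssim 1/(s\log m)$, which fails for $\eta\asymp s^{-1}$ and large $m$.

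The correct mechanism for (i), as in the paper, is the Hessian concentration of Lemma~\ref{lemma:Hessian-UB-Stage1}, which is valid throughout Phase~2 because the incoherence conditions (\ref{eq:incohx})--(\ref{incohh}) hold there. Linearizing via the fundamental theorem of calculus as in Appendix~\ref{sec:proof_leave_one}, the population Hessian yields a factor
\[
1-\eta\,\frac{\int_{0}^{1}\|\bm{h}_{i}(\tau)\|_{2}^{2}\,\mathrm{d}\tau}{\|\widetilde{\bm{h}}_{i}^{t}\|_{2}^{2}}+O\!\left(\sqrt{\tfrac{s^{2}\mu^{2}K\log m}{m}}\right)=1-\eta+o(1)
\]
on $\widetilde{\bm{z}}_{i}^{t}-\widehat{\bm{z}}_{i}^{t,(l)}$, which \emph{matches} rather than dominates the decay rate of $\beta_{\bm{h}_i^{t}}+\beta_{\bm{x}_i^{t}}$. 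The role of the factor $(1+\tfrac{1}{s\log m})$ is then the opposite of what you describe: it is not an adversary to be beaten by contraction, but precisely the slack that absorbs the per-step residual (ii), using that $\beta_{\bm{h}_i^{t}}+\beta_{\bm{x}_i^{t}}\geq c_{5}q_{i}$ stays bounded below throughout Stage~I by Lemma~\ref{lemma:iterative}. This balance, not strong convexity, is what fixes $C_{6}$ and the $\log^{18}m$ exponent.
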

\begin{remark}
The proof of Lemma \ref{lemma:state_evo_2} and Lemma \ref{lemma:xt-xt-l_2} is inspired by the arguments used in Section H and Section I in \cite{chen2018}.
\end{remark}
\subsection{Proof for Claims (\ref{eq:exp_ratio}) and (\ref{rmse})}\label{sec:stage}
Combining the analyses in {Phase 1} and {Phase 2}, we complete
the proof for claims (\ref{eq:exp_ratio}) with $0\leq t\leq T_{\gamma}$ (\ref{eq:def-T-gamma}).
Consider the definition of $T_\gamma$ (\ref{eq:def-T-gamma}) and the incoherence between iterates and design vectors given in (\ref{eq:incohx}) and (\ref{incohh}), we arrive at 
\begin{align}
                \norm{\wt{\bm{x}}_i^{T_{\gamma}}-\bar{\bm{x}}_i}{2}&\leq \frac{\gamma}{\sqrt{2s}}\\
        \mathrm{dist}(\bm{z}^{T_{\gamma}},\bar{\bm{z}})&\leq \gamma\\
\mathrm{error}(\bm{\theta}^{T_{\gamma}},\bar{\bm{\theta}})&\leq \gamma\\
        \max_{1\leq i \leq s, 1\leq j\leq m}\left|\bm{a}_{ij}^{\mathsf{H}}\wt{\bm{x}}_i^{T_{\gamma}} \right|\cdot\|\wt{\bm{x}}_i^{T_{\gamma}}\|_2^{-1}&\lesssim\sqrt{\log m},\\
        \max_{1\leq i \leq s, 1\leq j\leq m}\left|\bm{b}_{j}^{\mathsf{H}}\wt{\bm{h}}_i^{T_{\gamma}} \right|\cdot\|\wt{\bm{h}}_i^{T_{\gamma}}\|_2^{-1}&\lesssim\frac{\mu}{\sqrt{m}}\log^2m,
\end{align}
which further implies that
\begin{align}
\max_{1\leq i \leq s, 1\leq j\leq m}\left|\bm{a}_{ij}^{\mathsf{H}}\left(\wt{\bm{x}}_i^{T_{\gamma}}-\bar{\bm{x}}_i\right)\right|\lesssim\frac{\gamma\sqrt{\log m}}{\sqrt{2s}},
\end{align}
based on the inductive hypothesis (\ref{eq:h-induction-leave_2}).
Based on these properties, we can exploit the techniques applied in \cite[Section IV]{ma2017implicit} and the triangle inequality
to prove that for $t\geq T_{\gamma}+1$,
\begin{align}
\mathrm{error}(\bm{\theta}^{t},\bar{\bm{\theta}})\leq&\text{dist}\left(\bm{x}^{t},\bar{\bm{x}}\right)\leq\text{dist}\left(\bm{z}^{t},\bar{\bm{z}}\right)\notag\\
\leq&\left(1-\frac{\eta}{16\kappa}\right)^{t-T_{\gamma}}\text{dist}\left(\bm{z}^{T_{\gamma}},\bar{\bm{z}}\right)\notag\\
\leq&\gamma\left(1-\frac{\eta}{16\kappa}\right)^{t-T_{\gamma}},\label{eq:state-2-dist}
\end{align}
where the stepsize $\eta>0$ obeys $\eta\asymp s^{-1}$ as long as $m\gg s^2\mu^2\kappa^4\max\{K,N\}\log^8m$.
It remains to prove the claim (\ref{eq:exp_ratio}) for Stage II.
Since we have already demonstrate that the ratio $\alpha_{\bm{h}_i^t}/\beta_{\bm{h}_i^t}$
increases exponentially fast in Stage I, there is
\[
\frac{\alpha_{\bm{h}_i^{T_1}}}{\beta_{\bm{h}_i^{T_1}}}\geq\frac{1}{\sqrt{2K\log K}}(1+c_{3}\eta)^{T_{1}}.
\]
By the definition of $T_{1}$ (see \eqref{eq:defn-T1}) and Lemma
\ref{lemma:iterative}, one has $\alpha_{\bm{h}_i^{T_1}}\asymp\beta_{\bm{h}_i^{T_1}}\asymp1$
and thus
\begin{equation}
\frac{\alpha_{\bm{h}_i^{T_1}}}{\beta_{\bm{h}_i^{T_1}}}\asymp1.\label{eq:SNR-T1}
\end{equation}
When it comes to $t>T_{\gamma}$, based on (\ref{eq:state-2-dist}),
we have
\begin{align*}
\frac{\alpha_{\bm{h}_i^{t}}}{\beta_{\bm{h}_i^{t}}} & \geq \frac{1- \text{dist}(\bm{h}_i^{t},\bar{\bm{h}}_i)}{\text{dist}(\bm{h}_i^{t},\bar{\bm{h}}_i)} \geq \frac{1- \text{dist}\left(\bm{z}^{t},\bar{\bm{z}}\right)}{\text{dist}\left(\bm{z}^{t},
        \bar{\bm{z}}\right)}  \\
& \geq\frac{1-\gamma/\sqrt{2}}{\gamma/\sqrt{2}}\left(1-\frac{\eta}{16\kappa}\right)^{t-T_{\gamma}}\overset{(\text{i})}{\asymp}\frac{\alpha_{\bm{h}_i^{T_1}}}{\beta_{\bm{h}_i^{T_1}}}\left(1-\frac{\eta}{16\kappa}\right)^{t-T_{\gamma}}\\
& \gtrsim\frac{1}{\sqrt{K\log K}}\left(1+c_{3}\eta\right)^{T_{1}}\left(1-\frac{\eta}{16\kappa}\right)^{t-T_{\gamma}}\\
& \overset{(\text{ii})}{ \gtrsim}\frac{1}{\sqrt{K\log K}}\left(1+c_{3}\eta\right)^{T_{\gamma}}\left(1-\frac{\eta}{16\kappa}\right)^{t-T_{\gamma}}\\
& \gtrsim\frac{1}{\sqrt{K\log K}}\left(1+c_{3}\eta\right)^{t},
\end{align*}
where (i) is derived from (\ref{eq:SNR-T1}) and the fact that $\gamma$
is a constant, (ii) arises from $T_{\gamma}-T_{1}\asymp s^{-1}$ based on Lemma \ref{lemma:iterative}, and the last inequality is satisfied as long as
$c_{3}>0$ and $\eta\asymp s^{-1}$. Likewise, we can apply the same arguments to the ratio $\alpha_{\bm{x}_i^t}/\beta_{\bm{x}_i^t}$, thereby concluding that 
\begin{align}\label{ratioab}
\frac{\alpha_{\bm{x}_i^t}}{\beta_{\bm{x}_i^t}}\gtrsim \frac{1}{\sqrt{N\log N}}\left(1+c_{4}\eta\right)^{t}.
\end{align}
Claim (\ref{rmse}) can be further derived via combining the inequality ${\rm{RMSE}}(\bm{x}_i^{t},\bar{\bm{x}}_i)  = {\beta_{\bm{x}_i^t}}/{\normn{\bm{x}_i^t}{2}}={\beta_{\bm{x}_i^t}}/{\sqrt{\alpha_{\bm{x}_i^t}^2+\beta_{\bm{x}_i^t}^2}}<\frac{ {\beta}_{\bm{x}_i^t}}{ {\alpha}_{\bm{x}_i^t}}$ and the result in (\ref{ratioab}).
\section{Conclusion}
In this paper, we proposed a  blind over-the-air computation scheme to compute the desired function of distributed sensing data without the prior knowledge of the channel information, thereby providing low-latency data aggregation in IoT networks. To harness the benefits of computational efficiency, fast convergence guarantee, regularization-free and careful {\it{initialization-free}}, the BlairComp problem was solved by randomly initialized Wirtinger flow with provable guarantees. Specifically, the statistical guarantee and fast global convergence guarantee concerning randomly initialized Wirtinger flow for solving the BlairComp problem were  provided. It demonstrated that with sufficient samples,  in the first tens iterations, the randomly initialized Wirtinger flow enables the iterates to enter a local region that enjoys strong convexity and strong smoothness, where the estimation error is sufficiently small. At the second stage of this algorithm, the estimated error experiences exponential decay.
\appendices 
\section{Preliminaries}
For $\bm{a}_{ij}\in\mathbb{C}^N$, the standard concentration inequality gives that, for $i=1,\cdots,s$,
\begin{equation}
\max_{1\leq j\leq m}\left|a_{ij,1}\right|=\max_{1\leq j\leq m}\left|\bm{a}_{ij}^{\mathsf{H}} \bar{\bm{x}}\right|\leq5\sqrt{\log m}\label{eq:max-a-i-1}
\end{equation}
with probability $1-\co\left(m^{-10}\right)$ \cite{ma2017implicit}. In addition, by applying the
standard concentration inequality, we arrive at, for $i = 1,\cdots, s$,
\begin{equation}
\max_{1\leq j\leq m}\left\Vert \bm{a}_{ij}\right\Vert _{2}\leq 3\sqrt{N}\label{eq:max-a-i-norm}
\end{equation}
with probability $1-C^\prime\exp\left(me^{-cK}\right)$ for some constants, $c,C^\prime>0$ \cite{ma2017implicit}.
\begin{lemma}\label{lemma:Hessian-UB-Stage1}
        Fix any constant $c_{0}>1$. 
        Define the population matrix $\nabla^{2}_{\bm{z}_i}F\left(\bm{z}\right)$ as
        \begin{small}
      \begin{align*}
      \left[~\begin{matrix}
      \norm{\bm{x}_i}{2}^2    \bm{I}_K &\bm{h}_i\bm{x}_i^{\mathsf{H}}-\bar{\bm{h}}_i\bar{\bm{x}}_i^{\mathsf{H}}&\bm{0}& \bar{\bm{h}}_i \bar{\bm{x}}_i^{ \top}\\
      \bm{x}_i\bm{h}_i^{\mathsf{H}}-\bar{\bm{x}}_i \bar{\bm{h}}_i^{\mathsf{H}} &  \norm{\bm{h}_i}{2}^2\bm{I}_K& \bar{\bm{x}}_i \bar{\bm{h}}_i^{ \top}&\bm{0}\\
      \bm{0}& \big(\bar{\bm{x}}_i \bar{\bm{h}}_i^{ \top}\big)^{\mathsf{H}}& \norm{\bm{x}_i}{2}^2\bm{I}_K & ({\bm{h}_i\bm{x}_i^{\mathsf{H}}-\bar{\bm{h}}_i\bar{\bm{x}}_i^{\mathsf{H}}})\mathsf{H}\\
      \big(\bar{\bm{h}}_i \bar{\bm{x}}_i^{ \top}\big)^{\mathsf{H}}& \bm{0} & ({\bm{x}_i\bm{h}_i^{\mathsf{H}}-\bar{\bm{x}}_i\bar{\bm{h}}_i^{\mathsf{H}}})\mathsf{H}& \norm{\bm{h}_i}{2}^2\bm{I}_K
      \end{matrix}~\right] 
      \end{align*}
        \end{small}    
        Suppose that $m>c_{1}s^2\mu^2K\log^{3}m$ for some sufficiently large constant
        $c_{1}>0$. Then with probability exceeding $1-\co\left(m^{-10}\right)$,
        \begin{align*}&\left\Vert \left(\bm{I}_{4K}-\eta\nabla^{2}f\left(\bm{z}\right)\right)-\left(\bm{I}_{4K}-\eta\nabla^{2}F\left(\bm{z}\right)\right) \right\Vert\\ \lesssim&\sqrt{\frac{s^2\mu^2K\log m}{m}}\max\left\{ \left\Vert \bm{z}\right\Vert _{2}^{2},1\right\} 
        \end{align*}
        \begin{align*}
        \text{and} \qquad \left\Vert \nabla^{2}f\left(\bm{z}\right)\right\Vert  & \leq 5\|\bm{z}\|_{2}^{2}+2
        \end{align*}
        hold simultaneously for all $\bm{z}$ obeying $\max_{1\leq i\leq s,1\leq l\leq m}\left|\bm{a}_{il}^{\mathsf{H}}\bm{x}_i\right|\cdot {\big\|\bm{x}_i\big\|_{2}}^{-1} \lesssim\sqrt{\log m}$ and $    \max_{1\leq i\leq s,1\leq l\leq m}\left|\bm{b}_{l}^{\mathsf{H}}\bm{h}_i\right|\cdot{\big\|\bm{h}_i\big\|_{2}}^{-1}\lesssim\frac{\mu}{\sqrt{m}}\log^2m$,
        provided that $0<\eta<\frac{c_{2}}{\max\{\left\Vert \bm{z}\right\Vert _{2}^{2},1\}}$
        for some sufficiently small constant $c_{2}>0$.\end{lemma}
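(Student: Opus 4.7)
The plan is to establish the two claimed bounds by first decomposing the empirical Hessian $\nabla^2 f(\bm{z})$ into a sum of independent, bounded random matrices indexed by the sample $j=1,\ldots,m$, and then applying matrix Bernstein. First I would differentiate the Wirtinger gradients (\ref{g1})--(\ref{g2}) a second time to obtain an explicit block expression for $\nabla^2 f(\bm{z})$: each sample $j$ contributes a structured low-rank matrix built from $\bm{b}_j\bm{b}_j^{\mathsf{H}}$, $\bm{a}_{ij}\bm{a}_{kj}^{\mathsf{H}}$ and their Wirtinger conjugates, with scalar weights of the form $\bm{b}_j^{\mathsf{H}}\bm{h}_k$, $\bm{x}_k^{\mathsf{H}}\bm{a}_{kj}$, or the residual $\sum_k\bm{b}_j^{\mathsf{H}}\bar{\bm{h}}_k\bar{\bm{x}}_k^{\mathsf{H}}\bm{a}_{kj}-y_j$. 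Averaging in $\{\bm{a}_{ij}\}$ via $\mathbb{E}[\bm{a}_{ij}\bm{a}_{ij}^{\mathsf{H}}]=\bm{I}_N$, and noting that $\sum_j \bm{b}_j\bm{b}_j^{\mathsf{H}} = (K/m)\bm{I}_K$ for the partial-DFT matrix $\bm{B}$, reproduces the displayed population Hessian $\nabla^2 F(\bm{z})$.

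Next, write $\nabla^2 f(\bm{z}) - \nabla^2 F(\bm{z}) = \sum_{j=1}^m \bm{W}_j$ with $\mathbb{E}[\bm{W}_j]=\bm{0}$ and the $\bm{W}_j$ independent across $j$ (for fixed $\bm{z}$). I would bound each summand uniformly by combining the two incoherence hypotheses: the condition $|\bm{a}_{il}^{\mathsf{H}}\bm{x}_i|\lesssim \|\bm{x}_i\|_2\sqrt{\log m}$ controls every inner product with $\bm{a}_{ij}$, while $|\bm{b}_l^{\mathsf{H}}\bm{h}_i|\lesssim \|\bm{h}_i\|_2 \mu\log^2 m/\sqrt{m}$ together with $\|\bm{b}_j\|_2=\sqrt{K/m}$ controls every $\bm{b}_j$-inner product; the crude Gaussian norm bound (\ref{eq:max-a-i-norm}) handles the remaining $\|\bm{a}_{ij}\|_2$ factor in a deterministic truncation step. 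These give a uniform almost-sure bound
\begin{equation*}
L \;\lesssim\; \frac{s\mu^2 K\log^3 m}{m}\,\max\{\|\bm{z}\|_2^2,1\}
\end{equation*}
and a matrix variance proxy $v\lesssim (s^2\mu^2 K/m)\log m\cdot\max\{\|\bm{z}\|_2^4,1\}$. The matrix Bernstein inequality then yields
\begin{equation*}
\Bigl\|\sum_j \bm{W}_j\Bigr\| \;\lesssim\; \sqrt{v\log m} + L\log m \;\lesssim\; \sqrt{\frac{s^2\mu^2 K\log m}{m}}\,\max\{\|\bm{z}\|_2^2,1\}
\end{equation*}
with probability $1-O(m^{-10})$ under the sample condition $m\geq c_1 s^2\mu^2 K\log^3 m$. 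Multiplying by $\eta$ gives the first claim.

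The second claim follows from the triangle inequality $\|\nabla^2 f(\bm{z})\|\leq \|\nabla^2 F(\bm{z})\|+\|\nabla^2 f(\bm{z})-\nabla^2 F(\bm{z})\|$. A direct block operator-norm computation on the displayed population Hessian, using $\|\bar{\bm{h}}_i\|_2=\|\bar{\bm{x}}_i\|_2\leq 1$ and that the two diagonal $\|\bm{x}_i\|_2^2\bm{I}_K$ and $\|\bm{h}_i\|_2^2\bm{I}_K$ blocks combine with the rank-one cross blocks $\bm{h}_i\bm{x}_i^{\mathsf{H}}-\bar{\bm{h}}_i\bar{\bm{x}}_i^{\mathsf{H}}$ into a matrix of spectral norm at most $4\|\bm{z}\|_2^2+2$, shows $\|\nabla^2 F(\bm{z})\|\leq 4\|\bm{z}\|_2^2+2$. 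Under the stated sample size, the deviation term from the first claim is $o(1)$ times $\max\{\|\bm{z}\|_2^2,1\}$, so the sum is at most $5\|\bm{z}\|_2^2+2$ as claimed.

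The main obstacle is the careful bookkeeping for the cross-user blocks $(i,k)$ with $i\neq k$ that arise from the inner sum $\sum_{k=1}^s \bm{b}_j^{\mathsf{H}}\bm{h}_k\bm{x}_k^{\mathsf{H}}\bm{a}_{kj}$ inside the residual: each sample produces $O(s^2)$ cross terms whose incoherence must be tracked simultaneously, and this is precisely the source of the $s^2$ factor in the sample-complexity hypothesis. Once the dependence on $s$ is correctly propagated into both $L$ and $v$, the Bernstein application and the triangle-inequality step are routine.
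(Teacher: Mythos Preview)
The paper states this lemma as a preliminary (Appendix~A) without supplying a proof, so there is no in-paper argument to compare against; it is an adaptation of Hessian-concentration results from the cited blind-deconvolution and demixing references. Your plan---write out the Wirtinger Hessian block by block, subtract the population version, control the deviation by matrix Bernstein, and then obtain the crude operator-norm bound by the triangle inequality---is the right template, and your remark about the $s^{2}$ cross-user terms being the source of the sample-complexity factor is accurate.

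The one genuine gap is uniformity in $\bm z$. You explicitly fix $\bm z$, form $\bm W_{j}=\bm W_{j}(\bm z)$, and invoke Bernstein; this gives, for each \emph{fixed} incoherent $\bm z$, a failure probability $O(m^{-10})$. But the lemma asserts a single event of that probability on which the bound holds \emph{simultaneously for all} $\bm z$ in the incoherence region, and this is precisely how it is used downstream (the paper applies it along the entire segment $\bm z(\tau)=\widetilde{\bm z}^{t}+\tau(\widehat{\bm z}^{t,(l)}-\widetilde{\bm z}^{t})$, $\tau\in[0,1]$, in Appendix~\ref{sec:proof_leave_one}). Having uniform bounds on $L$ and $v$ does not by itself upgrade a pointwise Bernstein tail to a uniform one, since different $\bm z$ produce different random sums $\sum_j\bm W_j(\bm z)$; and a naive $\varepsilon$-net over the incoherent set would cost extra polynomial factors in $K,N$ that the stated sample condition cannot absorb. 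The standard route in this line of work is instead to pull the $\bm z$-dependence outside the randomness: each block of $\nabla^{2}f(\bm z)-\nabla^{2}F(\bm z)$ is a (bi)linear form in $(\bm h_i,\bm x_i)$ whose ``coefficient'' is a random operator built only from $\{\bm b_j,\bm a_{ij}\}$, and one applies Bernstein to those finitely many $\bm z$-free operators (this is exactly what the block-wise Lemmas~\ref{lemma:concentration-identity-ii}--\ref{lemma:concentration-identity-bha} and \ref{lemma:concentration-U1_full}--\ref{lemma:concentration-U3_full} in the paper are for). The incoherence hypotheses then enter only as deterministic envelopes when the form is evaluated at a particular $\bm z$, which delivers the required uniformity for free. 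Once you reorganize the argument this way, the rest of your outline---including the second claim via $\|\nabla^{2}F(\bm z)\|\le 4\|\bm z\|_{2}^{2}+2$ plus an $o(1)\cdot\max\{\|\bm z\|_2^2,1\}$ deviation---goes through.
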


\section{Proof of Lemma \ref{lemma:state_evo}}\label{sec:proof_state_evo}
 According to the Wirtinger flow gradient update rule (\ref{g2}),
and the expression $\bm{a}_{kj}^{\mathsf{H}}\bm{x}_k^{t}=x_{k\|}^{t}{a_{kj,1}^*}+\bm{a}_{kj,\perp}^{\mathsf{H}}\bm{x}_{k\perp}^{t}$
and reformulate terms, we arrive at
\begin{align}
\wt{x}_{i1}^{t+1} & =\wt{x}_{i1}^{t}+ \eta^\prime J_{i1}-\eta^\prime  J_{i2}- \eta^\prime  J_{i3},
\end{align}
where
\begin{align*}
J_{i1} &= \sum_{j=1}^m\sum_{k= 1}^s \bar{\bm{h}}_k^{\mathsf{H}}\bm{b}_j\bm{b}_j^{\mathsf{H}}\wt{\bm{h}}_i^t{a_{kj,1}}^*q_ka_{ij,1},\\
J_{i2} & = \sum_{j=1}^m\sum_{k= 1}^s\wt{\bm{h}}_k^{t\,H}\bm{b}_j\bm{b}_j^{\mathsf{H}}\wt{\bm{h}}_i^t {a_{kj,1}}^*\wt{x}_{k\|}^{t}a_{ij,1},\\
J_{i3} & = \sum_{j=1}^{m}\sum_{k=1}^s\wt{\bm{h}}_k^{t\, H}\bm{b}_j\bm{b}_j^{\mathsf{H}}\wt{\bm{h}}_i^t\bm{a}_{kj,\perp}^{\mathsf{H}}\bm{x}_{i\perp}^ta_{ij,1},\\
\eta^\prime & = \eta/\|\wt{\bm{h}}_i^t\|_2^2.
\end{align*}
We will control the above three terms $J_{i1}$, $J_{i2}$ and 
$J_{i3}$ separately in the following.
\begin{itemize}
        \item  With regard to the first term $J_{i1}$, it has
        \begin{align}
        &\sum_{j=1}^m\sum_{k= 1}^sq_k \bar{\bm{h}}_k^{\mathsf{H}}\bm{b}_j\bm{b}_j^{\mathsf{H}}\wt{\bm{h}}_i^t
        {a_{kj,1}}^* \; a_{ij,1}\notag\\
        =&
        \sum_{k=1}^sq_k
        \bar{\bm{h}}_k^{\mathsf{H}}\paren{\sum_{j=1}^m
                {a_{kj,1}}^*\, a_{ij,1}\bm{b}_j\bm{b}_j^{\mathsf{H}}}\wt{\bm{h}}_i^t.\notag
        \end{align}
        According to Lemma \ref{lemma:concentration-identity-ii} and Lemma \ref{lemma:concentration-identity-ij}, there is 
        \begin{align}
        J_{i1} = q_i\bar{\bm{h}}_i^{\mathsf{H}}\wt{\bm{h}}_i^t + r_{1},
        \end{align}
        where the size of the remaining term $r_1$ satisfies
        \begin{align}
        \abs{r_1}\lesssim \sum_{k=1}^sq_k\bar{\bm{h}}_i^{\mathsf{H}}\wt{\bm{h}}_i^t\sqrt{\frac{K}{m}\log m}\lesssim \sqrt{\frac{s^2K}{m}\log m}\cdot\bar{\bm{h}}_i^{\mathsf{H}}\wt{\bm{h}}_i^t,
        \end{align}
        based on the fact that $\|\bar{\bm{h}}_k\|^2\lesssim1$ and $\|\wt{\bm{h}}_k^t\|^2\lesssim 1$ for $k=1,\cdots,s$.
        \item Similar to the first term, the term $J_{i2}$ can be represented as
       $
        J_{i2} = \norm{\wt{\bm{h}}_i^t}{2}^2\wt{x}_{i1}^{t} + r_{2},
        $
        where the term $r_{i2}$ obeys
        \begin{align}
        \abs{r_2}\lesssim\abs{\wt{x}_{i1}^{t}}\sum_{k=1}^s\wt{\bm{h}}_k^{t\,H}\wt{\bm{h}}_i^t\sqrt{\frac{K}{m}\log m}\lesssim \sqrt{\frac{s^2K}{m}\log m}\abs{\wt{x}_{i1}^{t}}.
        \end{align}
        \item For the last term $J_{i3}$, it follows that 
 \begin{align}\label{eq:J_i3}
        &\sum_{j=1}^{m}\sum_{k=1}^s\wt{\bm{h}}_k^{t\,H}\bm{b}_j\bm{b}_j^{\mathsf{H}}\wt{\bm{h}}_i^t\bm{a}_{kj,\perp}^{\mathsf{H}}\wt{\bm{x}}_{i\perp}^ta_{ij,1}\notag\\
         =& \sum_{k=1}^s\wt{\bm{h}}_k^{t\,H}\paren{\sum_{j=1}^ma_{ij,1}\bm{a}_{kj,\perp}^{\mathsf{H}}\bm{x}_{i\perp}^t\bm{b}_j\bm{b}_j^{\mathsf{H}}}\wt{\bm{h}}_i^t.
        \end{align}
        By exploiting the random-sign sequence $\left\{ \bm{x}_i^{t,\mathrm{sgn}}\right\} $,
        one can decompose 
        \begin{align}
        &\sum_{j=1}^ma_{ij,1}\bm{a}_{kj,\perp}^{\mathsf{H}}\wt{\bm{x}}_{i\perp}^t\bm{b}_j\bm{b}_j^{\mathsf{H}}
        =\sum_{j=1}^{m}a_{ij,1}\bm{a}_{kj\perp}^{\mathsf{H}}\check{\bm{x}}_{i\perp}^{t,\mathrm{sgn}}\bm{b}_j\bm{b}_j^{\mathsf{H}}+\notag\\&\qquad\sum_{j=1}^{m}a_{ij,1}\bm{a}_{kj,\perp}^{\mathsf{H}}\left(\wt{\bm{x}}_{i\perp}^{t}-\check{\bm{x}}_{i\perp}^{t,\mathrm{sgn}}\right)\bm{b}_j\bm{b}_j^{\mathsf{H}}.\label{eq:J2}
        \end{align}
Note
  that $a_{ij,1}\bm{a}_{kj\perp}^{\mathsf{H}}\check{\bm{x}}_{i\perp}^{t,\mathrm{sgn}}\bm{b}_j\bm{b}_j^{\mathsf{H}}$ in (\ref{eq:J2})
        is statistically independent of $\xi_{ij}$ (\ref{eq:xi}) and $\bm{b}_j^{\sgn}\bm{b}_j^{\sgn\,H} = \bm{b}_j\bm{b}_j^{\mathsf{H}} $.
Hence we can consider $\sum_{j=1}^{m}a_{ij,1}\bm{a}_{kj\perp}^{\mathsf{H}}\check{\bm{x}}_{i\perp}^{t,\mathrm{sgn}}\bm{b}_j\bm{b}_j^{\mathsf{H}}$
        as a weighted sum of the $\xi_{ij}$'s and exploit the Bernstein inequality to derive that
        \begin{align}
                &\norm{\sum_{j=1}^{m}\xi_{ij}\left(a_{ij,1}\bm{a}_{kj\perp}^{\mathsf{H}}\check{\bm{x}}_{i\perp}^{t,\mathrm{sgn}}\bm{b}_j\bm{b}_j^{\mathsf{H}}\right)}{}\notag\\
                \lesssim&\sqrt{V_{1}\log m}+B_{1}\log m\label{eq:bernstein-1}
                \end{align}

        with probability exceeding $1-\co\left(m^{-10}\right)$, where 
        \[
V_{1}:=\sum_{j=1}^{m}\left|a_{ij,1}\right|^{2}\abs{\bm{a}_{kj\perp}^{\mathsf{H}}\check{\bm{x}}_{i\perp}^{t,\mathrm{sgn}}}^2\abs{\bm{b}_j\bm{b}_j^{\mathsf{H}}}^2,\]
        \[ B_{1}:=\max_{1\leq j\leq m}|a_{ij,1}|\abs{\bm{a}_{kj\perp}^{\mathsf{H}}
                \check{\bm{x}}_{i\perp}^{t,\mathrm{sgn}}}\abs{\bm{b}_j\bm{b}_j^{\mathsf{H}}}.
        \]
In view of Lemma \ref{lemma:ai-uniform-concentration} and the incoherence
        condition (\ref{eq:x_sgn_perp}) to deduce
        that with probability at least $1-\co\left(m^{-10}\right)$, 
        \begin{align*}
        V_{1} & \lesssim\norm{\sum_{j=1}^{m}\left|a_{i,1}\right|^{2}\abs{\bm{a}_{kj\perp}^{\mathsf{H}}\check{\bm{x}}_{i\perp}^{t,\mathrm{sgn}}}^2\bm{b}_j\bm{b}_j^{\mathsf{H}}}{}\norm{\bm{b}_j}{2}^2\lesssim\frac{K}{m}\left\Vert \check{\bm{x}}_{i\perp}^{t,\mathrm{sgn}}\right\Vert _{2}^{2}
        \end{align*}
        with the proviso that $m\gg \max\{K,N\}\log^{3}m$. Furthermore, the incoherence
        condition (\ref{eq:x_sgn_perp}) together
        with the fact (\ref{eq:max-a-i-1}) implies that 
        \[
        B_{1}\lesssim\frac{K}{m}\log m\left\Vert \check{\bm{x}}_{i\perp}^{t,\mathrm{sgn}}\right\Vert _{2}.
        \]
        Substitute the bounds on $V_{1}$ and \textbf{$B_{1}$ }back to (\ref{eq:bernstein-1})
        to obtain 
        \begin{equation}
        \norm{\sum_{j=1}^{m}a_{ij,1}\bm{a}_{kj\perp}^{\mathsf{H}}\check{\bm{x}}_{i\perp}^{t,\mathrm{sgn}}\bm{b}_j\bm{b}_j^{\mathsf{H}}}{}\lesssim\sqrt{\frac{K\log m}{m}}\left\Vert \check{\bm{x}}_{i\perp}^{t,\mathrm{sgn}}\right\Vert _{2}\label{eq:J2-1}
        \end{equation}
        as long as $m\gtrsim K\log^{3}m$. In addition, we move to the second
        term on the right-hand side of (\ref{eq:J2}).
        Let $\bm{u} = {\sum_{j=1}^{m}a_{ij,1}\bm{a}_{kj}^{\mathsf{H}}}\bm{z}\bm{b}_j\bm{b}_j^{\mathsf{H}}$, where $\bm{z}\in\mathbb{C}^{N-1}$ is independent with $\{\bm{a}_{kj}\}$ and $\norm{\bm{z}}{2} = 1$. Hence, we have
        \begin{align}
        &\norm{\sum_{j=1}^{m}a_{ij,1}\bm{a}_{kj,\perp}^{\mathsf{H}}\left(\wt{\bm{x}}_{i\perp}^{t}-\check{\bm{x}}_{i\perp}^{t,\mathrm{sgn}}\right)\bm{b}_j\bm{b}_j^{\mathsf{H}}}{}\notag\\
        \leq&\left\Vert \bm{u}\right\Vert _{2}\left\Vert \wt{\bm{x}}_{i\perp}^{t}-\check{\bm{x}}_{i\perp}^{t,\mathrm{sgn}}\right\Vert _{2}\lesssim\sqrt{\frac{K\log m}{m}}\left\Vert \wt{\bm{x}}_{i\perp}^{t}-\check{\bm{x}}_{i\perp}^{t,\mathrm{sgn}}\right\Vert _{2},\label{eq:J2-2}
        \end{align}
        with probability exceeding $1-\co\left(m^{-10}\right)$, as long as
        that $m\gg K\log^{3}m$. Here, the last inequality of \eqref{eq:J2-2} comes from Lemma \ref{lemma:concentration-identity-i1j}. Substituting the above two bounds (\ref{eq:J2-1})
        and (\ref{eq:J2-2}) into (\ref{eq:J2}), it yields
        \begin{align}\label{eq:sub_norm}
&       \norm{\sum_{j=1}^ma_{ij,1}\bm{a}_{kj,\perp}^{\mathsf{H}} \wt{\bm{x}}_{i\perp}^t\bm{b}_j\bm{b}_j^{\mathsf{H}}}{}\notag\\
 \lesssim       &\sqrt{\frac{K\log m}{m}}\left\Vert \check{\bm{x}}_{i\perp}^{t,\mathrm{sgn}}\right\Vert _{2}+\sqrt{\frac{K\log m}{m}}\left\Vert \wt{\bm{x}}_{i\perp}^{t}-\check{\bm{x}}_{i\perp}^{t,\mathrm{sgn}}\right\Vert _{2}.
        \end{align}
        Combining (\ref{eq:J_i3}) and (\ref{eq:sub_norm}), we arrive at
        \begin{align}
&       \abs{J_{i3}}\notag\\
        {\lesssim}&\sqrt{\frac{s^2K\log m}{m}}\left\Vert \wt{\bm{x}}_{i\perp}^{t}\right\Vert _{2}+\sqrt{\frac{s^2K\log m}{m}}\left\Vert \wt{\bm{x}}_{i\perp}^{t}-\check{\bm{x}}_{i\perp}^{t,\mathrm{sgn}}\right\Vert _{2},
        \end{align}
by exploiting the fact that $\|\wt{\bm{h}}_k^t\|^2\lesssim 1$ for $k=1,\cdots,s$ and
        the triangle inequality $\left\Vert \check{\bm{x}}_{i\perp}^{t,\mathrm{sgn}}\right\Vert _{2}\leq\left\Vert \wt{\bm{x}}_{i\perp}^{t}\right\Vert _{2}+\left\Vert \wt{\bm{x}}_{i\perp}^{t}-\check{\bm{x}}_{i\perp}^{t,\mathrm{sgn}}\right\Vert _{2}$.
        \item  Collecting the bounds for $J_{i1}$, $J_{i2}$ and $J_{i3}$, we arrive at 
        \begin{align}\label{eq:x-t+1-signal}
        \wt{x}_{i1}^{t+1}  &=\wt{x}_{i1}^{t}+ \eta^\prime J_{i1}-\eta^\prime  J_{i2}- \eta^\prime  J_{i3}\notag\\
        & = \wt{x}_{i1}^{t}+\eta q_i\bar{\bm{h}}_i^{\mathsf{H}}\bm{h}_i^t/\|\wt{\bm{h}}_i^t\|_2^2 - \eta \wt{x}_{i1}^{t}+R\notag\\
        & = \left(1-\eta\right) {x}_{i1}^{t}+\eta q_i{\bar{\bm{h}}}_i^{\mathsf{H}}\bm{h}_i^t/\|\wt{\bm{h}}_i^t\|_2^2+R,
        \end{align}
        where the residual term $R$ follows that
        \begin{align}
        \abs{R}\lesssim &\frac{\eta}{\|\wt{\bm{h}}_i^t\|_2^2} \sqrt{\frac{s^2K}{m}\log m}\Big(bar{\bm{h}}_i^{\mathsf{H}}\bm{h}_i^t+\abs{\wt{x}_{i1}^{t}}+\left\Vert \wt{\bm{x}}_{i\perp}^{t}\right\Vert _{2}\notag\\
        &+\left\Vert \wt{\bm{x}}_{i\perp}^{t}-\check{\bm{x}}_{i\perp}^{t,\mathrm{sgn}}\right\Vert _{2}\Big).
        \end{align}
        Substituting the hypotheses (\ref{subeq:induction}) into (\ref{eq:x-t+1-signal})
        and in view of the fact $\alpha_{\bm{x}_i^{t}}=\langle\bm{x}^{t},\bar{\bm{x}}\rangle/\normn{\bar{\bm{x}}_i}{2}$ and the assumption that $\normn{\bar{\bm{h}}_i}{2} = \normn{\bar{\bm{x}}_i}{2} = q_i$ for $i = 1,\cdots, s$, one has
        \begin{small}
        \begin{align}
                &\alpha_{\bm{x}_i^{t+1}} \notag\\& =\left(1-\eta\right) \alpha_{\bm{x}_i^{t}}+\eta^{\prime\prime }q_i\bar{\bm{h}}_i^{\mathsf{H}} \wt{\bm{h}}_i^t+\co\left(\eta^{\prime\prime } \sqrt{\frac{s^2K}{m}\log m}\alpha_{\bm{x}_i^{t}}\right)\notag\\
                &+\co\left(\eta^{\prime\prime }  \sqrt{\frac{s^2K}{m}\log m}\beta_{\bm{x}_i^{t}}\right) +\co\paren{\eta^{\prime\prime } \sqrt{\frac{s^2K}{m}\log m}\cdot{\alpha}_{{\bm{h}}_i^{t}}}\notag\\
                &+\co\left(\eta^{\prime\prime } \alpha_{\bm{x}_i^{t}}\left(1+\frac{1}{s\log m}\right)^{t}C_{3}\sqrt{\frac{s\mu^2N\log^{8}m}{m}}\right)\nonumber \\
                & =(1-\eta+\frac{\eta q_i \psi_{{\bm{x}}_i^{t}}}{{\alpha}_{{\bm{x}}_i^{t}}^2+{\beta}_{\bm{x}_i^{t}}^2}) \alpha_{\bm{x}_i^{t}}+\eta(1-\rho_{\bm{x}_i^{t}})\frac{{q_i\alpha}_{{\bm{h}}_i^{t}}}{{\alpha}_{{\bm{h}}_i^{t}}^2+{\beta}_{{\bm{h}}_i^{t}}^2},\label{eq:alpha-t-iterative}
                \end{align}
        \end{small}
        where $\eta^{\prime\prime } = \eta/(q_i\normn{{\bm{h}}_i^{t}}{2}^2)$, for some $|\psi_{\bm{x}_i^{t}}|,|\rho_{\bm{x}_i^{t}}|\ll\frac{1}{\log m}$, provided that \begin{subequations}
                \begin{align}
        &       \sqrt{\frac{s^2K\log m}{q_i^2m}}  \ll\frac{q_i}{\log m},\label{eq:alpha-t-as-long-as-1}\\
        &       \sqrt{\frac{s^2K\log m}{q_i^2m}}\beta_{\bm{x}_i^{t}}  \ll\frac{q_i}{\log m}\alpha_{\bm{x}_i^{t}},\label{eq:alpha-t-as-long-as-2}\\
        &       \left(1+\frac{1}{s\log m}\right)^{t}C_{3}\sqrt{\frac{s\mu^2N\log^{8}m}{q_i^2m}} \ll\frac{q_i}{\log m},\label{eq:alpha-t-as-long-as-3}
                \end{align}
                where the parameter $q_i$ is assumed to be $0<q_i\leq 1$.
        \end{subequations}Therein, the first condition (\ref{eq:alpha-t-as-long-as-1})
        naturally holds as long as $m\gg s^2K\log^{3}m$. In addition,
        the second condition (\ref{eq:alpha-t-as-long-as-2}) holds true since
        $\beta_{\bm{x}_i^{t}}\leq\|\bm{x}_i^{t}\|_{2}\lesssim\alpha_{\bm{x}_i^{t}}\sqrt{\log^5 m}$
         (based on (\ref{eq:x-induction-norm-relative}))
        and $m\gg s^2K\log^{8}m$. For the last condition (\ref{eq:alpha-t-as-long-as-3}),
        we have for $t\leq T_{1}=\co\left(s\log \max\{K,N\}\right)$, 
        \[
        \left(1+\frac{1}{s\log m}\right)^{t}=\co\left(1\right),
        \]
        which further implies 
\begin{align*}
&\left(1+\frac{1}{s\log m}\right)^{t}C_{3}\sqrt{\frac{s\mu^2N\log^{8}m}{q_i^2m}}\notag\\
\lesssim &C_{3}\sqrt{\frac{s\mu^2N\log^{8}m}{q_i^2m}}\ll\frac{q_i}{\log m}
\end{align*}

        as long as the number of samples obeys $m\gg s\mu^2N\log^{10}m$. This concludes
        the proof. 
\end{itemize}

        Despite it turns to be more tedious when proving (\ref{eq:evo_alpha_H}), similar arguments used above can be applied to the proof of (\ref{eq:evo_alpha_H}).
        Specifically, according to the Wirtinger flow gradient update rule (\ref{g1}),
        the signal component $
        \langle\bar{\bm{h}}_i,\widetilde{\bm{h}}^t_i\rangle$ can be represented as follows 
        \begin{align*}
        &\bar{\bm{h}}_i^{\mathsf{H}}\wt{\bm{h}}_{i}^{t+1}\\
         = & \bar{\bm{h}}_i^{\mathsf{H}}\wt{\bm{h}}_{i}^{t} -\frac{\eta}{\|\wt{\bm{x}}_i^t\|_2^2}\sum_{j=1}^m\bigg(\sum_{k=1}^{ s}\bm{b}_j^{\mathsf{H}}\wt{\bm{h}}_k^t\wt{\bm{x}}_k^{t\,H}\bm{a}_{kj}-{y}_j\bigg)\bar{\bm{h}}_i^{\mathsf{H}}\bm{b}_j\bm{a}_{ij}^{\mathsf{H}}\wt{\bm{x}}_i^t.
        \end{align*}
        Expanding this expression using $\bm{a}_{kj}^{\mathsf{H}}\bm{x}_k^{t}=x_{k\|}^{t}{a_{kj,1}^*}+\bm{a}_{kj,\perp}^{\mathsf{H}}\bm{x}_{k\perp}^{t}$
        and rearranging terms, we are left with 
        \begin{align}
        \bar{\bm{h}}_i^{\mathsf{H}}\wt{\bm{h}}_{i}^{t+1} & =\bar{\bm{h}}_i^{\mathsf{H}}\wt{\bm{h}}_{i}^{t}-\eta_i^\prime L_{i1}+ \eta^\prime_i L_{i2}+ \eta_i^\prime L_{i3},
        \end{align}
        where
        \begin{align*}
        L_{i1}&= \sum_{j=1}^m\sum_{k= 1}^s \bar{\bm{h}}_i^{\mathsf{H}}\bm{b}_j\bm{b}_j^{\mathsf{H}}\wt{\bm{h}}_k^t\wt{\bm{x}}_k^{t\,H}\bm{a}_{kj}\bm{a}_{ij}^{\mathsf{H}}\bm{x}_i,\\
        L_{i2} & = \sum_{j=1}^m\sum_{k= 1}^s\bar{\bm{h}}_i^{\mathsf{H}}\bm{b}_j \bm{b}_j^{\mathsf{H}}\bar{\bm{h}}_k {a_{kj,1}}q_k
        {a_{ij,1}}*t\, \wt{x}_{i1}^{t},\\
        L_{i3} & = \sum_{j=1}^m\sum_{k= 1}^s\bar{\bm{h}}_i^{\mathsf{H}}\bm{b}_j \bm{b}_j^{\mathsf{H}}\bar{\bm{h}}_k\bm{a}_{ij,\perp}^{\mathsf{H}}\bm{x}_{i\perp}^ta_{kj,1}q_k,\\
        \eta_i^\prime &= \eta/\normn{\wt{\bm{x}}_i^t}{2}^2.
        \end{align*}
Here, $ L_{i1}$, $L_{i2}$ and $L_{i3}$ can be controlled via  the strategies  exploited to control $J_{i1}$, $J_{i2}$ and $J_{i3}$. The proof of (\ref{eq:evo_beta_X}) is based on similar arguments as above.

\section{Proof of (\ref{eq:dist_l_x}) in Lemma \ref{lemma:xt-xt-l-signal}}\label{sec:proof_leave_one}
By applying the arguments in \cite[Appendix F]{dong2018}, it yields that
 \begin{align}\label{78}
 &\mathrm{dist}\left(\bm{x}_i^{t+1,(l)},\widetilde{\bm{x}}_i^{t+1}\right)\notag\\
 \leq &\kappa\sqrt{\sum_{k = 1}^s \max\left\{\left|\frac{\omega_i^{t+1}}{\omega_i^t}\right|,\left|\frac{\omega_i^{t}}{\omega_i^{t+1}}\right|\right\}^2\|\bm{J}_k\|^2},
 \end{align}
 where $\omega_i^t$ is the alignment parameter and
\begin{align}
\bm{J}_k =  {{\omega_k^t}}\bm{x}_k^{t+1}- {{\omega_{k,\text{mutual}}^{t,(l)}}}\bm{x}_k^{t+1,(l)},
\end{align}
where $\omega_{k,\text{mutual}}^{t,(l)}$ is defined in (\ref{eq:mutual_seq}).
 According to (\ref{eq:def-xl}) and (\ref{eq:mutual_seq}), we arrive at
\begin{align}
& {\omega_i^t}{x}_{i1}^{t+1}-\omega_{i,\text{mutual}}^{t,(l)}{x}_{i1}^{t+1,(l)}\nonumber\\
=&\widetilde{{x}}_{i1}^{t}-\widehat{{x}}_{i1}^{t,(l)}-\eta^\prime\bm{e}_{1}^{\top} \paren{\nabla_{\bm{x}_i} f\left(\widetilde{\bm{z}}^{t}\right) - \nabla_{\bm{x}_i} f^{(l)}\big(\widehat{\bm{z}}_i^{t,(l)}\big)} \notag\\&-\eta^\prime \bigg(\sum_{k=1}^{ s}\widehat{\bm{h}}_i^{t,(l)H}\bm{b}_l\bm{a}_{kl}^{\mathsf{H}}\widehat{\bm{x}}_i^{t,(l)}-\bar{\bm{h}}_k^{\mathsf{H}}\bm{b}_l\bm{a}_{kl}^{\mathsf{H}}\bar{\bm{x}}_k\bigg)
\bm{b}_l^{\mathsf{H}}\widehat{\bm{h}}_i^{t,(l)}{a}_{il,1}\nonumber \label{eq:xt-xt-l-signal},
\end{align}
where the stepsize $\eta^\prime = \eta/\normn{\wt{\bm{h}}_i^t}{2}^2$.
It follows from the fundamental theorem of calculus \cite[Theorem 4.2]{pothoven2013real} that
\begin{align}
&\widetilde{{x}}_{i1}^{t+1}-\widehat{{x}}_{i1}^{t+1,(l)}\notag\\ =&\Brac{\widetilde{{x}}_{i1}^{t}-\widehat{{x}}_{i1}^{t,(l)}-\eta^\prime\paren{ \int_{0}^{1}\bm{e}_{1}^{\top}\nabla^{2}_{\bm{x}_i}f\left(\bm{z}\left(\tau\right)\right)\mathrm{d}\tau}\left[\begin{array}{c}
\widetilde{\bm{x}}_{i}^{t}-\widehat{\bm{x}}_{i}^{t,(l)} \\
\overline{\widetilde{\bm{x}}_{i}^{t}-\widehat{\bm{x}}_{i}^{t,(l)}}
\end{array}\right]} \notag\\
-&
\eta^\prime\brac{\bigg(\sum_{k=1}^{ s}\widehat{\bm{h}}_i^{t,(l)H}\bm{b}_l\bm{a}_{kl}^{\mathsf{H}}\widehat{\bm{x}}_i^{t,(l)}-
        \bar{\bm{h}}_k^{\mathsf{H}}\bm{b}_l\bm{a}_{kl}^{\mathsf{H}}\bar{\bm{x}}_k\bigg)\bm{b}_l^{\mathsf{H}}\widehat{\bm{h}}_i^{t,(l)}{a}_{il,1}},
\end{align}
where $\bm{z}\left(\tau\right)=\widetilde{\bm{z}}^{t}+\tau\left(\widehat{\bm{z}}^{t,\left(l\right)}-\widetilde{\bm{z}}^{t}\right)$ with $0\leq \tau\leq 1$ and the Wirtinger Hessian with respect to $\bm{x}_i$ is
\begin{align}\label{hessian}
\nabla^{2}_{\bm{x}_i}f\left(\bm{z}\right) = \left[~\begin{matrix}
\bm{D}&\bm{E}\\
\bm{E}^{\mathsf{H}}& {(\bm{D}^\mathsf{H})^\top}
\end{matrix}~\right],
\end{align}
with
\[
 \bm{D} = \sum_{j=1}^{m}|\bm{b}^{\mathsf{H}}_{j}\bm{h}_i|^2\bm{a}_{ij}\bm{a}_{ij}^{\mathsf{H}}, 
 \;
\bm{E} =\sum_{j=1}^m  \bm{b}_{j}\bm{b}_{j}^{\mathsf{H}}\bm{h}_i(\bm{a}_{ij}\bm{a}_{ij}^{\mathsf{H}}\bm{x}_i)^\top. 
\] 
\begin{itemize}
\item 
We begin by controlling the second term of (\ref{eq:xt-xt-l-signal}).
Based on (\ref{eq:wth}) and the hypothesis (\ref{eq:h-induction-leave}), we obtain 
\[      \max_{1\leq i \leq s, 1\leq l\leq m}\left|\bm{b}_{l}^{\mathsf{H}}\widehat{\bm{h}}_i^{t,(l)} \right|\cdot\|\widehat{\bm{h}}_i^{t,(l)}\|_2^{-1}\lesssim\frac{\mu}{\sqrt{m}}\log^2m.\]
Along with the standard concentration results 
$
\left|\bm{a}_{il}^{\mathsf{H}}\bm{x}_i^{t,(l)}\right|\lesssim\sqrt{\log m}\big\|\bm{x}_i^{t,(l)}\big\|_{2},
$ one has
\begin{align}\label{eq:hbax}
&\abs{\bigg(\sum_{k=1}^{ s}\widehat{\bm{h}}_i^{t,(l)H}\bm{b}_l\bm{a}_{kl}^{\mathsf{H}}\widehat{\bm{x}}_i^{t,(l)}-\bar{\bm{h}}_k^{\mathsf{H}}\bm{b}_l\bm{a}_{kl}^{\mathsf{H}}\bar{\bm{x}}_k\bigg)\bm{b}_l^{\mathsf{H}}\widehat{\bm{h}}_i^{t,(l)}{a}_{il,1}}\notag\\
\lesssim&\frac{s\mu^2 {\log^5 m}}{m}\left\Vert \widehat{\bm{x}}_i^{t,(l)}\right\Vert _{2}.
\end{align}
\item It remains to bound the first term in (\ref{eq:xt-xt-l-signal}). To achieve this, we first utilize the decomposition
$
\bm{a}_{ij}^{\mathsf{H}}\paren{{\widetilde{\bm{x}}_{i}^{t}-\widehat{\bm{x}}_{i}^{t,(l)}}}={a_{ij,1}}^*
\paren{{\widetilde{{x}}_{i1}^{t}-\widehat{{x}}_{i1}^{t,(l)}}}+
\bm{a}_{ij,\perp}^{\mathsf{H}}\paren{{\widetilde{\bm{x}}_{i\perp}^{t}-\widehat{\bm{x}}_{i\perp}^{t,(l)}}}
$
to obtain that
\begin{small}
\begin{align*}
        &\!\!\!\!\!\!\bm{e}_{1}^{\top}\paren{\nabla^{2}_{\bm{x}_i}f\left(\bm{z}\left(\tau\right)\right)\mathrm{d}\tau}\left[\begin{array}{c}
        \widetilde{\bm{x}}_{i}^{t}-\widehat{\bm{x}}_{i}^{t,(l)} \\
        \overline{\widetilde{\bm{x}}_{i}^{t}-\widehat{\bm{x}}_{i}^{t,(l)}}
        \end{array}\right] = \omega_{1}\left(\tau\right)+\omega_{2}\left(\tau\right)+\omega_{3}\left(\tau\right),\end{align*}
\end{small}
where
\begin{align*}
\omega_{1}\left(\tau\right)& = \sum_{j=1}^{m}|\bm{b}^{\mathsf{H}}_{j}\bm{h}_i(\tau)|^2{a}_{ij,1}{a_{ij,1}^*}\paren{{\widetilde{{x}}_{i1}^{t}-\widehat{{x}}_{i1}^{t,(l)}}},\\
\omega_{2}\left(\tau\right)& = \sum_{j=1}^{m}|\bm{b}^{\mathsf{H}}_{j}\bm{h}_i(\tau)|^2{a}_{ij,1}\bm{a}_{ij,\perp}^{\mathsf{H}}\paren{{\widetilde{\bm{x}}_{i\perp}^{t}-
                \widehat{\bm{x}}_{i\perp}^{t,(l)}}},\\
\omega_{3}\left(\tau\right)& = \sum_{j=1}^m 
 \bm{b}_{j}^{\mathsf{H}}\bm{h}_i(\tau)\bm{a}_{ij}^{\mathsf{H}}\bm{x}_i(\tau){b}_{j,1}\bm{a}_{ij}^{\top}\paren{{\widetilde{\bm{x}}_{i}^{t}-\widehat{\bm{x}}_{i}^{t,(l)}}}.
\end{align*}
Based on Lemma \ref{lemma:Hessian-UB-Stage1}, Lemma \ref{lemma:concentration-identity-bha} and the fact $\norm{\bm{b}_j}{2} = \sqrt{K/m}$, by exploiting the techniques in Appendix \ref{sec:proof_state_evo}, 
$\omega_{1}\left(\tau\right)$, $\omega_{2}\left(\tau\right)$ and $\omega_{3}\left(\tau\right)$ can be bounded as follows:
\begin{align}
        \omega_{1}\left(\tau\right)&=\norm{\bm{h}_i(\tau)}{2}^2\big(\widetilde{{x}}_{i1}^{t}-\widehat{{x}}_{i1}^{t,(l)}\big)\notag\\&\quad+\co\left(\sqrt{\frac{s^2\mu^2K\log m}{m}}\bigg(\widetilde{{x}}_{i1}^{t}-\widehat{{x}}_{i1}^{t,(l)}\bigg)\right)\label{eq:bound_w_1}\\      
        \abs{\omega_{2}(\tau) }&\lesssim\sqrt{\frac{K{\log^2 m}}{m}}\Big(\left\Vert \widetilde{\bm{x}}_{i\perp}^{t}-\widehat{\bm{x}}_{i\perp}^{t,\left(l\right)}\right\Vert _{2}\notag\\
        &+\left\Vert \widetilde{\bm{x}}_{i\perp}^{t}-\widehat{\bm{x}}_{i\perp}^{t,\left(l\right)}-\widetilde{\bm{x}}_{i\perp}^{t,\text{sgn}}-\widehat{\bm{x}}_{i\perp}^{t,\text{sgn},\left(l\right)}\right\Vert _{2}\Big)\label{eq:bound_w_2}\\
\omega_{3}\left(\tau\right)&=\abs{{h}_{i1}(\tau)}\paren{{\widetilde{\bm{x}}_{i}^{t}-\widehat{\bm{x}}_{i}^{t,(l)}}}^{\mathsf{H}}\bm{x}_i(\tau)\notag\\
        &+\co\left(\frac{1}{\log^5 m}\norm{{\widetilde{\bm{x}}_{i}^{t}-\widehat{\bm{x}}_{i}^{t,(l)}}}{2}\right)\label{eq:bound_w_3}
\end{align}
with probability at least $1-\co(m^{-10})$, provided that $m\gg \mu^2K\log^{13}m$.
\item
Combining the bounds (\ref{eq:hbax}) (\ref{eq:bound_w_1}), (\ref{eq:bound_w_2}) and (\ref{eq:bound_w_3}), one has
\begin{align*}
&\widetilde{{x}}_{i1}^{t+1}-\widehat{{x}}_{i1}^{t+1,(l)}\\ =&\paren{1-\eta \frac{\int_{0}^{1}\normn{\bm{h}_i(\tau)}{2}^2\mathrm{d}\tau}{\normn{\wt{\bm{h}}_i^t}{2}^2} +\co\paren{\eta^\prime\sqrt{ \frac{s^2\mu^2K\log m}{m}}}}\cdot\notag\\&\paren{\widetilde{{x}}_{i1}^{t}-\widehat{{x}}_{i1}^{t,(l)}}
+
\co\paren{\eta^\prime\frac{s\mu^2 {\log^5 m}}{m}\left\Vert \widehat{\bm{x}}_i^{t,(l)}\right\Vert_{2}} \notag\\
& +\mathcal{O}\Bigg(\eta^\prime\sqrt{\frac{K{\log^2 m}}{m}}\bigg(\left\Vert \widetilde{\bm{x}}_{i\perp}^{t}-\widehat{\bm{x}}_{i\perp}^{t,\left(l\right)}\right\Vert _{2}\notag\\
&+\left\Vert \widetilde{\bm{x}}_{i\perp}^{t}-\widehat{\bm{x}}_{i\perp}^{t,\left(l\right)}-\widetilde{\bm{x}}_{i\perp}^{t,\text{sgn}}-\widehat{\bm{x}}_{i\perp}^{t,\text{sgn},\left(l\right)}\right\Vert _{2}\bigg)\Bigg)\notag\\
&+\co\left(\eta^\prime\frac{1}{\log^5 m}\norm{{\widetilde{\bm{x}}_{i}^{t}-\widehat{\bm{x}}_{i}^{t,(l)}}}{2}\right)\notag\\
&+ \eta^\prime\sup_{0\leq \tau\leq 1}\abs{{h}_{i1}(\tau)}\paren{{\widetilde{\bm{x}}_{i}^{t}-\widehat{\bm{x}}_{i}^{t,(l)}}}^{\mathsf{H}}\bm{x}_i(\tau).
\end{align*}
By exploiting similar arguments in Appendix E in \cite{chen2018}, we can arrive at 
\begin{align*}
&\dist\paren{{{x}}_{i1}^{t+1,(l)},\widetilde{{x}}_{i1}^{t+1}} =\abs{\widetilde{{x}}_{i1}^{t+1}-\widehat{{x}}_{i1}^{t+1,(l)}}\cdot\normn{\bar{\bm{x}}_i}{2}^{-1} \notag\\
\leq&\kappa\abs{\widetilde{{x}}_{i1}^{t+1}-\widehat{{x}}_{i1}^{t+1,(l)}} \\
{\leq}&\paren{1-\eta+\varrho_2} \alpha_{{\bm{x}}_i^{t}}\left(1+\frac{1}{s\log m}\right)^{t}C_{2}\frac{s\mu^2\kappa\sqrt{N\log^{13}m}}{m}\\
{\leq}&\alpha_{{\bm{x}}_i^{t+1}}\left(1+\frac{1}{s\log m}\right)^{t+1}C_{2}\frac{s\mu^2\kappa\sqrt{N\log^{13}m}}{m}
\end{align*}
for some $|\varrho_{2}|\ll\frac{1}{\log m}$ provided that $m\geq Cs\mu^2\kappa N\log^{12} m$ for some sufficiently large constant $C>0$. 
\end{itemize}
\section{Technical Lemmas}
\begin{lemma}
        \label{lemma:concentration-identity-ii}
        Suppose $m\gg K\log^{3}m$.
        With probability exceeding $1-\co\left(m^{-10}\right)$, we have 
      $
        \left\Vert \sum_{j=1}^{m}{a_{ij,1}^*}a_{ij,1}\bm{b}_{j}\bm{b}_{j}^{\mathsf{H}}-\bm{I}_{K}\right\Vert \lesssim\sqrt{\frac{K}{m}\log m}.
     $
        
\end{lemma}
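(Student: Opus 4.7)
The plan is to apply a matrix Bernstein inequality to the zero-mean sum
\begin{equation*}
\sum_{j=1}^{m} a_{ij,1}^* a_{ij,1}\, \bm{b}_j\bm{b}_j^{\mathsf{H}} - \bm{I}_K \;=\; \sum_{j=1}^{m} \bigl(|a_{ij,1}|^2 - 1\bigr)\, \bm{b}_j\bm{b}_j^{\mathsf{H}},
\end{equation*}
where the identity on the left-hand side uses two facts from the problem setup: first, $\bm{B}=[\bm{b}_1,\ldots,\bm{b}_m]^{\mathsf{H}}$ consists of the first $K$ columns of a unitary DFT matrix, so $\sum_{j=1}^m \bm{b}_j\bm{b}_j^{\mathsf{H}} = \bm{B}^{\mathsf{H}}\bm{B} = \bm{I}_K$; and second, under the complex Gaussian model $\bm{a}_{ij} \sim \mathcal{N}(\bm{0},\tfrac{1}{2}\bm{I}_N)+i\mathcal{N}(\bm{0},\tfrac{1}{2}\bm{I}_N)$, we have $\mathbb{E}[|a_{ij,1}|^2]=1$.

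First I would control the per-summand spectral norm. Since $\|\bm{b}_j\bm{b}_j^{\mathsf{H}}\| = \|\bm{b}_j\|_2^2 = K/m$ (all rows of $\bm{B}$ have equal norm by the DFT structure), and $|a_{ij,1}|^2-1$ is a centered sub-exponential variable, a standard tail bound gives $\max_{1\leq j\leq m}|a_{ij,1}|^2 \lesssim \log m$ with probability $1-O(m^{-10})$, exactly as recorded in (\ref{eq:max-a-i-1}). On this event, each summand obeys $\|(|a_{ij,1}|^2-1)\bm{b}_j\bm{b}_j^{\mathsf{H}}\| \lesssim (K/m)\log m =: L$.

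Next I would compute the matrix variance. Using $(\bm{b}_j\bm{b}_j^{\mathsf{H}})^2 = \|\bm{b}_j\|_2^2\, \bm{b}_j\bm{b}_j^{\mathsf{H}} = (K/m)\bm{b}_j\bm{b}_j^{\mathsf{H}}$ and $\operatorname{Var}(|a_{ij,1}|^2)=1$, the matrix variance parameter is
\begin{equation*}
\sigma^2 \;=\; \Bigl\|\sum_{j=1}^m \mathbb{E}\bigl[(|a_{ij,1}|^2-1)^2\bigr]\,(\bm{b}_j\bm{b}_j^{\mathsf{H}})^2\Bigr\| \;=\; \frac{K}{m}\Bigl\|\sum_{j=1}^m \bm{b}_j\bm{b}_j^{\mathsf{H}}\Bigr\| \;=\; \frac{K}{m}.
\end{equation*}
The matrix Bernstein inequality then produces, for $t \asymp \sqrt{(K/m)\log m}$, a dimension factor $2K$ times $\exp(-\Omega(\log m))$, which is $O(m^{-10})$ provided $m\gg K\log^3 m$, as this is precisely the regime in which the variance term $t^2/\sigma^2$ dominates the sub-exponential term $Lt$.

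The main obstacle is the unboundedness of $|a_{ij,1}|^2$, which I would handle by the truncation argument above (conditioning on the high-probability event $\max_j|a_{ij,1}|^2\lesssim \log m$ and absorbing the complementary event into the failure probability); the remaining work is a routine application of matrix Bernstein, and the sample-size assumption $m\gg K\log^3 m$ is exactly what is needed for the variance-dominated regime and for the stated bound $\sqrt{(K/m)\log m}$ to be the operative error term.
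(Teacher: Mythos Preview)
Your proposal is correct and follows essentially the same route as the paper: the paper does not give an explicit proof but defers to \cite[Section D.3.3]{ma2017implicit}, which is precisely the matrix-Bernstein argument you outline (center using $\sum_j \bm{b}_j\bm{b}_j^{\mathsf{H}}=\bm{I}_K$, use $\|\bm{b}_j\|_2^2=K/m$, variance $K/m$, truncate or use the sub-exponential form to handle $|a_{ij,1}|^2$). One minor point worth making explicit: after truncating on $\{\max_j|a_{ij,1}|^2\lesssim\log m\}$ the summands are no longer exactly mean-zero, but the resulting mean shift is $O(m^{-c})$ per term and hence negligible; alternatively one can invoke the sub-exponential matrix Bernstein directly and avoid truncation altogether.
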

\begin{lemma}\label{lemma:concentration-identity-ij}
        Suppose $m\gg K\log^{3}m$.
        For $k\neq i$, we have 
        $
        \left\Vert \sum_{j=1}^{m}{a_{kj,1}^*}a_{ij,1}\bm{b}_{j}\bm{b}_{j}^{\mathsf{H}}\right\Vert \lesssim\sqrt{\frac{K}{m}\log m},
      $
        $
        \left\Vert \sum_{j=1}^{m}\abs{a_{kj,1}}\abs{a_{ij,1}}\bm{b}_{j}\bm{b}_{j}^{\mathsf{H}}\right\Vert \lesssim\sqrt{\frac{K}{m}\log m},
        $ with probability exceeding $1-\co\left(m^{-10}\right)$.
        
\end{lemma}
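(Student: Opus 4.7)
The plan is to invoke the matrix Bernstein inequality for independent sums of rank-one matrices, combined with a preliminary truncation on the maxima of the standard complex Gaussians $\{a_{kj,1}\},\{a_{ij,1}\}$ to obtain uniform summand bounds. Both inequalities have the common shape $\sum_{j=1}^m \xi_j \bm b_j\bm b_j^{\mathsf H}$ with $\xi_j$ independent across $j$, and both pivot on the partial-DFT identity $\sum_{j=1}^m \bm b_j\bm b_j^{\mathsf H}=\bm B^{\mathsf H}\bm B=\bm I_K$ together with $\|\bm b_j\|_2^2=K/m$.

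For the first bound, the hypothesis $k\neq i$ together with the independence of $\bm a_{kj}$ from $\bm a_{ij}$ forces $\EE[a_{kj,1}^{*}a_{ij,1}]=0$, so the sum is already centered. I would condition on the event $\mc E=\{\max_j|a_{kj,1}|\vee|a_{ij,1}|\le 5\sqrt{\log m}\}$, which holds with probability $1-\co(m^{-10})$ by \eqref{eq:max-a-i-1}. On $\mc E$, the summand $X_j:=a_{kj,1}^{*}a_{ij,1}\bm b_j\bm b_j^{\mathsf H}$ obeys $\|X_j\|\le 25(K/m)\log m=:L$, and the matrix variance proxy is
\begin{align*}
V:=\Big\|\sum_{j=1}^m\EE[X_jX_j^{\mathsf H}]\Big\|=\Big\|\sum_{j=1}^m\EE|a_{kj,1}|^2|a_{ij,1}|^2\,\|\bm b_j\|_2^2\,\bm b_j\bm b_j^{\mathsf H}\Big\|=\frac{K}{m},
\end{align*}
using $\EE|a_{\ell j,1}|^2=1$ and $\sum_j\bm b_j\bm b_j^{\mathsf H}=\bm I_K$. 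Matrix Bernstein then yields $\|\sum_j X_j\|\lesssim\sqrt{V\log K}+L\log K\lesssim\sqrt{K\log m/m}$, with the first term dominating exactly under $m\gg K\log^3 m$; a routine de-truncation folds $\mc E^{\mathrm c}$ into the $\co(m^{-10})$ failure probability and recovers the bound on the untruncated sum.

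For the second bound, the summands are nonnegative with common mean $c:=\EE|a_{k1,1}|\cdot\EE|a_{i1,1}|=\pi/4$, so I would split
\begin{align*}
\sum_{j=1}^m |a_{kj,1}||a_{ij,1}|\bm b_j\bm b_j^{\mathsf H}=c\,\bm I_K+\sum_{j=1}^m\big(|a_{kj,1}||a_{ij,1}|-c\big)\bm b_j\bm b_j^{\mathsf H},
\end{align*}
and apply exactly the same truncated matrix Bernstein argument to the centered remainder: its truncated per-summand norm is $\lesssim (K/m)\log m$, the matrix variance remains $\lesssim K/m$ because $|a_{k1,1}||a_{i1,1}|$ has bounded variance, and the centered tail therefore obeys $\sqrt{K\log m/m}$ with probability $1-\co(m^{-10})$. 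The deterministic identity piece $c\,\bm I_K$ has $\Theta(1)$ operator norm; following the convention set by the companion Lemma \ref{lemma:concentration-identity-ii}, in which the analogous $\bm I_K$ mean is explicitly subtracted, the stated inequality here must be read as a bound on the stochastic deviation, with the $c\,\bm I_K$ summand absorbed into the population-level $\|\bm h_i\|_2^2\bm x_i$ and $\|\bm x_i\|_2^2\bm h_i$ contributions that accompany the Wirtinger gradient in Appendix \ref{sec:proof_state_evo}.

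The main obstacle is not probabilistic — one application of matrix Bernstein closes both bounds once the truncation is in place — but is the bookkeeping that ties the truncated sum back to the untruncated sum with the correct failure probability, that absorbs $\log K$ into $\log m$ under $K\le m$, and, for the second bound, that reconciles the literal statement with the centered convention of Lemma \ref{lemma:concentration-identity-ii}; getting the resulting $c\,\bm I_K$ contribution to be genuinely harmless at every downstream call site is what distinguishes this plan from a naive centering argument that would stop after the Bernstein step.
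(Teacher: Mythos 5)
Your matrix Bernstein plan for the first display is sound, and it is essentially what the cited reference does: the paper's own ``proof'' of this lemma is a one-line remark deferring to Section D.3.3 of \cite{ma2017implicit}, which carries out the same truncation-plus-Bernstein scheme you wrote down, with variance proxy $\bigl\|\sum_{j}\|\bm b_{j}\|_{2}^{2}\bm b_{j}\bm b_{j}^{\mathsf H}\bigr\|=K/m$, per-summand norm $\lesssim(K/m)\log m$ after conditioning on \eqref{eq:max-a-i-1}, and $m\gg K\log^{3}m$ making the variance term dominate. One small point you glossed over: the summands $a_{kj,1}^{*}a_{ij,1}\bm b_{j}\bm b_{j}^{\mathsf H}$ are not Hermitian, so one needs either the rectangular form of matrix Bernstein or a split into real and imaginary parts; fortunately $\sum_{j}\EE[X_{j}X_{j}^{\mathsf H}]$ and $\sum_{j}\EE[X_{j}^{\mathsf H}X_{j}]$ coincide here, so this changes nothing quantitatively.

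Your observation about the second display is correct and is the sharpest part of your write-up: since $\EE\bigl[\sum_{j}|a_{kj,1}||a_{ij,1}|\bm b_{j}\bm b_{j}^{\mathsf H}\bigr]=(\pi/4)\bm I_{K}$, the uncentered sum has operator norm $\Theta(1)$ with high probability, and the claimed bound $\lesssim\sqrt{K\log m/m}$ cannot hold as literally stated once $m\gg K\log m$. Reading it as a bound on the centered deviation, by analogy with the explicit subtraction of $\bm I_{K}$ in Lemma~\ref{lemma:concentration-identity-ii}, is the only sensible interpretation, and your Bernstein argument does establish that centered version. Note, however, that only the first display is actually invoked downstream (in controlling $J_{i1}$ in Appendix~\ref{sec:proof_state_evo}); the absolute-value display does not appear to be used, so the ``harmless at every call site'' bookkeeping you flag at the end, while a reasonable concern in general, is not load-bearing in this paper.
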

\begin{lemma}\label{lemma:concentration-identity-i1j}
        Suppose $m\gg K\log^{3}m$ and $\bm{z}\in\mathbb{C}^{N-1}$ with $\norm{\bm{z}}{2} = 1$ is independent with $\{\bm{a}_{kj}\}$ .
        With probability exceeding $1-\co\left(m^{-10}\right)$, we have 
        $
        \left\Vert \sum_{j=1}^{m}a_{ij,1}\bm{a}_{kj,\perp}^{\mathsf{H}}\bm{z}\bm{b}_j\bm{b}_{j}^{\mathsf{H}}\right\Vert \lesssim\sqrt{\frac{K}{m}\log m}.
        $
        \begin{remark}
Lemma \ref{lemma:concentration-identity-ij}, Lemma \ref{lemma:concentration-identity-i1j} and Lemma \ref{lemma:concentration-identity-ii} can be proven by applying the arguments in \cite[Section D.3.3]{ma2017implicit}.
        \end{remark}
\end{lemma}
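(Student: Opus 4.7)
The plan is to prove the concentration bound by a truncated matrix Bernstein argument, treating the sum as a $K\times K$ random matrix with mean zero. Write $X_j := a_{ij,1}\,\bm{a}_{kj,\perp}^{\mathsf{H}}\bm{z}\,\bm{b}_j\bm{b}_j^{\mathsf{H}}$, and observe three structural facts that make matrix Bernstein applicable. First, since $\bm{z}$ is independent of $\{\bm{a}_{kj}\}$, and since $a_{ij,1}$ is independent of $\bm{a}_{kj,\perp}$ (they live in orthogonal coordinate subspaces of the complex Gaussian vector $\bm{a}_{kj}$ when $k=i$, and are jointly independent when $k\neq i$), we have $\EE[a_{ij,1}\,\bm{a}_{kj,\perp}^{\mathsf{H}}\bm{z}]=0$, hence $\EE[X_j]=\bm{0}$. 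Second, the $\bm{b}_j$'s are rows of a partial DFT matrix, so $\|\bm{b}_j\|_2^2=K/m$ and $\sum_{j=1}^m \bm{b}_j\bm{b}_j^{\mathsf{H}}=\bm{I}_K$. Third, $X_j$ is (up to a scalar) rank one.

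Next I would estimate the matrix variance proxy. A direct computation gives
\begin{align*}
\EE[X_j X_j^{\mathsf{H}}] &= \EE\!\left[|a_{ij,1}|^2\,|\bm{a}_{kj,\perp}^{\mathsf{H}}\bm{z}|^2\right]\,\|\bm{b}_j\|_2^2\,\bm{b}_j\bm{b}_j^{\mathsf{H}} \;\lesssim\; \frac{K}{m}\,\bm{b}_j\bm{b}_j^{\mathsf{H}},
\end{align*}
using $\EE|a_{ij,1}|^2\asymp 1$ and $\EE|\bm{a}_{kj,\perp}^{\mathsf{H}}\bm{z}|^2\leq\|\bm{z}\|_2^2=1$; summing across $j$ and invoking $\sum_j\bm{b}_j\bm{b}_j^{\mathsf{H}}=\bm{I}_K$ yields $\|\sum_j\EE[X_jX_j^{\mathsf{H}}]\|\lesssim K/m$. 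A symmetric estimate gives the same bound for $\sum_j\EE[X_j^{\mathsf{H}}X_j]$, so the variance statistic satisfies $V\lesssim K/m$.

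For the uniform bound, I would first invoke the standard Gaussian concentration facts already recorded in the appendix: with probability $1-\co(m^{-10})$,
\begin{equation*}
\max_{j}|a_{ij,1}|\lesssim\sqrt{\log m},\qquad \max_{j}|\bm{a}_{kj,\perp}^{\mathsf{H}}\bm{z}|\lesssim\sqrt{\log m},
\end{equation*}
the second of which follows because $\bm{z}$ is independent of $\{\bm{a}_{kj}\}$ with $\|\bm{z}\|_2=1$. On this event, $\|X_j\|\lesssim \sqrt{\log m}\cdot\sqrt{\log m}\cdot(K/m) = K\log m/m =: B$. Applying the truncated matrix Bernstein inequality (conditionally on the high-probability event) with parameters $V\lesssim K/m$ and $B\lesssim K\log m/m$ yields, with probability at least $1-\co(m^{-10})$,
\begin{equation*}
\left\|\sum_{j=1}^m X_j\right\|\;\lesssim\;\sqrt{V\log K}+B\log K\;\lesssim\;\sqrt{\frac{K\log m}{m}}+\frac{K\log^2 m}{m},
\end{equation*}
and the first term dominates precisely when $m\gg K\log^3 m$, which is the assumed regime; this is exactly the claim.

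The main obstacle I anticipate is the bookkeeping for the independence structure, especially in the case $k=i$: one must argue carefully that $a_{ij,1}$ and $\bm{a}_{ij,\perp}$ are independent (not merely uncorrelated) Gaussian coordinates so that $\EE[X_j]=0$ and the variance factorization above is exact. The rest is standard: matching the proof in \cite[Section~D.3.3]{ma2017implicit} (as the remark suggests) and verifying that the partial-DFT structure of $\{\bm{b}_j\}$ plays the same role there as the Gaussian design matrix played in that reference.
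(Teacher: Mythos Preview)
Your proposal is correct and follows exactly the route the paper indicates: the paper does not give its own detailed proof of this lemma but simply points to the matrix-Bernstein argument in \cite[Section~D.3.3]{ma2017implicit}, and your truncated matrix Bernstein computation (mean zero from independence of $a_{ij,1}$ and $\bm{a}_{kj,\perp}$, variance $V\lesssim K/m$ via $\sum_j\bm{b}_j\bm{b}_j^{\mathsf{H}}=\bm{I}_K$, almost-sure bound $B\lesssim K\log m/m$ after truncation) is precisely that argument. One cosmetic remark: in the Bernstein tail you should write $\log m$ rather than $\log K$ to achieve failure probability $\co(m^{-10})$, but your displayed final bound $\sqrt{K\log m/m}+K\log^2 m/m$ already reflects this, so the conclusion is unaffected.
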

\begin{lemma}\label{lemma:concentration-identity-bha}
        Suppose $m\gg (\mu^2/\delta^2)N\log^{5}m$.
        With probability exceeding $1-\co\left(m^{-10}\right)$, we have 
       $
        \norm{\sum_{j=1}^{m}\abs{\bm{b}^{\mathsf{H}}_{j}\bm{h}_i}^2\bm{a}_{ij,\perp}\bm{a}_{ij,\perp}^{\mathsf{H}} -\norm{\bm{h}_i}{2}^2\bm{I}_{N-1} }{} \lesssim\delta\norm{\bm{h}_i}{2}^2,
      $
        obeying $\max_{ 1\leq l\leq m}\left|\bm{b}_{l}^{\mathsf{H}}\bm{h}_i\right|\cdot{\big\|\bm{h}_i\big\|_{2}}^{-1} \lesssim\frac{\mu}{\sqrt{m}}\log^2m$. Furthermore, there is 
        $\norm{\sum_{j=1}^{m}\sum_{k=1}^{ s}{b}_{j,1}\bm{b}_j^{\mathsf{H}}\bm{h}_i\bm{a}_{ij}\bm{a}_{kj}^{\mathsf{H}}-{{h}_{i1}}\bm{I}_N}{}\lesssim\delta\norm{\bm{h}_i}{2},$  with probability exceeding $1-\co\left(m^{-10}\right)$, provided  $m\gg (\mu/\delta^2)s^2N\log^3 m$.     
\end{lemma}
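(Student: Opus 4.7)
The plan is to prove both bounds via the truncated matrix Bernstein inequality applied to independent sums indexed by $j$, exploiting the partial-DFT identity $\sum_{j=1}^{m}\bm{b}_j\bm{b}_j^{\mathsf{H}}=\bm{I}_K$ (coming from $\bm{B}^{\mathsf{H}}\bm{B}=\bm{I}_K$) together with $\EE[\bm{a}_{ij}\bm{a}_{kj}^{\mathsf{H}}]=\delta_{ik}\bm{I}_N$ and the deterministic incoherence hypothesis on $|\bm{b}_j^{\mathsf{H}}\bm{h}_i|$.

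For the first bound, I would set $\bm{X}_j:=|\bm{b}_j^{\mathsf{H}}\bm{h}_i|^2\bm{a}_{ij,\perp}\bm{a}_{ij,\perp}^{\mathsf{H}}$, which are independent in $j$, and note that $\sum_j\EE[\bm{X}_j]=\bigl(\sum_j|\bm{b}_j^{\mathsf{H}}\bm{h}_i|^2\bigr)\bm{I}_{N-1}=\|\bm{h}_i\|_2^2\bm{I}_{N-1}$ by the DFT identity. The per-term bound, obtained by combining the hypothesis $|\bm{b}_j^{\mathsf{H}}\bm{h}_i|\lesssim(\mu/\sqrt{m})\log^2 m\cdot\|\bm{h}_i\|_2$ with the Gaussian norm bound $\|\bm{a}_{ij,\perp}\|_2^2\lesssim N$ from \eqref{eq:max-a-i-norm}, is
\begin{equation*}
L:=\max_j\|\bm{X}_j-\EE[\bm{X}_j]\|\lesssim\frac{\mu^2 N\log^4 m}{m}\|\bm{h}_i\|_2^2.
\end{equation*}
For the matrix variance proxy, one computes $\EE[\bm{X}_j^2]\lesssim N|\bm{b}_j^{\mathsf{H}}\bm{h}_i|^4\bm{I}_{N-1}$, so
\begin{equation*}
v:=\Bigl\|\sum_j\EE[\bm{X}_j^2]\Bigr\|\lesssim N\sum_j|\bm{b}_j^{\mathsf{H}}\bm{h}_i|^4\leq N\max_j|\bm{b}_j^{\mathsf{H}}\bm{h}_i|^2\cdot\|\bm{h}_i\|_2^2\lesssim\frac{\mu^2 N\log^4 m}{m}\|\bm{h}_i\|_2^4.
\end{equation*}
Matrix Bernstein then produces a deviation of order $\sqrt{v\log N}+L\log N\lesssim\bigl(\sqrt{\mu^2 N\log^5 m/m}+\mu^2 N\log^5 m/m\bigr)\|\bm{h}_i\|_2^2$, which is $\lesssim\delta\|\bm{h}_i\|_2^2$ under the stated sample complexity $m\gg(\mu^2/\delta^2)N\log^5 m$.

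For the second bound, I would collapse the inner sum and set $\bm{Y}_j:=b_{j,1}(\bm{b}_j^{\mathsf{H}}\bm{h}_i)\bm{a}_{ij}\bigl(\sum_{k=1}^s\bm{a}_{kj}\bigr)^{\mathsf{H}}$, which are independent in $j$. Using $\EE[\bm{a}_{ij}\bm{a}_{kj}^{\mathsf{H}}]=\delta_{ik}\bm{I}_N$, only the $k=i$ term survives in expectation, and the reduction $\sum_j b_{j,1}\bm{b}_j^{\mathsf{H}}=\bm{e}_1^\top$ (again a consequence of $\bm{B}^{\mathsf{H}}\bm{B}=\bm{I}_K$) yields $\sum_j\EE[\bm{Y}_j]=h_{i,1}\bm{I}_N$. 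The uniform entrywise bound $|b_{j,1}|=1/\sqrt{m}$ for partial-DFT vectors, together with the incoherence hypothesis and $\|\bm{a}_{ij}\|_2,\|\sum_k\bm{a}_{kj}\|_2\lesssim\sqrt{sN}$ from \eqref{eq:max-a-i-norm}, gives a per-term bound carrying an extra factor of $s$; a similar variance computation, in which the $k\neq i$ cross terms contribute $s-1$ independent copies of $\|\bm{a}_{kj}\|_2^2\bm{I}_N\asymp N\bm{I}_N$, produces the $s^2$ scaling that appears in the stated sample complexity $m\gg(\mu/\delta^2)s^2 N\log^3 m$.

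The main obstacle is the bookkeeping on the variance statistic $v$ for both sums, and in particular the truncation argument needed before invoking matrix Bernstein: the Gaussian norms $\|\bm{a}_{ij,\perp}\|_2$ and $\|\bm{a}_{kj}\|_2$ are subgaussian, not bounded, so one must first restrict to the high-probability event from \eqref{eq:max-a-i-norm} on which these norms are $\lesssim\sqrt{N}$, absorb the truncation bias into the concentration estimate, and then verify via a standard ``subexponential Bernstein'' or Hanson–Wright tail that the contribution from the unlikely event is negligible. Controlling the variance $v$ sharply (rather than by a crude operator-norm bound) is what produces the correct $N$-dependence, and this is where the proof most resembles the Bernstein arguments used in \cite{ma2017implicit}.
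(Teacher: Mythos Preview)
Your approach is correct and is the standard matrix Bernstein route. The paper itself does not give an independent proof here; it simply refers the reader to Lemmas~11 and~12 in \cite{dong2018}, whose arguments are precisely of the type you outline (truncated matrix Bernstein combined with the partial-DFT identity $\sum_{j}\bm{b}_j\bm{b}_j^{\mathsf{H}}=\bm{I}_K$ and Gaussian moment bounds on $\bm{a}_{ij}$). Two minor points: for the second bound, the per-term control of $\bigl\|\sum_{k}\bm{a}_{kj}\bigr\|_2$ obtained directly from \eqref{eq:max-a-i-norm} via the triangle inequality gives $s\sqrt{N}$ rather than $\sqrt{sN}$ (the sharper $\sqrt{sN}$ needs a separate Gaussian concentration step on the sum), and your variance computation actually produces a factor $s$ rather than $s^2$, so you would in fact recover a slightly better sample complexity than stated. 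Neither point affects the validity of the argument.
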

\begin{proof}
Please refer to Lemma 11 and Lemma 12 in \cite{dong2018}.
\end{proof}
\begin{lemma}\label{lemma:concentration-U1_full}
        Suppose the sampling size $m\gg s \mu^2\sqrt{N\log^9 m}$, then
        with probability exceeding $1-\co\left(m^{-10}\right)$, we have 
       $\norm{\sum_{j=1}^{m}\sum_{k=1}^{ s}\bm{h}_k^{\mathsf{H}}\bm{b}_j\bm{b}_j^{\mathsf{H}}\bm{h}_i\bm{a}_{ij}\bm{a}_{kj}^{\mathsf{H}}-\norm{\bm{h}_i}{2}^2\bm{I}_N}{}\lesssim\frac{s \mu^2\sqrt{K\log^9 m}}{m}\norm{\bm{h}_i}{2}^2,
        $
        obeying $\max_{1\leq i\leq s,1\leq j\leq m}\left|\bm{b}_{j}^{\mathsf{H}}\bm{h}_i\right|\cdot{\big\|\bm{h}_i\big\|_{2}}^{-1} \lesssim\frac{\mu}{\sqrt{m}}\log^2m$. 
\end{lemma}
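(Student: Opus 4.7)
The plan is to split the double sum into its diagonal ($k=i$) piece and its off-diagonal ($k\neq i$) piece, compute the Gaussian-mean of each, and then apply a truncated matrix Bernstein inequality to control both fluctuations under the stated incoherence hypothesis. Concretely, set $\bm{U}:=\sum_{j,k}\bm{h}_k^{\mathsf{H}}\bm{b}_j\bm{b}_j^{\mathsf{H}}\bm{h}_i\,\bm{a}_{ij}\bm{a}_{kj}^{\mathsf{H}}$ and write
\begin{align*}
\bm{U}=\underbrace{\sum_{j=1}^{m}|\bm{b}_j^{\mathsf{H}}\bm{h}_i|^2\,\bm{a}_{ij}\bm{a}_{ij}^{\mathsf{H}}}_{\bm{U}_1}+\underbrace{\sum_{k\neq i}\sum_{j=1}^{m}(\bm{h}_k^{\mathsf{H}}\bm{b}_j)(\bm{b}_j^{\mathsf{H}}\bm{h}_i)\,\bm{a}_{ij}\bm{a}_{kj}^{\mathsf{H}}}_{\bm{U}_2}.
\end{align*}
The partial-isometry identity $\sum_{j}\bm{b}_j\bm{b}_j^{\mathsf{H}}=\bm{I}_K$, inherited from the DFT construction in Section~\ref{form}, together with $\mathbb{E}[\bm{a}_{ij}\bm{a}_{ij}^{\mathsf{H}}]=\bm{I}_N$ gives $\mathbb{E}_{\bm{a}}[\bm{U}_1]=\|\bm{h}_i\|_2^2\,\bm{I}_N$, while independence of $\bm{a}_{ij}$ from $\bm{a}_{kj}$ for $k\neq i$ yields $\mathbb{E}_{\bm{a}}[\bm{U}_2]=\bm{0}$. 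Controlling the target quantity thus reduces to bounding $\|\bm{U}_1-\|\bm{h}_i\|_2^2\bm{I}_N\|$ and $\|\bm{U}_2\|$ in operator norm separately.

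For $\bm{U}_1-\|\bm{h}_i\|_2^2\bm{I}_N$ I would apply matrix Bernstein to the centered summands $\bm{X}_j=|\bm{b}_j^{\mathsf{H}}\bm{h}_i|^2(\bm{a}_{ij}\bm{a}_{ij}^{\mathsf{H}}-\bm{I}_N)$ after truncating the tails of $\|\bm{a}_{ij}\|_2\lesssim\sqrt{N}$ from \eqref{eq:max-a-i-norm}. The hypothesized incoherence $|\bm{b}_j^{\mathsf{H}}\bm{h}_i|\lesssim(\mu/\sqrt{m})\log^2 m\,\|\bm{h}_i\|_2$ bounds each $\|\bm{X}_j\|$, and the matrix variance $\|\sum_j\mathbb{E}\bm{X}_j^2\|\lesssim \sum_j|\bm{b}_j^{\mathsf{H}}\bm{h}_i|^4\cdot N$ is controlled via $\sum_j|\bm{b}_j^{\mathsf{H}}\bm{h}_i|^2=\|\bm{h}_i\|_2^2$. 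For $\bm{U}_2$ I would handle each $k\neq i$ separately: the non-Hermitian outer product $\bm{a}_{ij}\bm{a}_{kj}^{\mathsf{H}}$ is reduced to a Hermitian matrix Bernstein estimate via the self-adjoint dilation, with weights $w_{jk}=(\bm{h}_k^{\mathsf{H}}\bm{b}_j)(\bm{b}_j^{\mathsf{H}}\bm{h}_i)$ obeying $|w_{jk}|\lesssim(\mu^2\log^4 m/m)\|\bm{h}_k\|_2\|\bm{h}_i\|_2$ and $\sum_j|w_{jk}|^2\lesssim(\mu^2\log^4 m/m)\|\bm{h}_k\|_2^2\|\bm{h}_i\|_2^2$ by Cauchy--Schwarz. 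Summing the triangle inequality over the $s-1$ values of $k$ collects the linear $s$-factor in the target bound.

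The main obstacles are twofold. First is the careful accounting of polylogarithmic factors (hence the $\log^9 m$) that accumulate through simultaneous truncations of $\|\bm{a}_{ij}\|_2$, $|\bm{a}_{ij}^{\mathsf{H}}\bm{z}|$, and $|\bm{b}_j^{\mathsf{H}}\bm{h}_k|$, and through the combination of the variance and uniform-bound terms in Bernstein, with the aim of producing the $\sqrt{K}$ (rather than $\sqrt{N}$) scaling; this sharpening relies on exploiting $\sum_j\bm{b}_j\bm{b}_j^{\mathsf{H}}=\bm{I}_K$ when collapsing the variance, mirroring the trick in Lemma~\ref{lemma:concentration-identity-bha}. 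Second is keeping the $s$-dependence linear: the triangle inequality over $k$ must remain outside the Bernstein tail so as not to spawn a spurious $\sqrt{\log s}$. The detailed computation can be modeled on the proofs of Lemmas~11 and 12 of \cite{dong2018}, which are already invoked for Lemma~\ref{lemma:concentration-identity-bha}, so those templates should transfer almost verbatim once the diagonal/off-diagonal split above is in place.
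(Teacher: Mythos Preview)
Your diagonal/off-diagonal split followed by truncated matrix Bernstein is the standard route and is essentially what the paper points to: the paper does not write out a proof here but defers in the remark after Lemma~\ref{lemma:concentration-U3_full} to ``the same strategy as \cite[Section K]{chen2018}'', which is precisely a matrix-Bernstein treatment of weighted Gaussian outer products under an incoherence cap. Your plan to handle $\bm{U}_1$ via $\bm{X}_j=|\bm{b}_j^{\mathsf{H}}\bm{h}_i|^2(\bm{a}_{ij}\bm{a}_{ij}^{\mathsf{H}}-\bm{I}_N)$ is exactly the calculation behind Lemma~\ref{lemma:concentration-identity-bha} (and your reference to Lemmas~11--12 of \cite{dong2018} there is the right template), so this part matches. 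For $\bm{U}_2$ the self-adjoint dilation plus Bernstein with weights $w_{jk}$ is again the right mechanism; summing over $k$ outside the tail union bound is the correct way to keep the $s$-factor linear.

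One small caveat worth flagging: the paper invokes \cite{chen2018} rather than \cite{dong2018} for this particular lemma, so the bookkeeping of the polylog powers (the $\log^9 m$) and the appearance of $\sqrt{K}$ versus $\sqrt{N}$ is meant to track that reference rather than the \cite{dong2018} lemmas you cite at the end. In substance the two templates coincide, but if you want to reproduce exactly the exponents stated here you should trace through \cite[Section K]{chen2018} rather than \cite{dong2018}.
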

\begin{lemma}\label{lemma:concentration-U3_full}
        Suppose the sampling size follows that $m\gg s \mu^2\sqrt{N\log^5 m}$.
        With probability exceeding $1-\co\left(m^{-10}\right)$, we have 
$
        \norm{\sum_{j=1}^{m}\sum_{k=1}^{ s}\bar{\bm{h}}_k^{\mathsf{H}}\bm{b}_j\bm{b}_j^{\mathsf{H}}\bm{h}_i\bm{a}_{ij}\bm{a}_{kj}^{\mathsf{H}}-({\bar{\bm{h}}_i^{\mathsf{H}}\bm{h}_i})\bm{I}_N}{}
    \lesssim\frac{s \mu^2\sqrt{K\log^5 m}}{m}\abs{\bar{\bm{h}}_i^{\mathsf{H}}\bm{h}_i},
$
        obeying  $\max_{ 1\leq l\leq m}\left|\bm{b}_{l}^{\mathsf{H}}\bar{\bm{h}}_i\right|\cdot{\big\|\bar{\bm{h}}_i\big\|_{2}}^{-1} \leq\frac{\mu}{\sqrt{m}}$ and $\max_{ 1\leq l\leq m}\left|\bm{b}_{l}^{\mathsf{H}}\bm{h}_i\right|\cdot{\big\|\bm{h}_i\big\|_{2}}^{-1} \lesssim\frac{\mu}{\sqrt{m}}\log^2m$.
        
\end{lemma}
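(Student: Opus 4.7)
\textbf{Proof plan for Lemma \ref{lemma:concentration-U3_full}.} First I would identify the mean. Using the fact that $\{\bm{a}_{ij}\}_{i=1}^s$ are i.i.d.\ and independent across the index $i$, one has $\mathbb{E}[\bm{a}_{ij}\bm{a}_{kj}^{\mathsf{H}}]=\bm{I}_N$ when $k=i$ and $\bm{0}$ otherwise. Since the columns of $\bm{B}$ are orthonormal, $\sum_{j=1}^m \bm{b}_j\bm{b}_j^{\mathsf{H}}=\bm{I}_K$, so the expectation of the full double sum is exactly $(\bar{\bm{h}}_i^{\mathsf{H}}\bm{h}_i)\bm{I}_N$. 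I would then split the centered sum into a diagonal part ($k=i$) and an off-diagonal part ($k\neq i$), namely $\bm{S}_{\mathrm{d}}:=\sum_{j=1}^m (\bar{\bm{h}}_i^{\mathsf{H}}\bm{b}_j\bm{b}_j^{\mathsf{H}}\bm{h}_i)\bm{a}_{ij}\bm{a}_{ij}^{\mathsf{H}}-(\bar{\bm{h}}_i^{\mathsf{H}}\bm{h}_i)\bm{I}_N$ and $\bm{S}_{\mathrm{o}}:=\sum_{k\neq i}\sum_{j=1}^m (\bar{\bm{h}}_k^{\mathsf{H}}\bm{b}_j\bm{b}_j^{\mathsf{H}}\bm{h}_i)\bm{a}_{ij}\bm{a}_{kj}^{\mathsf{H}}$, and bound them separately by the truncated matrix Bernstein inequality, in the spirit of the proof of Lemma \ref{lemma:concentration-U1_full} and of Lemmas 11--12 in \cite{dong2018}.

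For $\bm{S}_{\mathrm{d}}$ I would first invoke the high-probability event $\max_j \|\bm{a}_{ij}\|_2\lesssim \sqrt{N}$ from \eqref{eq:max-a-i-norm} to truncate each summand. Writing the coefficient as $\alpha_j:=(\bm{b}_j^{\mathsf{H}}\bar{\bm{h}}_i)^*(\bm{b}_j^{\mathsf{H}}\bm{h}_i)$, the incoherence conditions in the hypothesis yield $|\alpha_j|\leq \frac{\mu^2 \log^2 m}{m}\|\bar{\bm{h}}_i\|_2\|\bm{h}_i\|_2$, so the per-summand operator norm is bounded by $B\lesssim \frac{\mu^2 N\log^2 m}{m}\|\bar{\bm{h}}_i\|_2\|\bm{h}_i\|_2$. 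For the variance proxy $V:=\|\sum_j |\alpha_j|^2\mathbb{E}[(\bm{a}_{ij}\bm{a}_{ij}^{\mathsf{H}})^2]\|$, the key estimate is $\sum_j|\alpha_j|^2\leq (\max_j |\bm{b}_j^{\mathsf{H}}\bar{\bm{h}}_i|^2)\sum_j|\bm{b}_j^{\mathsf{H}}\bm{h}_i|^2\leq \frac{\mu^2}{m}\|\bar{\bm{h}}_i\|_2^2\|\bm{h}_i\|_2^2$, which together with $\|\mathbb{E}[(\bm{a}_{ij}\bm{a}_{ij}^{\mathsf{H}})^2]\|\lesssim N$ gives $V\lesssim \frac{N\mu^2}{m}\|\bar{\bm{h}}_i\|_2^2\|\bm{h}_i\|_2^2$. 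Matrix Bernstein then delivers $\|\bm{S}_{\mathrm{d}}\|\lesssim \sqrt{V\log m}+B\log m\lesssim \frac{\mu^2\sqrt{K\log^5 m}}{m}\,\|\bar{\bm{h}}_i\|_2\|\bm{h}_i\|_2$ once $m\gg \mu^2 N\log^5 m$, after using $K\asymp N$ throughput assumption in the paper.

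For $\bm{S}_{\mathrm{o}}$, for each fixed $k\neq i$ the summands $(\bar{\bm{h}}_k^{\mathsf{H}}\bm{b}_j\bm{b}_j^{\mathsf{H}}\bm{h}_i)\bm{a}_{ij}\bm{a}_{kj}^{\mathsf{H}}$ are zero mean by independence of $\bm{a}_{ij}$ and $\bm{a}_{kj}$. I would apply the Hermitian-dilation version of matrix Bernstein to the non-self-adjoint sum. The per-summand norm is controlled exactly as above by $\frac{\mu^2 N\log^2 m}{m}\|\bar{\bm{h}}_k\|_2\|\bm{h}_i\|_2$, and the two variance quantities $\mathbb{E}\sum_j|\alpha_{jk}|^2\bm{a}_{ij}\|\bm{a}_{kj}\|_2^2\bm{a}_{ij}^{\mathsf{H}}$ and its transpose evaluate, using independence of $\bm{a}_{ij},\bm{a}_{kj}$ and the same coefficient-sum estimate, to at most $\frac{N\mu^2}{m}\|\bar{\bm{h}}_k\|_2^2\|\bm{h}_i\|_2^2\cdot \bm{I}$. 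A union bound over the $s-1$ values of $k$, absorbing the extra factor of $s$ up front, yields $\|\bm{S}_{\mathrm{o}}\|\lesssim \frac{s\mu^2\sqrt{K\log^5 m}}{m}\max_k\|\bar{\bm{h}}_k\|_2\|\bm{h}_i\|_2$.

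Adding the two contributions gives the announced bound after upgrading the $\|\bar{\bm{h}}_i\|_2\|\bm{h}_i\|_2$ factor to $|\bar{\bm{h}}_i^{\mathsf{H}}\bm{h}_i|$. The main obstacle is exactly this last step: plain matrix Bernstein scales with $\|\bar{\bm{h}}_i\|_2\|\bm{h}_i\|_2$ rather than with the inner product $|\bar{\bm{h}}_i^{\mathsf{H}}\bm{h}_i|$ that appears on the right-hand side, so a tighter variance computation, using the identity $\sum_j |\bm{b}_j^{\mathsf{H}}\bar{\bm{h}}_i|\cdot|\bm{b}_j^{\mathsf{H}}\bm{h}_i|\leq (\max_j|\bm{b}_j^{\mathsf{H}}\bar{\bm{h}}_i|)\,\|\bm{h}_i\|_2$ refined by a Cauchy--Schwarz along the direction of $\bm{h}_i$, is required together with the working assumption (maintained throughout the inductive argument) that $\bm{h}_i$ is near $\bar{\bm{h}}_i$ so that $\|\bar{\bm{h}}_i\|_2\|\bm{h}_i\|_2\asymp |\bar{\bm{h}}_i^{\mathsf{H}}\bm{h}_i|$. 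A secondary technical point is tracking the $\log$ powers so that only $\log^{5}m$ appears, which forces the Bernstein variance term (rather than the truncation term) to dominate; this dictates the sample size $m\gg s\mu^2\sqrt{N\log^5 m}$ stated in the lemma.
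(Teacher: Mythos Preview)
Your plan matches the paper's intended approach: the paper does not give an explicit proof but simply states that Lemmas~\ref{lemma:concentration-U1_full} and~\ref{lemma:concentration-U3_full} ``exploit the same strategy as \cite[Section~K]{chen2018},'' which is precisely the truncated matrix Bernstein argument you describe---split into diagonal ($k=i$) and off-diagonal ($k\neq i$) parts, bound the per-summand norm via the incoherence hypotheses, control the variance proxy through $\sum_j|\bm b_j^{\mathsf H}\bar{\bm h}_i|^2\le\|\bar{\bm h}_i\|_2^2$, and union-bound over $k$.

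The obstacle you flag at the end is real and worth stating plainly: a direct Bernstein computation yields a right-hand side proportional to $\|\bar{\bm h}_i\|_2\|\bm h_i\|_2$, not to $|\bar{\bm h}_i^{\mathsf H}\bm h_i|$, and Cauchy--Schwarz goes the wrong way to convert one into the other. The lemma as written therefore cannot hold uniformly over all $\bm h_i$ satisfying only the stated incoherence condition (take $\bm h_i\perp\bar{\bm h}_i$). Your resolution---that in every place the paper invokes this lemma the iterate $\bm h_i$ is already known to be nearly aligned with $\bar{\bm h}_i$ so that $\|\bar{\bm h}_i\|_2\|\bm h_i\|_2\asymp|\bar{\bm h}_i^{\mathsf H}\bm h_i|$---is the correct reading; the paper applies the lemma only inside the inductive argument where this alignment is guaranteed by the hypotheses~(\ref{subeq:induction}). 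So your plan is sound for the lemma \emph{as used}, and your diagnosis of the discrepancy in the lemma \emph{as stated} is accurate.
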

\begin{remark}
The proof of Lemma \ref{lemma:concentration-U1_full} and \ref{lemma:concentration-U3_full} exploits the same strategy as \cite[Section K]{chen2018} does.
\end{remark}
\begin{lemma}\label{lemma:ai-uniform-concentration}Suppose that $\bm{a}_{ij}$ and $\bm{b}_j$ follows the definition in Section \ref{form}. $1\leq i\leq s, 1\leq j\leq m$. Consider any
        $\epsilon>3/n$ where $n = \max\{K,N\}$. Let 
$
\mathcal{S}:=\left\{ \bm{z}\in\mathbb{C}^{N-1}\Big| \max_{1\leq j\leq m}\left|\bm{a}_{ij,\perp}^{\mathsf{H}}\bm{z}\right|\leq\beta\left\Vert \bm{z}\right\Vert _{2}\right\} ,
$
        where $\beta$ is any value obeying $\beta\geq c_{1}\sqrt{\log m}$
        for some sufficiently large constant $c_{1}>0$. Then with probability
        exceeding $1-\co\left(m^{-10}\right)$, one has 
        \begin{enumerate}
                \item $\left|\sum_{j=1}^{m}\left|a_{ij,1}\right|^{2}\absn{\bm{a}_{kj\perp}^{\mathsf{H}}\bm{z}}^2\bm{b}_{j}\bm{b}_{j}^{\mathsf{H}} - \norm{\bm{z}}{2} \bm{I}_K\right|\leq\epsilon\left\Vert \bm{z}\right\Vert _{2}$
                for all $\bm{z}\in\mathcal{S}$, provided that $m\geq c_{0}\max\left\{ \frac{1}{\epsilon^{2}}n\log n,\text{ }\frac{1}{\epsilon}\beta^2 n\log^{2}m\right\} $.
                \item $\left|\sum_{j=1}^{m}\left|a_{ij,1}\right|\absn{\bm{a}_{kj\perp}^{\mathsf{H}}\bm{z}}\bm{b}_{j}\bm{b}_{j}^{\mathsf{H}}\right|\leq\epsilon\left\Vert \bm{z}\right\Vert _{2}$
                for all $\bm{z}\in\mathcal{S}$, provided that $m\geq c_{0}\max\left\{ \frac{1}{\epsilon^{2}}n\log n,\text{ }\frac{1}{\epsilon}\beta n\log^{\frac{1}{2}}m\right\} $. 
        \end{enumerate}
        Here, $c_{0}>0$ is some sufficiently large constant. \end{lemma}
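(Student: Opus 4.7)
The strategy is the classical matrix Bernstein plus $\epsilon$-net argument, where the defining inequality of $\mathcal{S}$ provides the truncation that keeps the per-summand operator norm and the Bernstein variance proxy small. By homogeneity it suffices to treat $\|\bm{z}\|_{2}=1$. For (1) I would write $X_{j}(\bm{z}):=|a_{ij,1}|^{2}|\bm{a}_{kj,\perp}^{\mathsf{H}}\bm{z}|^{2}\bm{b}_{j}\bm{b}_{j}^{\mathsf{H}}$ and observe that the independence of $a_{ij,1}$ from $\bm{a}_{kj,\perp}$, together with $\mathbb{E}|a_{ij,1}|^{2}=1$, $\mathbb{E}|\bm{a}_{kj,\perp}^{\mathsf{H}}\bm{z}|^{2}=1$, and the partial-DFT identity $\sum_{j}\bm{b}_{j}\bm{b}_{j}^{\mathsf{H}}=\bm{I}_{K}$, gives $\mathbb{E}\bigl[\sum_{j}X_{j}(\bm{z})\bigr]=\bm{I}_{K}$; for (2) the sum $\sum_{j}|a_{ij,1}||\bm{a}_{kj,\perp}^{\mathsf{H}}\bm{z}|\bm{b}_{j}\bm{b}_{j}^{\mathsf{H}}$ has mean a scalar multiple of $\bm{I}_{K}$, which I would center and absorb into $\epsilon\|\bm{z}\|_{2}$ under the tighter sample-size hypothesis.

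For a fixed $\bm{z}\in\mathcal{S}$, the inequality $|\bm{a}_{kj,\perp}^{\mathsf{H}}\bm{z}|\leq\beta$, the Gaussian tail bound $\max_{j}|a_{ij,1}|\lesssim\sqrt{\log m}$ from \eqref{eq:max-a-i-1}, and $\|\bm{b}_{j}\bm{b}_{j}^{\mathsf{H}}\|=K/m$ give $\|X_{j}(\bm{z})\|\lesssim\beta^{2}(K\log m)/m$ and a matching variance proxy. Matrix Bernstein then produces a pointwise tail of order $\sqrt{\beta^{2}K(\log m)(\log K)/m}+\beta^{2}K(\log m)(\log K)/m$, which is at most $\epsilon/2$ under the stated sample size; claim (2) follows by the same recipe with linear rather than quadratic dependence on $\bm{a}_{kj,\perp}^{\mathsf{H}}\bm{z}$, accounting for the weaker requirement involving $\beta$ rather than $\beta^{2}$. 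To promote pointwise control to a uniform one, I would cover the unit sphere of $\mathbb{C}^{N-1}$ by a deterministic $\epsilon_{0}$-net $\mathcal{N}$ of cardinality $(C/\epsilon_{0})^{2N}$, union-bound the Bernstein estimate over $\mathcal{N}$, and for each $\bm{z}\in\mathcal{S}$ approximate by its nearest $\bm{z}_{0}\in\mathcal{N}$. The polarization identity
\begin{equation*}
|\bm{a}_{kj,\perp}^{\mathsf{H}}\bm{z}|^{2}-|\bm{a}_{kj,\perp}^{\mathsf{H}}\bm{z}_{0}|^{2}=\bigl(\bm{a}_{kj,\perp}^{\mathsf{H}}(\bm{z}-\bm{z}_{0})\bigr)\,\overline{\bm{a}_{kj,\perp}^{\mathsf{H}}(\bm{z}+\bm{z}_{0})}
\end{equation*}
bounds the discretization error per summand by $2\beta\|\bm{a}_{kj,\perp}\|_{2}\,\epsilon_{0}$; combined with $\|\bm{a}_{kj,\perp}\|_{2}\lesssim\sqrt{N}$ from \eqref{eq:max-a-i-norm}, the choice $\epsilon_{0}\asymp\epsilon/(\beta\sqrt{N}\log m)$ makes the off-net error $o(\epsilon)$.

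The main obstacle is that $\mathcal{S}$ is a \emph{random} subset of the sphere depending on $\{\bm{a}_{kj,\perp}\}$, so the nearest net point $\bm{z}_{0}$ of a point $\bm{z}\in\mathcal{S}$ need not itself lie in $\mathcal{S}$; the Bernstein variance bound invoked on $\bm{z}_{0}$ must still be valid. I plan to resolve this by working instead with the slightly enlarged random set $\mathcal{S}':=\{\bm{z}:\max_{j}|\bm{a}_{kj,\perp}^{\mathsf{H}}\bm{z}|\leq 2\beta\|\bm{z}\|_{2}\}$, into which $\bm{z}_{0}$ falls by the triangle inequality whenever $\bm{z}\in\mathcal{S}$ and $\|\bm{z}-\bm{z}_{0}\|_{2}\,\|\bm{a}_{kj,\perp}\|_{2}\leq\beta$, which is ensured by the choice of $\epsilon_{0}$ above. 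The pointwise Bernstein estimate and the polarization-based discretization bound both go through uniformly on $\mathcal{S}'$ at the cost only of a harmless constant inflation of $\beta$, so the union bound over $\mathcal{N}$ closes and yields the stated $\epsilon$-bounds with failure probability $\mathcal{O}(m^{-10})$ under the advertised sample-size thresholds.
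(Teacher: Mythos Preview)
Your proposal is correct and follows the standard matrix-Bernstein-plus-$\epsilon$-net route; the paper does not give an independent argument but simply cites Lemma~12 of \cite{chen2018}, whose proof is precisely of this type (pointwise Bernstein under the truncation defining $\mathcal{S}$, then a covering argument). Your handling of the randomness of $\mathcal{S}$ via the enlarged set $\mathcal{S}'$---so that the nearest net point inherits a comparable incoherence bound---is exactly the device used there.
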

\begin{proof}
Please refer to Lemma 12 in \cite{chen2018}.
\end{proof}
\bibliography{Reference} 
\bibliographystyle{ieeetr}
\end{document}